\DeclareTextFontCommand{\emph}{\slshape}
\pgfplotsset{compat=1.10}
\setlist[enumerate,1]{label=(\arabic*)}
\setlist[itemize,1]{label=--}
\setlist[itemize,2]{label=--}
\setlist[itemize,3]{label=--}
\setlist[itemize,4]{label=--}
\theoremstyle{definition}
\newtheorem{proposition}{Proposition}%[section]
\newtheorem{lemma}{Lemma}%[section]
\newtheorem{corollary}{Corollary}%[section]
\newtheorem{remark}{Remark}%[section]
\newtheorem{example}{Example}%[section]
\newtheorem{definition}{Definition}%[section]
\newtheoremstyle{named}
	{\topsep}					% ABOVESPACE
	{\topsep}					% BELOWSPACE
	{}							% BODYFONT
	{0pt}						% INDENT (empty value is the same as 0pt)
	{\bfseries}					% HEADFONT
	{}							% HEADPUNCT
	{5pt plus 1pt minus 1pt}	% HEADSPACE
	{\thmnote{#3}}				% CUSTOM-HEAD-SPEC
\theoremstyle{named}
\newtheorem{namedthm}{}
\xpatchcmd{\proof}{\itshape}{\proofheadfont}{}{}
\newcommand{\proofheadfont}{\slshape}
\crefname{page}{p.}{pp.}
\crefname{equation}{equation}{equations}
\crefname{section}{section}{sections}
\crefname{subsection}{section}{sections}
\crefname{subsubsection}{section}{sections}
\crefname{appsec}{appendix}{appendices}
\crefname{supplsec}{supplemental appendix}{supplemental appendices}
\crefname{footnote}{footnote}{footnotes}
\crefname{figure}{figure}{figures}
\crefname{table}{table}{tables}
\crefname{theorem}{theorem}{theorems}
\crefname{proposition}{proposition}{propositions}
\crefname{lemma}{lemma}{lemmata}
\crefname{corollary}{corollary}{corollaries}
\crefname{remark}{remark}{remarks}
\crefname{observation}{observation}{observations}
\crefname{example}{example}{examples}
\crefname{fact}{fact}{facts}
\crefname{definition}{definition}{definitions}
\crefname{assumption}{assumption}{assumptions}
\crefname{exercise}{exercise}{exercises}
\crefname{notation}{notation}{notation}
\crefname{claim}{claim}{claims}
\crefname{conjecture}{conjecture}{conjectures}
\newcommand{\eps}{\varepsilon}
\newcommand{\dd}{\mathrm{d}}
\newcommand{\DD}{\mathrm{D}}
\DeclareMathOperator*{\interior}{int}
\newcommand{\E}{\mathbf{E}}
\newcommand{\oo}{\mathrm{o}}
\newcommand{\R}{\mathbf{R}}
\newcommand{\Q}{\mathbf{Q}}
\newcommand{\N}{\mathbf{N}}
\newcommand{\1}{\boldsymbol{1}}
\newcommand{\union}{\cup}
\newcommand{\intersect}{\cap}
\newcommand{\Union}{\bigcup}
\newcommand{\Intersect}{\bigcap}
\DeclarePairedDelimiter\abs{\lvert}{\rvert}
\DeclarePairedDelimiter\norm{\lVert}{\rVert}
\newcommand*{\xslant}[2][76]{%
	\begingroup
	\sbox0{#2}%
	\pgfmathsetlengthmacro\wdslant{\the\wd0 + cos(#1)*\the\wd0}%
	\leavevmode
	\hbox to \wdslant{\hss
		\tikz[
			baseline=(X.base),
			inner sep=0pt,
			transform canvas={xslant=cos(#1)},
		] \node (X) {\usebox0};%
		\hss
		\vrule width 0pt height\ht0 depth\dp0 %
	}%
	\endgroup
}
\newcommand*{\xslantmath}{}
\def\xslantmath#1#{%
	\@xslantmath{#1}%
}
\newcommand*{\@xslantmath}[2]{%
	% #1: optional argument for \xslant including brackets
	% #2: math symbol
	\ensuremath{%
		\mathpalette{\@@xslantmath{#1}}{#2}%
	}%
}
\newcommand*{\@@xslantmath}[3]{%
	% #1: optional argument for \xslant including brackets
	% #2: math style
	% #3: math symbol
	\xslant#1{$#2#3\m@th$}%
}
\let\save@mathaccent\mathaccent
\newcommand*\if@single[3]{%
	\setbox0\hbox{${\mathaccent"0362{#1}}^H$}%
	\setbox2\hbox{${\mathaccent"0362{\kern0pt#1}}^H$}%
	\ifdim\ht0=\ht2 #3\else #2\fi
	}
\newcommand*\rel@kern[1]{\kern#1\dimexpr\macc@kerna}
\newcommand*\widebar[1]{\@ifnextchar^{{\wide@bar{#1}{0}}}{\wide@bar{#1}{1}}}
\newcommand*\wide@bar[2]{\if@single{#1}{\wide@bar@{#1}{#2}{1}}{\wide@bar@{#1}{#2}{2}}}
\newcommand*\wide@bar@[3]{%
	\begingroup
	\def\mathaccent##1##2{%
%Enable nesting of accents:
	  \let\mathaccent\save@mathaccent
%If there's more than a single symbol, use the first character instead (see below):
	  \if#32 \let\macc@nucleus\first@char \fi
%Determine the italic correction:
	  \setbox\z@\hbox{$\macc@style{\macc@nucleus}_{}$}%
	  \setbox\tw@\hbox{$\macc@style{\macc@nucleus}{}_{}$}%
	  \dimen@\wd\tw@
	  \advance\dimen@-\wd\z@
%Now \dimen@ is the italic correction of the symbol.
	  \divide\dimen@ 3
	  \@tempdima\wd\tw@
	  \advance\@tempdima-\scriptspace
%Now \@tempdima is the width of the symbol.
	  \divide\@tempdima 10
	  \advance\dimen@-\@tempdima
%Now \dimen@ = (italic correction / 3) - (Breite / 10)
	  \ifdim\dimen@>\z@ \dimen@0pt\fi
%The bar will be shortened in the case \dimen@<0 !
	  \rel@kern{0.6}\kern-\dimen@
	  \if#31
	    \overline{\rel@kern{-0.6}\kern\dimen@\macc@nucleus\rel@kern{0.4}\kern\dimen@}%
	    \advance\dimen@0.4\dimexpr\macc@kerna
%Place the combined final kern (-\dimen@) if it is >0 or if a superscript follows:
	    \let\final@kern#2%
	    \ifdim\dimen@<\z@ \let\final@kern1\fi
	    \if\final@kern1 \kern-\dimen@\fi
	  \else
	    \overline{\rel@kern{-0.6}\kern\dimen@#1}%
	  \fi
	}%
	\macc@depth\@ne
	\let\math@bgroup\@empty \let\math@egroup\macc@set@skewchar
	\mathsurround\z@ \frozen@everymath{\mathgroup\macc@group\relax}%
	\macc@set@skewchar\relax
	\let\mathaccentV\macc@nested@a
%The following initialises \macc@kerna and calls \mathaccent:
	\if#31
	  \macc@nested@a\relax111{#1}%
	\else
%If the argument consists of more than one symbol, and if the first token is
%a letter, use that letter for the computations:
	  \def\gobble@till@marker##1\endmarker{}%
	  \futurelet\first@char\gobble@till@marker#1\endmarker
	  \ifcat\noexpand\first@char A\else
	    \def\first@char{}%
	  \fi
	  \macc@nested@a\relax111{\first@char}%
	\fi
	\endgroup
}
\title{\scshape The converse envelope theorem%
\thanks{I am grateful to Eddie Dekel,
Alessandro Pavan and
Bruno Strulovici
for their guidance and support.
This work has profited from the close reading and insightful comments of Gregorio Curello, Eddie Dekel, Roberto Saitto, Quitzé Valenzuela-Stookey, Alessandro Lizzeri
% (the editor)
and four anonymous referees,
and from comments by
Piotr Dworczak, Matteo Escudé, Daniel Gottlieb, Elliot Lipnowski, Benny Moldovanu, Ilya Segal
and audiences at Caltech, Northwestern, Oxford, the Bonn Winter Theory Workshop, the Kansas Workshop in Economic Theory and the Southeast Theory Festival.}}
\author{Ludvig Sinander \\ University of Oxford}
\date{20 June 2022}
\begin{document}

\maketitle

%%%%%%%%%%%%%%%%%%%%%%
%%%%%%%%%%%%%%%%%%%%%%
\begin{abstract}
	I prove an envelope theorem with a converse: the envelope formula is \emph{equivalent} to a first-order condition.
	Like Milgrom and Segal's (\citeyear{MilgromSegal2002}) envelope theorem, my result requires no structure on the choice set.
	I use the converse envelope theorem to extend to general outcomes and preferences the canonical result in mechanism design that any increasing allocation is implementable, and apply this to selling information.
\end{abstract}
%%%%%%%%%%%%%%%%%%%%%%
%%%%%%%%%%%%%%%%%%%%%%

%%%%%%%%%%%%%%%%%%%%%%
%%%%%%%%%%%%%%%%%%%%%%
\section{Introduction}
\label{sec:intro}
%%%%%%%%%%%%%%%%%%%%%%
%%%%%%%%%%%%%%%%%%%%%%

Envelope theorems are a key tool of economic theory, with important roles in consumer theory, mechanism design and dynamic optimisation.
In blueprint form, an envelope theorem gives conditions under which optimal decision-making implies that the \emph{envelope formula} holds.

In textbook accounts,%
	\footnote{E.g. \textcite[§M.L]{MascolellWhinstonGreen1995}.}
the envelope theorem is typically presented as a consequence of the first-order condition.
The modern envelope theorem of \textcite{MilgromSegal2002}, however, applies in an abstract setting in which the first-order condition is typically not even well-defined.
These authors therefore rejected the traditional intuition and developed a new one.

In this paper, I re-establish the intuitive link between the envelope formula and the first-order condition.
I introduce an appropriate generalised first-order condition that is well-defined in the abstract environment of \textcite{MilgromSegal2002},
then prove an envelope theorem with a converse: my generalised first-order condition is \emph{equivalent} to the envelope formula.
This validates the habitual interpretation of the envelope formula as `local optimality', and clarifies our understanding of the envelope theorem.

The converse envelope theorem proves useful for mechanism design.
I use it to establish that the implementability of all increasing allocations, a canonical result when outcomes are drawn from an interval of $\R$, remains valid when outcomes are abstract.
I apply this result to the problem of selling information (distributions of posteriors).

The setting is simple: an agent chooses an action $x$ from a set $\mathcal{X}$ to maximise $f(x,t)$, where $t \in [0,1]$ is a parameter.
The set $\mathcal{X}$ need not have any structure.
A \emph{decision rule} is a map $X : [0,1] \to \mathcal{X}$ that assigns an action $X(t)$ to each parameter $t$.
A decision rule $X$ is associated with a \emph{value function} $V_X(t) \coloneqq f(X(t),t)$, and is called \emph{optimal} iff $V_X(t) = \max_{ x \in \mathcal{X} } f(x,t)$ for every parameter $t$.

The modern envelope theorem of \textcite{MilgromSegal2002} states that, under a regularity assumption on $f$, any optimal decision rule $X$ induces an absolutely continuous value function $V_X$ which satisfies the \emph{envelope formula}
\begin{equation*}
	V_X'(t) = f_2( X(t), t )
	\quad \text{for a.e. $t \in (0,1)$.}
\end{equation*}
The familiar intuition is as follows. The derivative of the value $V_X$ is
\begin{equation*}
	V_X'(t) 
	= \left. \frac{ \dd }{ \dd m } f( X(t+m), t ) \right|_{m=0} 
	+ f_2( X(t), t ) ,
\end{equation*}
where the first term is the indirect effect via the induced change of the action, and the second term is the direct effect.
Since $X$ is optimal, it satisfies the first-order condition $\frac{ \dd }{ \dd m } f( X(t+m), t ) \big|_{m=0} = 0$, which yields the envelope formula.
Indeed, a decision rule $X$ satisfies the envelope formula \emph{if and only if} it satisfies the first-order condition for a.e. $t \in (0,1)$.

The trouble with this intuition is that since the action set $\mathcal{X}$ is abstract (with no linear or topological structure), the derivative $\frac{ \dd }{ \dd m } f( X(t+m), t ) \bigr|_{m=0}$ is ill-defined in general.

To restore the equivalence of the envelope formula and first-order condition, I first introduce a generalised first-order condition that is well-defined in the abstract environment. The \emph{outer first-order condition} is the following `integrated' variant of the classical first-order condition:
\begin{equation*}
	\left. \frac{\dd}{\dd m} \int_r^t f( X(s+m), s ) \dd s \right|_{m=0}
	= 0
	\quad \text{for all $r,t \in (0,1)$} .
\end{equation*}
I then prove an envelope theorem with a converse: under a regularity assumption on $f$, a decision rule $X$ satisfies the envelope formula \emph{if and only if} it satisfies the outer first-order condition and induces an absolutely continuous value function $V_X$.
The `only if' part is a novel \emph{converse} envelope theorem.

In §\ref{sec:app}, I apply the converse envelope theorem to mechanism design.
There is an agent with preferences over outcomes $y \in \mathcal{Y}$ and payments $p \in \R$.
Her preferences are indexed in `single-crossing' fashion by $t \in [0,1]$, and this taste parameter is privately known to her.
A canonical result is that if $\mathcal{Y}$ is an interval of $\R$, then all (and only) increasing allocations $Y : [0,1] \to \mathcal{Y}$ can be implemented incentive-compatibly by some payment schedule $P : [0,1] \to \R$.

I use the converse envelope theorem to extend this result
to a large class of ordered outcome spaces $\mathcal{Y}$,
maintaining general (non-quasi-linear) preferences.
The argument runs as follows: fix an increasing allocation $Y : [0,1] \to \mathcal{Y}$.
To implement it, choose a payment schedule $P : [0,1] \to \R$ to make the envelope formula hold.
Then by the converse envelope theorem, the outer first-order condition is satisfied, which means intuitively that $(Y,P)$ is \emph{locally} incentive-compatible.
The single-crossing property of preferences ensures that this translates into global incentive-compatibility.

I apply this implementability theorem to study the sale of information.
The result implies that any Blackwell-increasing information allocation is implementable.
I argue further that if consumers can share their information with each other, then \emph{only} Blackwell-increasing allocations are implementable.

%%%%%%%%%%%%%%%%%%%%%%
\subsection{Related literature}
\label{sec:intro:lit}
%%%%%%%%%%%%%%%%%%%%%%

Envelope theorems entered economics via the theories of the consumer and of the firm \parencite{Hotelling1932,Roy1947,Shephard1953}, were systematised by \textcite{Samuelson1947} under `classical' assumptions, and were developed in greater generality by e.g. \textcite{Danskin1966,Danskin1967}, \textcite{Silberberg1974} and \textcite{BenvenisteScheinkman1979}.
\textcite{MilgromSegal2002} pointed out that classical-type assumptions were extraneous, and proved an envelope theorem without them.
Subsequent refinements were obtained by e.g. \textcite{MorandReffettTarafdar2015} and \textcite{ClausenStrub2020}.%
	\footnote{See also \textcite{OyamaTakenawa2018}.}
`Converse' envelope theorems are almost absent from this literature, but appear in textbook presentations \parencite[e.g.][§M.L]{MascolellWhinstonGreen1995}.

The outer first-order condition appears to be novel. It bears no clear relationship to any of the standard derivatives for non-smooth functions.
	% \footnote{The closest thing would be the Borel derivative (see \textcite[p. 35]{BrucknerLeonard1966}), but it's only a superficial resemblance.}

%%%%%%%%%%%%%%%%%%%%%%
%%%%%%%%%%%%%%%%%%%%%%
\section{Setting and background}
\label{sec:background}
%%%%%%%%%%%%%%%%%%%%%%
%%%%%%%%%%%%%%%%%%%%%%

In this section, I introduce the environment, the Milgrom--Segal (\citeyear{MilgromSegal2002}) envelope theorem, and the classical envelope theorem and converse.

\begin{namedthm}[Notation.]
	\label{notation}
	We will be working with the unit interval $[0,1]$, equipped with the Lebesgue $\sigma$-algebra and the Lebesgue measure. The Lebesgue integral will be used throughout.
	For $r < t$ in $[0,1]$, we will write $\int_r^t$ for the integral over $[r,t]$, and $\int_t^r$ for $-\int_r^t$.
	$\mathcal{L}^1$ will denote the space of integrable functions $[0,1] \to \R$, i.e. those that are measurable and have finite integral.
	We will write $f_i$ for the derivative of a function $f$ with respect to its $i$th argument.
	Some important definitions and theorems are collected in \cref{app:theory:bckg},
	including \hyperref[theorem:LFTC]{Lebesgue's fundamental theorem of calculus}
	and the \hyperref[theorem:Vitali]{Vitali convergence theorem}.
\end{namedthm}

%%%%%%%%%%%%%%%%%%%%%%
\subsection{Setting}
\label{sec:background:setting}
%%%%%%%%%%%%%%%%%%%%%%

An agent chooses an action $x$ from an arbitrary set $\mathcal{X}$.
Her objective is $f(x,t)$, where $t \in [0,1]$ is a parameter (or `type').%
	\footnote{If instead the parameter lives in a normed vector space, then the analysis applies unchanged to path derivatives (as \textcite[footnote 7]{MilgromSegal2002} point out).}

\begin{definition}
	\label{definition:unif_abs_cont}
	A family $\{ \phi_x \}_{ x \in \mathcal{X} }$ of functions $[0,1] \to \R$ is \emph{absolutely equi-continuous} iff the family of functions
	\begin{equation*}
		\left\{ t \mapsto 
		\sup_{ x \in \mathcal{X} } \abs*{ \frac{ \phi_x(t+m) - \phi_x(t) }{m} }
		\right\}_{ m > 0 }
	\end{equation*}
	is uniformly integrable.%
		\footnote{The name `absolute equi-continuity' is inspired by the \hyperref[lemma:FitzpatrickHunt]{AC--UI lemma} in \cref{app:theory:bckg}, which states that absolute continuity of a continuous $\phi$ is equivalent to uniform integrability of the `divided-difference' family $\{ t \mapsto [ \phi(t+m) - \phi(t) ] / m \}_{ m > 0 }$.
		As the term suggests,
		an absolutely equi-continuous family
		is equi-continuous, and its members are absolutely continuous functions;
		this is proved in \cref{app:theory:AEC_HK}.}
\end{definition}

Our only assumptions will be that the objective varies smoothly, and (uniformly) not too erratically, with the parameter.

\begin{namedthm}[Basic assumptions.]
	\label{assumption:basic}
	$f(x,\cdot)$ is differentiable for every $x \in \mathcal{X}$, and the family $\{ f(x,\cdot) \}_{ x \in \mathcal{X} }$ is absolutely equi-continuous.
\end{namedthm}

\begin{remark}
	\label{remark:basic_suff}
	An easy-to-check sufficient condition for absolute equi-continuity is as follows: $f(x,\cdot)$ is absolutely continuous for each $x \in \mathcal{X}$, and there is an $\ell \in \mathcal{L}^1$ such that $\abs*{ f_2(x,t) } \leq \ell(t)$ for all $x \in \mathcal{X}$ and $t \in (0,1)$.
	(This is the assumption that \textcite{MilgromSegal2002} use in their envelope theorem.)
	An even stronger sufficient condition is that $f_2$ be bounded.
\end{remark}

\setcounter{example}{0}
\begin{example}
	\label{example:xt_assns}
	Let $\mathcal{X} = [0,1]$ and $f(x,t) = xt$.
	The \hyperref[assumption:basic]{basic assumptions} are satisfied since $f_2(x,t) = x$ exists and is bounded.
	\hfill \rotatebox[origin=c]{45}{$\square$}
\end{example}

A \emph{decision rule} is a map $X : [0,1] \to \mathcal{X}$ that prescribes an action for each type. The payoff of type $t$ from following decision rule $X$ is denoted $V_X(t) \coloneqq f( X(t), t )$.

\begin{definition}
	\label{definition:env}
	A decision rule $X$ satisfies the \emph{envelope formula} iff
	\begin{equation*}
		V_X(t) = V_X(0) + \int_0^t f_2( X(s), s ) \dd s
		\quad \text{for every $t \in [0,1]$} .
	\end{equation*}
\end{definition}

Equivalently (by \hyperref[theorem:LFTC]{Lebesgue's fundamental theorem of calculus}), $X$ satisfies the envelope formula iff $V_X$ is absolutely continuous and
\begin{equation*}
	V_X'(t) = f_2( X(t), t ) 
	\quad \text{for a.e. $t \in (0,1)$} .
\end{equation*}

A decision rule $X$ is called \emph{optimal} iff at every parameter $t \in [0,1]$, $X(t)$ maximises $f(\cdot,t)$ on $\mathcal{X}$.
The modern envelope theorem is as follows:
\begin{namedthm}[Milgrom--Segal envelope theorem.]
	\label{theorem:MS}
	Under the \hyperref[assumption:basic]{basic assumptions}, if $X$ is optimal, then it satisfies the envelope formula.
\end{namedthm}

This follows from the \hyperref[theorem:env]{main theorem} (§\ref{sec:result:mainthm} below), so no proof is necessary.
It is actually a slight refinement of Theorem 2 in \textcite{MilgromSegal2002}, as these authors impose the sufficient condition in \Cref{remark:basic_suff} rather than absolute equi-continuity.

\setcounter{example}{0}
\begin{example}[continued]
	\label{example:xt_envelope}
	The envelope formula requires that $X(t) t = \int_0^t X$ for every $t \in [0,1]$, or equivalently $X(t) = t^{-1} \int_0^t X$ for all $t \in (0,1]$.
	Thus the decision rules that satisfy the envelope formula are precisely those that are constant on $(0,1]$.
	This includes all optimal decision rules (which set $X=1$ on $(0,1]$), as well as anti-optimal ones (which choose $0$ on $(0,1]$).
	\hfill \rotatebox[origin=c]{45}{$\square$}
\end{example}

%%%%%%%%%%%%%%%%%%%%%%
\subsection{Classical envelope theorem and converse}
\label{sec:background:classical}
%%%%%%%%%%%%%%%%%%%%%%

The textbook version of the envelope theorem, which has a natural and intuitive converse, holds under additional topological and convexity assumptions.

\begin{namedthm}[Classical assumptions.]
	\label{assumption:classical}
	The action set $\mathcal{X}$ is a convex subset of $\R^n$, the action derivative $f_1$ exists and is bounded, and only Lipschitz continuous decision rules $X$ are considered.
\end{namedthm}

The \hyperref[assumption:classical]{classical assumptions} are strong. Most glaringly, the Lipschitz condition rules out important decision rules in many applications. In the canonical auction setting, for instance, the revenue-maximising mechanism is discontinuous \parencite{Myerson1981}.%
	\footnote{Even when the \hyperref[assumption:classical]{classical assumptions} are relaxed as much as possible, unless $f$ is trivial, $X$ still has to satisfy a strong continuity requirement. See \cref{app:theory:classical_env_pf}.}

\setcounter{example}{0}
\begin{example}[continued]
	\label{example:xt_classical_assns}
	$\mathcal{X} = [0,1]$ is a convex subset of $\R$, and $f_1(x,t) = t$ exists and is bounded.
	If we restrict attention to Lipschitz continuous decision rules $X : [0,1] \to [0,1]$, then the \hyperref[assumption:classical]{classical assumptions} are satisfied.
	\hfill \rotatebox[origin=c]{45}{$\square$}
\end{example}

Given a Lipschitz continuous decision rule $X$, suppose that type $t$ considers taking the action $X(t+m)$ intended for another type. The map $m \mapsto f(X(t+m),t)$ is differentiable a.e. under the \hyperref[assumption:classical]{classical assumptions},%
	\footnote{Since $f(\cdot,t)$ is differentiable, and $X$ is differentiable a.e. since it is Lipschitz continuous.}
so we may define a first-order condition:
\begin{definition}
	\label{definition:FOC_ae}
	A decision rule $X$ satisfies the \emph{first-order condition a.e.} iff
	\begin{equation*}
		\left. \frac{\dd}{\dd m} f( X(t+m), t ) \right|_{m=0}
		= 0
		\quad \text{for a.e. $t \in (0,1)$} .
	\end{equation*}
\end{definition}

The first-order condition a.e. requires that almost no type $t$ can secure a first-order payoff increase (or decrease) by choosing an action $X(t+m)$ intended for a nearby type $t+m$.
It does \emph{not} say that there are no nearby \emph{actions} that do better (or worse).

\begin{namedthm}[Classical envelope theorem and converse.]
	\label{theorem:classical_env}
	Under the \hyperref[assumption:basic]{basic} and \hyperref[assumption:classical]{classical} assumptions, a Lipschitz continuous decision rule satisfies the first-order condition a.e. iff it satisfies the envelope formula.
\end{namedthm}

The proof, given in \cref{app:theory:classical_env_pf}, shows that the envelope formula demands precisely that $V_X'(t) = f_2( X(t), t )$ for a.e. $t \in (0,1)$, 
which is equivalent to the first-order condition a.e. by inspection of the differentiation identity
\begin{equation*}
	V_X'(t) = \left. \frac{\dd}{\dd m} f( X(t+m), t ) \right|_{m=0} + f_2( X(t), t ) .
\end{equation*}

\setcounter{example}{0}
\begin{example}[continued]
	\label{example:xt_classical}
	A Lipschitz continuous decision rule $X$ is differentiable a.e., so satisfies the first-order condition a.e. iff
	\begin{equation*}
		\left. \frac{\dd}{\dd m} X(t+m) t \right|_{m=0} 
		= X'(t) t = 0
		\quad \text{for a.e. $t \in (0,1)$.}
	\end{equation*}
	This requires that $X'=0$ a.e.
	We saw that the envelope formula demands that $X$ be constant on $(0,1]$.
	For Lipschitz continuous decision rules $X$, both conditions are equivalent to constancy on all of $[0,1]$.
	\hfill \rotatebox[origin=c]{45}{$\square$}
\end{example}

%%%%%%%%%%%%%%%%%%%%%%
%%%%%%%%%%%%%%%%%%%%%%
\section{Main theorem}
\label{sec:result}
%%%%%%%%%%%%%%%%%%%%%%
%%%%%%%%%%%%%%%%%%%%%%

In this section, I define the outer first-order condition and state my envelope theorem and converse.

%%%%%%%%%%%%%%%%%%%%%%
\subsection{The outer first-order condition}
\label{sec:result:oFOC}
%%%%%%%%%%%%%%%%%%%%%%

Without the \hyperref[assumption:classical]{classical assumptions} (§\ref{sec:background:classical}), the `imitation derivative'
\begin{equation*}
	\left. \frac{\dd}{\dd m} f( X(t+m), t ) \right|_{m=0}
\end{equation*}
need not exist, in which case the first-order condition is ill-defined. To circumvent this problem, we require a novel first-order condition.

\begin{definition}
	\label{definition:oFOC}
	A decision rule $X$ satisfies the \emph{outer first-order condition} iff
	\begin{equation*}
		\left. \frac{\dd}{\dd m} \int_r^t f( X(s+m), s ) \dd s \right|_{m=0}
		= 0
		\quad \text{for all $r,t \in (0,1)$} .
	\end{equation*}
\end{definition}

As an intuitive motivation, suppose that types $s \in [r,t]$ deviate by choosing $X(s+m)$ rather than $X(s)$. The aggregate payoff to such a deviation is $\int_r^t f( X(s+m), s ) \dd s$, and the outer first-order condition says (loosely) that local deviations of this kind are collectively unprofitable.

\setcounter{example}{0}
\begin{example}[continued]
	\label{example:xt_oFOC}
	For any decision rule $X$ that is a.e. constant at some $k \in [0,1]$, the outer first-order condition holds:
	\begin{equation*}
		\left. \frac{\dd}{\dd m} \int_r^t X(s+m) s \dd s \right|_{m=0}
		= \left. \frac{\dd}{\dd m} k \int_r^t s \dd s \right|_{m=0}
		= 0
		\quad \text{for all $r,t \in (0,1)$} .
	\end{equation*}
	Conversely, any decision rule that is not constant a.e. violates the outer first-order condition.
	\hfill \rotatebox[origin=c]{45}{$\square$}
\end{example}

As we shall see, the outer first-order condition is well-defined even when the \hyperref[assumption:classical]{classical assumptions} fail. When they do hold, the outer first-order condition coincides with the first-order condition a.e.:
\begin{namedthm}[Housekeeping lemma.]
	\label{lemma:housekeeping}
	Under the \hyperref[assumption:basic]{basic} and \hyperref[assumption:classical]{classical} assumptions, the outer first-order condition is equivalent to the first-order condition a.e.
\end{namedthm}

\begin{proof}
	Fix a Lipschitz continuous decision rule $X : [0,1] \to \mathcal{X}$.
	The family
	\begin{equation*}
		\left\{ t \mapsto \frac{ f(X(t+m),t) - f(X(t),t) }{ m } \right\}_{ m > 0 }
	\end{equation*}
	is convergent a.e. as $m \downarrow 0$ by the \hyperref[assumption:classical]{classical assumptions},
	and is uniformly integrable by \Cref{lemma:classical} in \cref{app:theory:classical_lemma}.
	Hence by the \hyperref[theorem:Vitali]{Vitali convergence theorem}, for any $r,t \in (0,1)$,
	\begin{equation*}
		\left. \frac{\dd}{\dd m} \int_r^t f( X(s+m), s ) \dd s \right|_{m=0}
		= \left. \int_r^t \frac{\dd}{\dd m} f( X(s+m), s ) \right|_{m=0} \dd s .
	\end{equation*}
	The left-hand side (right-hand side) is zero for all $r,t \in (0,1)$ iff the outer first-order condition (first-order condition a.e.) holds.%
		\footnote{For the right-hand side, this relies on the following basic fact (e.g. Proposition 2.23(b) in \textcite{Folland1999}): for $\phi \in \mathcal{L}^1$, we have $\phi = 0$ a.e. iff $\int_r^t \phi = 0$ for all $r,t \in (0,1)$.}
\end{proof}

The term `outer' is inspired by this argument.
By taking the differentiation operator outside the integral, we change nothing in the classical case, and ensure existence beyond the classical case.

As its name suggests, the outer first-order condition is necessary (but not sufficient) for optimality. The following is proved in \cref{app:theory:necessity_pf}:

\begin{namedthm}[Necessity lemma.]
	\label{lemma:necessity}
	Under the \hyperref[assumption:basic]{basic assumptions}, 
	any optimal decision rule $X$
	satisfies the outer first-order condition,
	and has $V_X(t) \coloneqq f(X(t),t)$ absolutely continuous.
\end{namedthm}

%%%%%%%%%%%%%%%%%%%%%%
\subsection{Envelope theorem and converse}
\label{sec:result:mainthm}
%%%%%%%%%%%%%%%%%%%%%%

My main result characterises the envelope formula in terms of the outer first-order condition.

\begin{namedthm}[Envelope theorem and converse.]
	\label{theorem:env}
	Under the \hyperref[assumption:basic]{basic assumptions}, for a decision rule $X : [0,1] \to \mathcal{X}$, the following are equivalent:
	\begin{enumerate}
		
		\item \label{bullet:env_ofoc}
		$X$ satisfies the outer first-order condition
		\begin{equation*}
			\left. \frac{\dd}{\dd m} \int_r^t f( X(s+m), s ) \dd s \right|_{m=0}
			= 0
			\quad \text{for all $r,t \in (0,1)$} ,
		\end{equation*}
		and $V_X(t) \coloneqq f(X(t),t)$ is absolutely continuous.

		\item \label{bullet:env_env}
		$X$ satisfies the envelope formula
		\begin{equation*}
			V_X(t) = V_X(0) + \int_0^t f_2( X(s), s ) \dd s
			\quad \text{for every $t \in [0,1]$} .
		\end{equation*}
	
	\end{enumerate}
\end{namedthm}

The implication \ref{bullet:env_ofoc}$\implies$\ref{bullet:env_env} is an envelope theorem with weak (purely local) assumptions; the \hyperref[theorem:MS]{Milgrom--Segal} and \hyperref[theorem:classical_env]{classical} envelope theorems in §\ref{sec:background} are corollaries.
The implication \ref{bullet:env_env}$\implies$\ref{bullet:env_ofoc} is the converse envelope theorem, which entails the \hyperref[theorem:classical_env]{classical converse envelope theorem} in §\ref{sec:background:classical}.

The absolute-continuity-of-$V_X$ condition in \ref{bullet:env_ofoc} ensures that $f(X(\cdot),t)$ does not behave too erratically near $t$.
A characterisation of this property is provided in \cref{app:theory:AC-of-V_charac}.

\setcounter{example}{0}
\begin{example}[continued]
	\label{example:xt_thm}
	We saw that a decision rule satisfies the envelope formula iff it is constant on $(0,1]$ (p. \pageref{example:xt_envelope}), and satisfies the outer first-order condition iff it is constant a.e. (p. \pageref{example:xt_oFOC}).
	Thus the envelope formula implies the outer first-order condition.
	For the other direction, observe that an a.e. constant $X$ for which $V_X(t) = X(t) t$ is (absolutely) continuous must in fact be constant on $(0,1]$, though not necessarily at zero.
	\hfill \rotatebox[origin=c]{45}{$\square$}
\end{example}

In the classical case (§\ref{sec:background:classical}), our proof relied on the differentiation identity 
\begin{equation*}
	V_X'(t) = \left. \frac{\dd}{\dd m} f( X(t+m), t ) \right|_{m=0} + f_2( X(t), t ) ,
\end{equation*}
or (rearranged and integrated)
\begin{equation*}
	\int_r^t \left. \frac{\dd}{\dd m} f( X(s+m), s ) \right|_{m=0} \dd s 
	= V_X(t) - V_X(r) - \int_r^t f_2( X(s), s ) \dd s .
\end{equation*}
To pursue an analogous proof, we require an `outer' version of this identity in which differentiation and integration are interchanged on the left-hand side. The following lemma, proved in \cref{app:theory:pf_identity_lemma}, does the job.
\begin{namedthm}[Identity lemma.]
	\label{lemma:identity_lemma}
	Under the \hyperref[assumption:basic]{basic assumptions}, if $V_X$ is absolutely continuous, then for all $r,t \in (0,1)$,
	\begin{equation}
		\left. \frac{\dd}{\dd m} \int_r^t f( X(s+m), s ) \dd s \right|_{m=0}
		= V_X(t) - V_X(r) - \int_r^t f_2( X(s), s ) \dd s ,
		\label{eq:identity}
		\tag{$\mathcal{I}$}
	\end{equation}
	where both sides are well-defined.
\end{namedthm}

The left-hand side of \eqref{eq:identity} is zero for all $r,t \in (0,1)$ iff the outer first-order condition holds.
The right-hand side is zero for all $r,t \in (0,1)$ iff the envelope formula holds.%
	\footnote{For the `only if' part, if right-hand side is zero for all $r,t \in (0,1)$, then it is zero for all $r,t \in [0,1]$ since $V_X$ and the integral are continuous, yielding the envelope formula.}
Therefore:
\begin{proof}[Proof of the \protect{\hyperref[theorem:env]{envelope theorem and converse}}]
	Suppose that the outer first-order condition holds and that $V_X$ is absolutely continuous. Then the \hyperref[lemma:identity_lemma]{identity lemma} applies, so the outer first-order condition implies the envelope formula.

	Suppose that the envelope formula holds. Then $V_X$ is absolutely continuous by \hyperref[theorem:LFTC]{Lebesgue's fundamental theorem of calculus}. Hence the \hyperref[lemma:identity_lemma]{identity lemma} applies, so the envelope formula implies the outer first-order condition.
\end{proof}

%%%%%%%%%%%%%%%%%%%%%%
%%%%%%%%%%%%%%%%%%%%%%
\section{Application to mechanism design}
\label{sec:app}
%%%%%%%%%%%%%%%%%%%%%%
%%%%%%%%%%%%%%%%%%%%%%

A key result in mechanism design is that, 
provided the agent's preferences are `single-crossing', all and only increasing allocations are implementable.
While the `only' part is straightforward, the `all' part has substance.
Existing theorems of this sort require that outcomes be drawn from an interval of $\R$
or that the agent have quasi-linear preferences.

In this section, I use the \hyperref[theorem:env]{converse envelope theorem} to extend this result to abstract spaces of outcomes, without requiring quasi-linearity.
I then apply it to the problem of selling information, showing that all (and only) Blackwell-increasing information allocations are implementable (and robust to collusion).

%%%%%%%%%%%%%%%%%%%%%%
\subsection{Environment and existing results}
\label{sec:app:env}
%%%%%%%%%%%%%%%%%%%%%%

There is a partially ordered set $\mathcal{Y}$ of outcomes.
A single agent has preferences over outcomes $y \in \mathcal{Y}$ and payments $p \in \R$ represented by $f(y,p,t)$, where the type $t \in [0,1]$ is privately known to the agent.%
	\footnote{All of the analysis carries over to the case of multiple agents with independent types.}
We assume that $f( y, \cdot, t )$ is strictly decreasing and onto $\R$ for all $y \in \mathcal{Y}$ and $t \in [0,1]$.

A \emph{direct mechanism} is a pair of maps $Y : [0,1] \to \mathcal{Y}$ and $P : [0,1] \to \R$ that assign an outcome and a payment to each type. A direct mechanism $(Y,P)$ is called \emph{incentive-compatible} iff no type strictly prefers the outcome--payment pair designated for another type:
\begin{equation*}
	f( Y(t), P(t), t ) \geq f( Y(r), P(r), t )
	\quad \text{for all $r,t \in [0,1]$} .
\end{equation*}
By a revelation principle,
it is without loss of generality to restrict attention to incentive-compatible direct mechanisms. An allocation $Y : [0,1] \to \mathcal{Y}$ is called \emph{implementable} iff there is a payment schedule $P : [0,1] \to \R$ such that $(Y,P)$ is incentive-compatible.%
	\footnote{Adding an individual rationality constraint does not change our results below.}
An \emph{increasing} allocation is one that provides higher types with larger outcomes (in the partial order on $\mathcal{Y}$).

Preferences $f$ are called \emph{single-crossing} iff higher types are more willing to pay to increase $y \in \mathcal{Y}$.
The details of how this is formalised vary from paper to paper.
We are interested in the following type of result:

\begin{namedthm}[Theorem schema.]
	\label{namedthm:pseudo}
	If $\mathcal{Y}$ and $f$ are `regular' and $f$ is `single-crossing',
	then any increasing allocation is implementable.
\end{namedthm}

The first result of this kind was obtained by \textcite{Mirrlees1976} and \textcite{Spence1974} under the assumptions that $\mathcal{Y}$ is an interval of $\R$ and that $f$ has the quasi-linear form $f(y,p,t) = h(y,t) - p$.
Maintaining quasi-linearity, the result was extended to multi-dimensional Euclidean $\mathcal{Y}$ by \textcite{MatthewsMoore1987} and \textcite{Garcia2005},%
	\footnote{Results of this type have been used to study sequential screening (e.g. \textcite{CourtyLi2000}, \textcite{Battaglini2005}, \textcite{EsoSzentes2007}, and \textcite{PavanSegalToikka2014}).}
and may be further extended to arbitrary $\mathcal{Y}$
via a standard argument.
(That argument relies critically on quasi-linearity; see \cref{suppl:mech:standar_arg}.)
With $\mathcal{Y}$ an interval of $\R$,
the result was obtained without quasi-linearity by \textcite{GuesnerieLaffont1984} under \hyperref[assumption:classical]{classical assumptions},%
	\footnote{These authors restricted attention to piecewise continuously differentiable allocations; \textcite[Theorem 4.2]{Milgrom2004} generalised to piecewise absolutely continuous allocations.}
and by \textcite{NoldekeSamuelson2018} assuming only that $f$ is (jointly) continuous.

I shall extend the result to a wide class of outcome spaces $\mathcal{Y}$, without imposing quasi-linearity.
I formulate notions of `regularity' and `single-crossing' in the next section,
then establish the implementability of increasing allocations in §\ref{sec:app:incr_impl}.

%%%%%%%%%%%%%%%%%%%%%%
\subsection{Regularity and single-crossing}
\label{sec:app:reg_sc}
%%%%%%%%%%%%%%%%%%%%%%

Recall that a subset $\mathcal{C} \subseteq \mathcal{Y}$ is called a \emph{chain} iff it is totally ordered.

\begin{definition}
	\label{definition:outcome_reg}
	The outcome space $\mathcal{Y}$ is \emph{regular} iff it is order-dense-in-itself, countably chain-complete and chain-separable.%
		\footnote{A set $\mathcal{A}$ partially ordered by $\lesssim$ is \emph{order-dense-in-itself} iff for any $a<a'$ in $\mathcal{A}$, there is a $b \in \mathcal{A}$ such that $a<b<a'$.
		$B \subseteq \mathcal{A}$ is \emph{order-dense} in $C \subseteq \mathcal{A}$ iff for any $c<c'$ in $C$, there is a $b \in B$ such that $c \lesssim b \lesssim c'$.
		% this is equivalent to Birkhoff's definition, (3rd ed. 1967, p. 200)
		% (This (standard) terminology has its quirks: $\N$ is not order-dense-in-itself, but $\N$ is an order-dense subset of $\N$!)
		$\mathcal{A}$ is \emph{chain-separable} iff 	
		for each chain $C \subseteq \mathcal{A}$, there is a countable set $B \subseteq \mathcal{A}$ that is order-dense in $C$.
		$\mathcal{A}$ is \emph{countably chain-complete}
		% (or \emph{`Dedekind $\sigma$-complete')}
		iff every countable chain in $\mathcal{A}$ with a lower (upper) bound in $\mathcal{A}$ has an infimum (a supremum) in $\mathcal{A}$.}
\end{definition}

In words, $\mathcal{Y}$ must be `rich' (first two assumptions) and `not too large' (final assumption).
Many important spaces enjoy these properties, 
including $\R^n$ with the usual (product) order, 
the space of finite-expectation random variables (on some probability space) ordered by `a.s. smaller', and
the space of distributions of posteriors updated from a given prior ordered by Blackwell informativeness.
I prove these assertions and give further examples in \cref{suppl:mech:order_assns_ex}.

\begin{definition}
	\label{definition:pref_reg}
	The payoff $f$ is \emph{regular} iff
	(a) the type derivative $f_3$ exists and is bounded,
	and $f_3(y,\cdot,t)$ is continuous for each $y \in \mathcal{Y}$ and $t \in [0,1]$, and
	(b) for every chain $\mathcal{C} \subseteq \mathcal{Y}$,
	$f$ is jointly continuous on $\mathcal{C} \times \R \times [0,1]$ when $\mathcal{C}$ has the relative topology inherited from the order topology on $\mathcal{Y}$.%
		\footnote{The \emph{order topology} on $\mathcal{Y}$ is the one generated by the open order rays $\{ y' \in \mathcal{Y} : y' < y \}$ and $\{ y' \in \mathcal{Y} : y < y' \}$ for each $y \in \mathcal{Y}$, where $<$ denotes the strict part of the order on $\mathcal{Y}$.}$^,$%
		\footnote{It is sufficient, but unnecessarily strong, to assume joint continuity on $\mathcal{Y} \times \R \times [0,1]$.}
\end{definition}

The joint continuity requirement corresponds to Nöldeke and Samuelson's (\citeyear{NoldekeSamuelson2018}) regularity assumption.
By demanding in addition that the type derivative exist and be bounded,
I ensure that when this model is embedded in the general setting of §\ref{sec:background:setting} by letting $\mathcal{X} \coloneqq \mathcal{Y} \times \R$,
the \hyperref[assumption:basic]{basic assumptions} are satisfied.
The \hyperref[theorem:env]{converse envelope theorem} is thus applicable.%
	\footnote{The continuity of $f_3(y,\cdot,t)$ plays a technical role in the proof: see \cref{footnote:f3cont} below.}

It remains to formalise `single-crossing',
the idea that higher types are more willing to pay to increase $y \in \mathcal{Y}$.
Under the \hyperref[assumption:classical]{classical assumptions},
this is captured by the \emph{Spence--Mirrlees condition,}
which demands that for any increasing $Y : [0,1] \to \mathcal{Y}$ and any $P : [0,1] \to \R$ (both Lipschitz continuous), for any type $s \in (0,1)$,
the marginal gain to mimicking
\begin{equation*}
	\left. \frac{\dd}{\dd m} f(Y(s+m),P(s+m),s+n) \right|_{m=0} 
\end{equation*}
be single-crossing in $n$.%
	\footnote{\label{footnote:sc_defn}%
	Given $\mathcal{T} \subseteq \R$,
	a function $\phi : \mathcal{T} \to \R$
	is called \emph{single-crossing} iff for any $t < t'$ in $\mathcal{T}$, $\phi(t) \geq \mathrel{(>)} 0$ implies $\phi(t') \geq \mathrel{(>)} 0$,
	and \emph{strictly single-crossing} iff $\phi(t) \geq 0$ implies $\phi(t') > 0$.}$^,$%
	\footnote{An equivalent definition of the Spence--Mirrlees condition
	requires instead that
	the slope $f_1(y,p,t) / \abs*{ f_2(y,p,t) }$ of the agent's indifference curve through any point $(y,p) \in \mathcal{Y} \times \R$
	be increasing in $t$.
	See \textcite[Theorem 3]{MilgromShannon1994} for a proof of equivalence.}
To extend this definition
beyond the \hyperref[assumption:classical]{classical} case
to general outcomes $\mathcal{Y}$
(and non-Lipschitz mechanisms $(Y,P)$),
I replace the (typically ill-defined) marginal mimicking gain
with its `outer' version:

\begin{definition}
	\label{definition:SM}
	$f$ satisfies the \emph{(strict) outer Spence--Mirrlees condition} iff for any increasing $Y : [0,1] \to \mathcal{Y}$, any $P : [0,1] \to \R$
	and any $r<t$ in $(0,1)$,
	\begin{equation*}
		n \mapsto
		\left. \frac{\overline{\dd}}{\overline{\dd} m} \int_r^t
		f(Y(s+m),P(s+m),s+n) \dd s
		\right|_{m=0}
	\end{equation*}
	is (strictly) single-crossing,
	where $\overline{\dd} / \overline{\dd} m$ denotes the upper derivative.%
		\footnote{The \emph{upper derivative} of $\phi : [0,1] \to \R$ at $t \in (0,1)$ is $\smash{\frac{\overline{\dd}}{\overline{\dd} m} \phi(t+m) 
		\bigr|_{m=0}}
		\coloneqq \limsup_{ m \to 0 } \left[ \phi(t+m) - \phi(t) \right] / m$.
		Nothing changes in the sequel
		if the upper derivative is replaced with the lower (defined with a $\liminf$),
		or with any of the four Dini derivatives.}
\end{definition}

The difference from the classical Spence--Mirrlees condition is merely technical:
the interpretation is the same, viz. that on the margin, higher types have a greater willingness to pay for increasing the outcome $y \in \mathcal{Y}$.
It is worth noting, however, that whereas the classical Spence--Mirrlees condition is (nearly) ordinal,%
	\footnote{Precisely: if $f$ satisfies this condition, then so does $\phi \circ f$ for any differentiable and strictly increasing transformation $\phi : \R \to \R$.}
the outer Spence--Mirrlees condition is not.

%%%%%%%%%%%%%%%%%%%%%%
\subsection{Increasing allocations are implementable}
\label{sec:app:incr_impl}
%%%%%%%%%%%%%%%%%%%%%%

\begin{namedthm}[Implementability theorem.]
	\label{theorem:incr_impl}
	If \hyperref[definition:outcome_reg]{$\mathcal{Y}$} and \hyperref[definition:pref_reg]{$f$} are regular
	and $f$ satisfies the \hyperref[definition:SM]{outer Spence--Mirrlees condition},
	then any increasing allocation is implementable.
\end{namedthm}

The proof is in \cref{app:mech:pf_incr_impl}.
The idea is as follows.
Take any increasing allocation $Y : [0,1] \to \mathcal{Y}$.
By the \hyperref[lemma:existence]{existence lemma} in \cref{app:mech:pf_incr_impl:solns_env},
there exists a payment schedule $P : [0,1] \to \R$ such that $(Y,P)$ satisfies the envelope formula.%
	\footnote{\label{footnote:f3cont}This is where the continuity of $f_3(y,\cdot,t)$ is used: the \hyperref[lemma:existence]{existence lemma} requires it.}
By the \hyperref[theorem:env]{converse envelope theorem}, it follows that $(Y,P)$ is locally incentive-compatible in the sense that it satisfies the outer first-order condition.
The \hyperref[definition:SM]{outer Spence--Mirrlees condition} ensures that local incentive-compatibility translates into global incentive-compatibility.

The argument for the final step actually applies only to allocations $Y$ that are suitably continuous.
But the \hyperref[definition:outcome_reg]{regularity of $\mathcal{Y}$} ensures (via a lemma in \cref{app:mech:pf_incr_impl:approx}) that any increasing $Y$ can be approximated by a sequence of continuous and increasing (hence implementable) allocations.

Given two mild additional assumptions, the payment rule implementing a given increasing allocation is in fact unique, and may be computed constructively via Picard's method---see \cref{app:mech:pf_incr_impl:solns_env}.

The \hyperref[theorem:incr_impl]{implementability theorem} admits a standard converse when $\mathcal{Y}$ is a chain (e.g. an interval of $\R$),
proved in \cref{app:mech:impl_incr}:

\begin{proposition}
	\label{proposition:impl_equiv_incr}
	If \hyperref[definition:outcome_reg]{$\mathcal{Y}$} and \hyperref[definition:pref_reg]{$f$} are regular,
	$f$ satisfies the \hyperref[definition:SM]{strict outer Spence--Mirrlees condition},
	and $\mathcal{Y}$ is a chain,
	then all and only increasing allocations are implementable.
\end{proposition}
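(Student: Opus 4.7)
The ``all'' direction is immediate from the \hyperref[theorem:incr_impl]{implementability theorem}, since strict outer Spence--Mirrlees trivially implies its non-strict version. For the ``only'' direction, fix any incentive-compatible direct mechanism $(Y,P)$ and any $r<t$ in $[0,1]$; the goal is to show $Y(r) \leq Y(t)$. The two IC inequalities
\[
    f(Y(t),P(t),t) \geq f(Y(r),P(r),t) \quad\text{and}\quad f(Y(r),P(r),r) \geq f(Y(t),P(t),r)
\]
rewrite as $\psi(r) \geq 0 \geq \psi(t)$, where $\psi(s) \coloneqq f(Y(r),P(r),s) - f(Y(t),P(t),s)$ is continuous in $s$ by the \hyperref[definition:pref_reg]{regularity of $f$}. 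Since $\mathcal{Y}$ is a chain, either $Y(r) \leq Y(t)$---the desired conclusion---or $Y(r) > Y(t)$, which I will rule out by contradiction.

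The core step is to deduce a \emph{pointwise} strict single-crossing property from the outer form of the hypothesis, namely that for any $y_0 < y_1$ in $\mathcal{Y}$ and any $p_0,p_1 \in \R$, the function $s \mapsto f(y_1,p_1,s) - f(y_0,p_0,s)$ is strictly single-crossing on $[0,1]$. I will establish this by applying strict outer SM to the increasing two-step allocation $\hat{Y}$ taking value $y_0$ on $[0,s_0]$ and $y_1$ on $(s_0,1]$, paired with the analogous step payment $\hat{P}$. Splitting the integral at the jump $s_0-m$ and applying Leibniz's rule yields
\[
    \left. \frac{\dd}{\dd m} \int_r^t f(\hat{Y}(s+m),\hat{P}(s+m),s+n) \dd s \right|_{m=0}
    = f(y_1,p_1,s_0+n) - f(y_0,p_0,s_0+n),
\]
which is a classical derivative and so coincides with the upper derivative. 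Strict outer SM then delivers strict single-crossing of the right-hand side in $n$. Varying $r<s_0<t$ across $(0,1)$ and substituting $s = s_0 + n$ covers every compact subinterval of $(0,1)$, and joint continuity of $f$ (on the chain $\{y_0,y_1\}$) extends the conclusion to $[0,1]$.

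Applying the resulting pointwise strict SM to $\psi$ with $(y_1,p_1) = (Y(r),P(r))$ and $(y_0,p_0) = (Y(t),P(t))$---valid under the contradiction hypothesis $Y(r) > Y(t)$---shows $\psi$ to be strictly single-crossing. Combined with $\psi(r) \geq 0$, strict single-crossing forces $\psi(t) > 0$, contradicting $\psi(t) \leq 0$. Hence $Y(r) \leq Y(t)$, so $Y$ is increasing.

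\textbf{Main obstacle.} The substantive work lies in the reduction from the outer to the pointwise form of strict SM: verifying the Leibniz differentiation of the step-function integral, confirming that the upper derivative agrees with the classical derivative at $m=0$, and cleanly transferring strict single-crossing from compact subintervals of $(0,1)$ to the closed interval $[0,1]$---the last point requiring additional care at the endpoints $r=0$ and $t=1$, where continuity alone yields only non-strict single-crossing and an intermediate-value argument on $\psi$ may be needed to close the contradiction.
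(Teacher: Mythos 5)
Your proof is correct and matches the paper's approach: the paper routes the `only' direction through \Cref{lemma:outer_ordinal_SM} (which extracts pointwise strictly single-crossing differences from the outer Spence--Mirrlees condition via precisely your two-step-mechanism-plus-Leibniz computation) and \Cref{lemma:nd_compstat} (which runs the same contradiction as your $\psi$-argument), so the substance is identical, merely packaged as two lemmata. Your endpoint concern is legitimate but applies equally to the paper's \Cref{lemma:outer_ordinal_SM}, whose construction requires $0 < r < t < r' \leq 1-(t'-t) < 1$ and hence excludes $t=0$ and $t'=1$; both arguments therefore tacitly rely on a continuity-and-approximation step to cover the boundary, and you are right that strict single-crossing does not automatically pass to the closure there.
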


%%%%%%%%%%%%%%%%%%%%%%
\subsection{Selling information}
\label{sec:app:info}
%%%%%%%%%%%%%%%%%%%%%%

In this section, I apply the \hyperref[theorem:incr_impl]{implementability theorem} to selling informative signals.
Here the outcomes $\mathcal{Y}$ are distributions of posterior beliefs---a space very different from an interval of $\R$.
I show that all Blackwell-increasing information allocations are implementable,
and that only these are implementable if agents are able to share information with each other.

There is a population of agents with types $t \in [0,1]$, a finite set $\Omega$ of states of the world, and a set $A$ of actions.
A type-$t$ agent earns payoff $U(a,\omega,t)$ if she takes action $a \in A$ in state $\omega \in \Omega$, so her expected value at belief $\mu \in \Delta(\Omega)$ is
\begin{equation*}
	V(\mu,t) 
	\coloneqq \sup_{ a \in A } \sum_{ \omega \in \Omega }
	U(a,\omega,t) \mu(\omega) .
\end{equation*}
Assume that the type derivative $V_2$ exists and is bounded, and that $V_2(\cdot,t)$ is continuous for each $t \in [0,1]$.%
	\footnote{This is slightly stronger than assuming that the underlying type derivative $U_3$ has the same properties; see e.g. \textcite[Theorem 3]{MilgromSegal2002} for sufficient conditions.}

\begin{example}
	\label{example:forecasting}
	Each agent is tasked with announcing a probabilistic forecast $a \in A \coloneqq \Delta(\Omega)$ of the state $\omega \in \Omega$.
	Ex post, the public's assessment of an agent's quality as a forecaster is some function of the forecast $a$ and realised state $\omega$ (a \emph{scoring rule}); for concreteness, $a(\omega) / \norm*{a}_2$, where $\norm*{\cdot}_2$ denotes the Euclidean norm.%
		\footnote{More generally, any bounded and strictly proper scoring rule will do. See e.g. \textcite{GneitingRaftery2007} for an introduction to proper scoring rules.}
	Each agent attaches some importance $t \in [0,1]$ to being considered a good forecaster, so that $U(a,\omega,t) = t a(\omega) / \norm{a}_2$.
	Agents are expected-utility maximisers.

	It is easily verified that an agent with belief $\mu \in \Delta(\Omega)$ optimally announces forecast $a=\mu$.
	Her value is therefore
	\begin{equation*}
		V(\mu,t) 
		= \sum_{\omega \in \Omega} 
		\frac{ t \mu(\omega) }{ \norm*{\mu}_2 } \mu(\omega)
		= t \norm*{\mu}_2 .
	\end{equation*}
	By inspection, $V_2(\mu,t) = \norm*{\mu}_2$ exists, is bounded, and is continuous in $\mu$.
	\hfill \rotatebox[origin=c]{45}{$\square$}
\end{example}

Agents share a common prior $\mu_0 \in \interior \Delta(\Omega)$.
Before making her decision, an agent observes the realisation of a signal
(a random variable correlated with $\omega$),
and forms a posterior belief according to Bayes's rule.
Since the signal is random, the agent's posterior is random; write $y$ for its distribution (a Borel probability measure on $\Delta(\Omega)$).
The agent's expected payoff under a signal that induces posterior distribution $y$, if she makes payment $p \in \R$, is
\begin{equation*}
	f(y,p,t) \coloneqq g\left( \int_{\Delta(\Omega)} V(\mu,t) y( \dd \mu ), p \right) ,
\end{equation*}
where $g : \R^2 \to \R$ is jointly continuous,
possesses a bounded derivative $g_1$ that is continuous in $p$,
and has $g(v,\cdot)$ strictly decreasing and onto $\R$ for each $v \in \R$.
The payoff $f$ is \hyperref[definition:pref_reg]{regular}:
$f_3$ exists, is bounded, and is continuous in $p$,
and I verify the joint continuity property in \cref{suppl:mech:pref_assns_ex}.

A Borel probability measure $y$ on $\Delta(\Omega)$ is the distribution of posteriors induced by some signal exactly if its mean $\int_{\Delta(\Omega)} \mu y( \dd \mu)$ is equal to $\mu_0$.%
	\footnote{The `only if' direction is trivial.
	Conversely, a $y$ with mean $\mu_0$ is induced by a $\Delta(\Omega)$-valued signal whose distribution conditional on each $\omega \in \Omega$ is
	\begin{equation*}
		\pi( M | \omega )
		= \frac{ 1 }{ \mu_0(\omega) } \int_M \mu(\omega) y( \dd \mu )
		\quad \text{for each Borel-measurable $M \subseteq \Delta(\Omega)$.}
	\end{equation*}
	This construction is due to \textcite{Blackwell1951}, and used by \textcite{KamenicaGentzkow2011}.}
Write $\mathcal{Y}$ for the set of all mean-$\mu_0$ distributions of posteriors, and order it by Blackwell informativeness: $y \lesssim y'$ iff
$\smash{\int_{\Delta(\Omega)} v \dd y \leq \int_{\Delta(\Omega)} v \dd y'}$
%
% \begin{equation*}
% 	\int_{\Delta(\Omega)} v \dd y \leq \int_{\Delta(\Omega)} v \dd y'
% \end{equation*}
%
for every continuous and convex $v : \Delta(\Omega) \to \R$.%
	\footnote{A Blackwell-less informative distribution of posteriors is precisely one that yields a lower expected payoff
	% $\int V(\mu,t) y(\dd \mu)$
	$\smash{\int_{\raisebox{1pt}{$\scriptstyle \Delta(\Omega)$}} V(\mu,t) y(\dd \mu)}$
	no matter what the underlying action set $A$ or utility $U(\cdot,\cdot,t)$.
	This is because $V(\cdot,t)$ is continuous and convex for any $A$ and $U$,
	and any continuous and convex $v$ can be approximated by $V(\cdot,t)$ for some $A$ and $U$.}
I show in \cref{suppl:mech:order_assns_ex} that the outcome space $\mathcal{Y}$ is \hyperref[definition:outcome_reg]{regular}.

Assume that $f$ satisfies the \hyperref[definition:SM]{strict outer Spence--Mirrlees condition}.
An \emph{information allocation} is a map $Y : [0,1] \to \mathcal{Y}$ that assigns to each type a distribution of posteriors.
By the \hyperref[theorem:incr_impl]{implementability theorem}, we have:

\begin{proposition}
	\label{proposition:info1}
	Every increasing information allocation is implementable.
\end{proposition}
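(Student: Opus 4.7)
The plan is to deduce \Cref{proposition:info1} as an immediate corollary of the \hyperref[theorem:incr_impl]{implementability theorem} by verifying, in the information-selling environment, each of its three hypotheses: regularity of the outcome space $\mathcal{Y}$, regularity of the payoff $f$, and the outer Spence--Mirrlees condition on $f$.

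First, I would invoke the fact, established in the supplemental appendix referenced in §\ref{sec:app:info}, that the space $\mathcal{Y}$ of mean-$\mu_0$ distributions of posteriors, ordered by Blackwell informativeness, is \hyperref[definition:outcome_reg]{regular} in the sense of being order-dense-in-itself, countably chain-complete, and chain-separable. Nothing new needs to be shown here; I would simply cite the relevant result.

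Second, I would verify \hyperref[definition:pref_reg]{regularity of $f$}. The discussion preceding the proposition already explains that $f_3$ exists and is bounded, and that $f_3(y,\cdot,t)$ is continuous in $p$. This is because the type derivative is $f_3(y,p,t) = g_1\!\left( \int_{\Delta(\Omega)} V(\mu,t)\, y(\dd \mu),\, p \right) \int_{\Delta(\Omega)} V_2(\mu,t)\, y(\dd \mu)$, and both $g_1$ and $V_2$ are bounded with the appropriate continuity. The joint continuity condition (part (b) of \Cref{definition:pref_reg}) along chains in $\mathcal{Y}$ is the content of the verification done in the supplemental appendix referenced in the paragraph above the proposition, which I would again simply cite.

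Third, the \hyperref[definition:SM]{strict outer Spence--Mirrlees condition} is assumed outright in this subsection, and strict single-crossing trivially implies single-crossing, so the (non-strict) \hyperref[definition:SM]{outer Spence--Mirrlees condition} holds as well. With all three hypotheses verified, the \hyperref[theorem:incr_impl]{implementability theorem} applies and yields the conclusion. There is no real obstacle: the proposition is a direct corollary, and the proof is essentially a one-line invocation of the main theorem of §\ref{sec:app:incr_impl} after hypothesis-checking. The only delicate point is ensuring that the verifications of regularity for the Blackwell order on $\mathcal{Y}$ and for the joint-continuity requirement on $f$ in the supplemental appendix are indeed applicable under the continuity assumptions imposed on $g$ and $V$, but these are routine once those supplemental results are in hand.
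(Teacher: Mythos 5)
Your proposal is correct and follows exactly the paper's own route: \Cref{proposition:info1} is presented as an immediate corollary of the \hyperref[theorem:incr_impl]{implementability theorem}, with regularity of $\mathcal{Y}$ deferred to \cref{suppl:mech:order_assns_ex}, joint continuity of $f$ deferred to \cref{suppl:mech:pref_assns_ex}, the remaining parts of preference regularity observed from the chain-rule expression for $f_3$, and the outer Spence--Mirrlees condition following from the assumed strict version. Nothing is missing.
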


The converse is false.
In particular, there are implementable allocations that assign some types $t < t'$ Blackwell-incomparable information.
But any such information allocation is vulnerable to collusion, as agents of types $t$ and $t'$ would benefit by sharing their information.%
	\footnote{This holds no matter how the underlying signals giving rise to the posterior distributions $Y(t)$ and $Y(t')$ are correlated with each other. For by a standard embedding theorem (e.g. Theorem 7.A.1 in \textcite{ShakedShanthikumar2007}),
	$Y(t) \lesssim Y(t')$ is necessary (as well as sufficient)
	for there to exist a probability space on which there are random vectors with laws $Y(t)$ and $Y(t')$ such that the latter is statistically sufficient for the former.}$^,$%
	\footnote{Both agents benefit \emph{strictly} provided $V(\cdot,t)$ and $V(\cdot,t')$ are strictly convex.}
Call an allocation \emph{sharing-proof} iff no two types are assigned Blackwell-incomparable information.

\begin{proposition}
	\label{proposition:info_sharing}
	An information allocation is implementable and sharing-proof if and only if it is increasing.
\end{proposition}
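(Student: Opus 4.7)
The ``if'' direction is straightforward: when $Y$ is increasing, Proposition \ref{proposition:info1} supplies implementability, and sharing-proofness follows because monotonicity makes any two values $Y(r), Y(t)$ Blackwell-comparable.

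For the ``only if'' direction, I would begin by observing that sharing-proofness is precisely the statement that the image $Y([0,1])$ forms a chain in $\mathcal{Y}$: every pair $Y(r), Y(t)$ is Blackwell-comparable. The task is to upgrade this to monotonicity of $Y$. My plan is to inherit the argument used in the ``implementable $\Rightarrow$ increasing'' half of Proposition \ref{proposition:impl_equiv_incr} (developed in \cref{app:mech:impl_incr}), which already handles this implication when the outcome space is a chain. That argument only needs pairwise Blackwell-comparability of the values $Y(r)$ and $Y(t)$---exactly what sharing-proofness provides---so it should carry over verbatim, even though the ambient space $\mathcal{Y}$ is not itself a chain.

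Concretely, I would proceed by contradiction. Fix $r < t$ in $[0,1]$. By sharing-proofness, either $Y(r) \lesssim Y(t)$ (the desired conclusion for this pair) or $Y(r) > Y(t)$ strictly in the Blackwell order. In the latter case, adding the two incentive-compatibility inequalities for types $r$ and $t$ and using that $f_3$ is the type derivative of $f$ yields
\begin{equation*}
\int_r^t \bigl[ f_3(Y(t), P(t), s) - f_3(Y(r), P(r), s) \bigr] \dd s \geq 0 .
\end{equation*}
Applying the strict outer Spence--Mirrlees condition to an auxiliary increasing allocation that interpolates between $Y(r)$ and $Y(t)$ should then force the reverse strict inequality, yielding the desired contradiction and establishing $Y(r) \lesssim Y(t)$.

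The hard part will be this last step: the strict outer Spence--Mirrlees condition is phrased as a property of increasing allocations on the whole type interval, so extracting from it the pointwise single-crossing information needed to rule out $Y(r) > Y(t)$ requires a careful construction. This construction is the technical core of Proposition \ref{proposition:impl_equiv_incr}'s proof in \cref{app:mech:impl_incr}, and my plan is to reuse it directly here---observing that nothing in that argument relies on the ambient outcome space being a chain, only on the two relevant values of the allocation being comparable, which sharing-proofness guarantees.
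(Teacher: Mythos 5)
Your ``if'' direction and your diagnosis of what sharing-proofness buys (pairwise comparability) are both correct and match the paper. But you have overcomplicated the ``only if'' direction by not noticing that \hyperref[proposition:impl_incr]{\Cref*{proposition:impl_equiv_incr}$'$} in \cref{app:mech:impl_incr} already does exactly what you need, with no adaptation. You write as though the ``implementable $\Rightarrow$ monotone'' half of the argument requires $\mathcal{Y}$ to be a chain; but that hypothesis appears only in the \emph{unprimed} \Cref{proposition:impl_equiv_incr}, where it is used to upgrade ``non-decreasing'' to ``increasing.'' The primed version asserts, for arbitrary partially ordered $\mathcal{Y}$ with $f$ regular and strictly outer Spence--Mirrlees, that every implementable allocation is \emph{non-decreasing} (there are no $t\leq t'$ with $Y(t')<Y(t)$). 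The paper's proof of \Cref{proposition:info_sharing} is therefore a two-line citation: by \hyperref[proposition:impl_incr]{\Cref*{proposition:impl_equiv_incr}$'$}, $Y(t')<Y(t)$ is impossible; by sharing-proofness, $Y(t)$ and $Y(t')$ are comparable; hence $Y(t)\lesssim Y(t')$. Nothing needs to be ``carried over verbatim'' or re-proved.

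Your proposed reconstruction of that lemma is also not the route the paper takes, and as written it has a gap. You add the two IC inequalities, convert them via $f_3$ to $\int_r^t [f_3(Y(t),P(t),s) - f_3(Y(r),P(r),s)]\,\dd s \geq 0$, and then hope to contradict this by ``applying the strict outer Spence--Mirrlees condition to an auxiliary interpolating allocation.'' But the outer Spence--Mirrlees condition is a statement about the sign-crossing in $n$ of an upper derivative in $m$ of an integral of $f$, not of $f_3$; getting from it to a pointwise or integrated statement about $f_3$-differences is exactly the kind of connection that is \emph{not} immediate without quasi-linearity (see \cref{suppl:mech:standar_arg}). The paper avoids this entirely: in \Cref{lemma:outer_ordinal_SM} it plugs a two-step step-function mechanism into the outer Spence--Mirrlees condition, computes the upper derivative explicitly, and reads off that $(y,t)\mapsto f(y,\pi(y),t)$ has strictly single-crossing differences; then \Cref{lemma:nd_compstat} (an ordinal monotone comparative statics lemma) delivers non-decreasing-ness directly. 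If you want to reconstruct rather than cite, that step-function construction is the piece you are missing.
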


The proof is in \cref{app:mech:info_sharing_pf}.

%______________________________________________________________________________

%       _                               _ _               
%      / \   _ __  _ __   ___ _ __   __| (_) ___ ___  ___ 
%     / _ \ | '_ \| '_ \ / _ \ '_ \ / _` | |/ __/ _ \/ __|
%    / ___ \| |_) | |_) |  __/ | | | (_| | | (_|  __/\__ \
%   /_/   \_\ .__/| .__/ \___|_| |_|\__,_|_|\___\___||___/
%           |_|   |_|                                     

% \pagebreak
\begin{appendices}
\crefalias{section}{appsec}
\crefalias{subsection}{appsec}
\crefalias{subsubsection}{appsec}

\renewcommand*{\thesubsection}{\Alph{subsection}}

%%%%%%%%%%%%%%%%%%%%%%
%%%%%%%%%%%%%%%%%%%%%%
\section*{Appendix to the theory (§\ref{sec:background} and §\ref{sec:result})}
\label{app:theory}
\addcontentsline{toc}{section}{Appendix to the theory}
%%%%%%%%%%%%%%%%%%%%%%
%%%%%%%%%%%%%%%%%%%%%%

%%%%%%%%%%%%%%%%%%%%%%
\subsection{Mathematical background}
\label{app:theory:bckg}
%%%%%%%%%%%%%%%%%%%%%%

Two operations are important in this paper:
writing a function as the integral of its derivative,
and interchanging limits and integrals.
The former is permissible precisely for absolutely continuous functions:
\begin{definition}
	\label{definition:abs_cont}
	A function $\phi : [0,1] \to \R$ is \emph{absolutely continuous} iff for each $\eps > 0$, there is a $\delta > 0$ such that for any finite collection $\{ (r_n,t_n) \}_{n=1}^N$ of disjoint intervals of $[0,1]$,
	$\sum_{n=1}^N ( t_n - r_n ) < \delta$ implies $\sum_{n=1}^N \abs*{ \phi(t_n) - \phi(r_n) } < \eps$.
\end{definition}

Absolute continuity implies continuity and differentiability a.e., but the converse is false. Absolute continuity is implied by Lipschitz continuity.

\begin{namedthm}[Lebesgue's fundamental theorem of calculus.%
	\footnote{See e.g. \textcite[§3.5, p. 106]{Folland1999} for a proof.}]
	\label{theorem:LFTC}
	Let $\phi$ be a function $[0,1] \to \R$. The following are equivalent:
	\begin{enumerate}
		
		\item $\phi$ is absolutely continuous.

		\item There is a $\psi \in \mathcal{L}^1$ such that $\phi(t) = \phi(0) + \int_0^t \psi$ for every $t \in [0,1]$.

		\item $\phi$ is differentiable a.e., its (a.e.-defined) derivative $\phi'$ belongs to $\mathcal{L}^1$, and $\phi(t) = \phi(0) + \int_0^t \phi'$ for every $t \in [0,1]$.
		
	\end{enumerate}
\end{namedthm}

As for interchanging limits and integrals,
uniform integrability is the key:

\begin{definition}
	\label{definition:UI}
	A family $\Phi \subseteq \mathcal{L}^1$ is \emph{uniformly integrable} iff for each $\eps > 0$, there is $\delta > 0$ such that
	for any open $T \subseteq [0,1]$ of measure $<\delta$, we have
	$\int_T \abs*{ \phi } < \eps$ for every $\phi \in \Phi$.
\end{definition}

\begin{namedthm}[Vitali convergence theorem.%
	\footnote{For a proof and a partial converse, see e.g. \textcite[§4.6]{RoydenFitzpatrick2010}.}]
	\label{theorem:Vitali}
	Let $\{ \phi_n \}_{ n \in \N }$ be a uniformly integrable sequence in $\mathcal{L}^1$ converging a.e. to $\phi : [0,1] \to \R$.
	Then $\phi \in \mathcal{L}^1$, and
	$\lim_{ n \to \infty } \int_r^t \phi_n = \int_r^t \phi$ for all $r,t \in [0,1]$.
\end{namedthm}

\noindent
(Lebesgue's dominated convergence theorem is a corollary.)

Absolute continuity and uniform integrability are closely related:
\begin{namedthm}[AC--UI lemma {\normalfont\parencite{FitzpatrickHunt2015}}.]
	\label{lemma:FitzpatrickHunt}
	Let $\phi$ be a continuous function $[0,1] \to \R$. The following are equivalent:
	\begin{enumerate}
		
		\item $\phi$ is absolutely continuous.

		\item The `divided-difference' family $\left\{ t \mapsto [ \phi(t+m) - \phi(t) ] / m \right\}_{m > 0}$
		is uniformly integrable.
		
	\end{enumerate}
\end{namedthm}

%%%%%%%%%%%%%%%%%%%%%%
\subsection{Housekeeping for absolute equi-continuity (§\ref{sec:background:setting}, p. \pageref{definition:unif_abs_cont})}
\label{app:theory:AEC_HK}
%%%%%%%%%%%%%%%%%%%%%%

The following lemma justifies the name `absolute equi-continuity',
and is used in \cref{app:theory:necessity_pf} below to prove the \hyperref[lemma:necessity]{necessity lemma} (§\ref{sec:result:oFOC}, p. \pageref{lemma:necessity}).

\begin{lemma}
	\label{lemma:AEC_HK}
	An absolutely equi-continuous family $\{ \phi_x \}_{ x \in \mathcal{X} }$
	is uniformly equi-continuous,
	and each of its members $\phi_x$ is absolutely continuous.
\end{lemma}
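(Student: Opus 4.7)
The approach rests on the elementary change-of-variables identity
\begin{equation*}
	\int_r^t \frac{ \phi_x(s+m) - \phi_x(s) }{ m } \dd s
	= \frac{1}{m}\int_t^{t+m} \phi_x - \frac{1}{m}\int_r^{r+m} \phi_x
	\qquad (\ast)
\end{equation*}
(valid for $0 \leq r < t$ with $t+m \leq 1$), together with uniform integrability of the sup-family $\{ g_m \}_{m>0}$, where $g_m(s) \coloneqq \sup_{x \in \mathcal{X}} \abs*{ \phi_x(s+m) - \phi_x(s) } / m$. The LHS of $(\ast)$ is dominated in absolute value by $\int_r^t g_m$, which by uniform integrability satisfies: for every $\eps > 0$ there exists $\delta > 0$ with $\int_T g_m < \eps$ for every open $T \subseteq [0,1]$ of measure $< \delta$, uniformly in $m > 0$.

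To establish uniform equi-continuity, I would fix $\eps > 0$, take the corresponding $\delta$, and for any $r < t$ in $[0,1]$ with $t - r < \delta$, pass $m \downarrow 0$ in $(\ast)$. The RHS converges to $\phi_x(t) - \phi_x(r)$ (by continuity of $\phi_x$, which is available in the setting of the \hyperref[assumption:basic]{basic assumptions} since $f(x,\cdot)$ is differentiable; more generally one appeals to \hyperref[theorem:LFTC]{Lebesgue differentiation} at Lebesgue points), while the LHS is bounded by $\int_r^t g_m \leq \eps$ for every $m$. Hence $\abs*{ \phi_x(t) - \phi_x(r) } \leq \eps$, with $\delta$ independent of $x$.

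For absolute continuity of each $\phi_x$, I would run the same argument over a finite disjoint collection $\{ (r_n, t_n) \}_{n=1}^N$ of subintervals with $\sum_n (t_n - r_n) < \delta$: summing $\abs*{\phi_x(t_n) - \phi_x(r_n)} \leq \liminf_{m \downarrow 0} \int_{r_n}^{t_n} g_m$ and using the (finite) sub-additivity of $\liminf$,
\begin{equation*}
	\sum_{n=1}^N \abs*{ \phi_x(t_n) - \phi_x(r_n) }
	\leq \liminf_{m \downarrow 0} \int_T g_m
	\leq \eps,
\end{equation*}
where $T = \bigcup_n (r_n, t_n)$ has measure $< \delta$. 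This is precisely absolute continuity. (Equivalently, one could apply the \hyperref[lemma:FitzpatrickHunt]{AC--UI lemma}: the divided-difference family for fixed $x$ is pointwise dominated by $\{ g_m \}$ and so inherits uniform integrability, which, combined with continuity of $\phi_x$, yields absolute continuity.)

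The main obstacle is not technical but expository: one must be careful with the continuity of the individual $\phi_x$, since a merely measurable family satisfying the uniform-integrability condition can have discontinuous members (e.g.\ the indicator of a single point), in which case the conclusion holds only up to modification on a null set. In the \hyperref[assumption:basic]{basic-assumptions} context where the lemma is applied, each $\phi_x = f(x,\cdot)$ is differentiable and hence continuous, so this difficulty does not arise; otherwise it is resolved by passing to the canonical continuous representative $\tilde{\phi}_x(t) \coloneqq \lim_{m \downarrow 0} (1/m) \int_t^{t+m} \phi_x$, which by the argument above is uniformly continuous.
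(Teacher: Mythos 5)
Your proof is correct, and for the uniform equi-continuity part it takes a genuinely different route from the paper's. The paper first establishes absolute continuity of each $\phi_x$ via the \hyperref[lemma:FitzpatrickHunt]{AC--UI lemma}, then writes $\phi_x(t)-\phi_x(r)=\int_r^t\phi_x'$ by \hyperref[theorem:LFTC]{Lebesgue's fundamental theorem of calculus}, and finally passes the limit inside the integral using the \hyperref[theorem:Vitali]{Vitali convergence theorem} to arrive at the divided-difference bound. You bypass all three by using the change-of-variables identity $(\ast)$ directly: its right-hand side converges to $\phi_x(t)-\phi_x(r)$ by continuity of $\phi_x$, while its left-hand side is dominated uniformly in $m$ by $\int_r^t g_m$, which uniform integrability controls. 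This is more elementary and also decouples the two conclusions: your argument for uniform equi-continuity does not require knowing absolute continuity in advance, whereas the paper's chain does. For absolute continuity, your direct argument over finitely many disjoint intervals is fine (though what lets you push $\sum_n \liminf_m$ inside the sum is the \emph{super}additivity of $\liminf$, not subadditivity as you wrote --- a slip in terminology, not in the mathematics), and your parenthetical alternative via the AC--UI lemma is exactly the paper's route. Finally, you are right to flag that both arguments implicitly require each $\phi_x$ to be continuous: the definition of absolute equi-continuity alone does not rule out, say, the indicator of a point. The paper has the same implicit dependence, since the AC--UI lemma it invokes is stated for continuous functions, and in the context where the lemma is used each $\phi_x=f(x,\cdot)$ is differentiable, so no issue arises.
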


\begin{proof}
	Let $\{ \phi_x \}_{x \in \mathcal{X}}$ be absolutely equi-continuous.
	Then for every $x \in \mathcal{X}$,
	$\{ t \mapsto [ \phi_x(t+m) - \phi_x(t) ] / m \}_{ m > 0 }$ is uniformly integrable,
	and hence $\phi_x$ is absolutely continuous by the \hyperref[lemma:FitzpatrickHunt]{AC--UI lemma} in \cref{app:theory:bckg}.
	
	It follows that for any $r < t$ in $[0,1]$,
	\begin{multline*}
		\sup_{x \in \mathcal{X}} \abs*{ \phi_x(t) - \phi_x(r) }
		= \sup_{x \in \mathcal{X}} \abs*{ \int_r^t \phi_x' }
		= \sup_{x \in \mathcal{X}} \abs*{ \lim_{m \downarrow 0} \int_r^t
		\frac{ \phi_x(s+m) - \phi_x(s) }{ m }
		\dd s }
		\\
		\leq \sup_{x \in \mathcal{X}} \sup_{m>0} \abs*{ \int_r^t
		\frac{ \phi_x(s+m) - \phi_x(s) }{ m }
		\dd s }
		\leq \sup_{m>0} \int_r^t \sup_{x \in \mathcal{X}} \abs*{ 
		\frac{ \phi_x(s+m) - \phi_x(s) }{ m } }
		\dd s ,
	\end{multline*}
	where the first equality holds by \hyperref[theorem:LFTC]{Lebesgue's fundamental theorem of calculus},
	and the second holds by the \hyperref[theorem:Vitali]{Vitali convergence theorem}.

	Fix an $\eps>0$.
	By the absolute equi-continuity of $\{ \phi_x \}_{x \in \mathcal{X}}$,
	there is a $\delta>0$ such that whenever $t-r < \delta$,
	the right-hand side of the above inequality is $< \eps$,
	and thus $\sup_{x \in \mathcal{X}} \abs*{ \phi_x(t) - \phi_x(r) } < \eps$.
	So $\{\phi_x\}_{x \in \mathcal{X}}$ is uniformly equi-continuous.
\end{proof}

%%%%%%%%%%%%%%%%%%%%%%
\subsection{Proof of the \texorpdfstring{\hyperref[lemma:identity_lemma]{identity lemma}}{identity lemma} (§\ref{sec:result:mainthm}, p. \pageref{lemma:identity_lemma})}
\label{app:theory:pf_identity_lemma}
%%%%%%%%%%%%%%%%%%%%%%

We use the results in \cref{app:theory:bckg}.
We shall focus on the limit $m \downarrow 0$, omitting the symmetric argument for $m \uparrow 0$.%
	\footnote{Since the argument below relies on absolute equi-continuity,
	the omitted argument requires uniform integrability
	of $\{ \Phi_m \}_{m<0} \coloneqq \{ t \mapsto \sup_{x \in \mathcal{X}} \abs*{ ( f(x,t+m) - f(x,t) ) / m } \}_{m<0}$.
	This follows from absolute equi-continuity
	and the observation that $\Phi_m(t) = \Phi_{-m}(t+m)$.}
For $t \in [0,1)$ and $m \in (0,1-t]$, write
\begin{multline*}
	\phi_m(t)
	\coloneqq \frac{ V_X(t+m) - V_X(t) }{ m }
	\\
	= \underbrace{ \frac{ f(X(t+m),t+m) - f(X(t+m),t) }{ m } }
	_{ \textstyle \eqqcolon \psi_m(t) }
	+ \underbrace{ \frac{ f(X(t+m),t) - f(X(t),t) }{ m } }
	_{ \textstyle \eqqcolon \chi_m(t) } .
\end{multline*}

Fix $r,t \in (0,1)$. Note that
\begin{equation*}
	\lim_{ m \downarrow 0 } \int_r^t \chi_m
	= \left. \frac{\dd}{\dd m} \int_r^t f(X(s+m),s) \dd s \right|_{m=0}
\end{equation*}
whenever the limit exists. Our task is to show that $\{ \int_r^t \chi_m \}_{ m > 0 }$ is convergent as $m \downarrow 0$ with limit
\begin{equation*}
	V_X(t) - V_X(r) - \int_r^t f_2(X(s),s) \dd s .
\end{equation*}

$\{ \psi_m \}_{ m > 0 }$ need not converge a.e. under the \hyperref[assumption:basic]{basic assumptions}.%
	\footnote{This remains true even under much stronger assumptions. For example, equi-differentiability of $\{ f(x,\cdot) \}_{ x \in \mathcal{X} }$ is not enough: a counter-example is $\mathcal{X} = [0,1]$, $f(x,t) = (t-x) \1_\Q(x)$ and $X(t)=t$. (Here $\1_\Q(x)=1$ if $x$ is rational and $=0$ otherwise.)
	In this case $\psi_m(t) = \1_\Q(t+m)$, which is nowhere convergent as $m \downarrow 0$.}
But
\begin{equation*}
	\psi^\star_m(t) \coloneqq \frac{ f(X(t),t) - f(X(t),t-m) }{ m } 
\end{equation*}
converges pointwise to $t \mapsto f_2(X(t),t)$, and by a change of variable,
\begin{equation*}
	\int_r^t \psi_m
	= \int_{r+m}^{t+m} \psi^\star_m
	= \int_r^t \psi^\star_m
	+ \left( \int_t^{t+m} \psi^\star_m
	- \int_r^{r+m} \psi^\star_m \right)
	= \int_r^t \psi^\star_m + \oo(1) ,
\end{equation*}
where the bracketed terms vanish as $m \downarrow 0$ because $\{ \psi^\star_m \}_{ m > 0 }$ is uniformly integrable by the \hyperref[assumption:basic]{basic assumptions}.

By absolute continuity of $V_X$ and the \hyperref[lemma:FitzpatrickHunt]{AC--UI lemma} in \cref{app:theory:bckg}, $\{ \phi_m \}_{ m > 0 }$ is uniformly integrable and converges a.e. to $V_X'$ as $m \downarrow 0$.
Since $\{ \psi^\star_m \}_{ m > 0 }$ is uniformly integrable and converges pointwise to $t \mapsto f_2(X(t),t)$, it follows that
\begin{multline*}
	\lim_{ m \downarrow 0 } \int_r^t \chi_m
	= \lim_{ m \downarrow 0 } \int_r^t [ \phi_m - \psi_m ]
	= \lim_{ m \downarrow 0 } \int_r^t [ \phi_m - \psi^\star_m ]
	\\
	= \int_r^t \lim_{ m \downarrow 0 } [ \phi_m - \psi^\star_m ] 
	= \int_r^t \left[ V_X'(s) - f_2(X(s),s) \right] \dd s ,
\end{multline*}
where the third equality holds by the \hyperref[theorem:Vitali]{Vitali convergence theorem}.
Since the last expression is well-defined, this shows $\{ \int_r^t \chi_m \}_{ m > 0 }$ to be convergent as $m \downarrow 0$.
And because $V_X$ is absolutely continuous, the value of the limit is
\begin{equation*}
	\lim_{ m \downarrow 0 } \int_r^t \chi_m
	= V_X(t) - V_X(r) - \int_r^t f_2(X(s),s) \dd s 
\end{equation*}
by \hyperref[theorem:LFTC]{Lebesgue's fundamental theorem of calculus}.
\qed

%%%%%%%%%%%%%%%%%%%%%%
\subsection{A characterisation of absolute continuity of the value}
\label{app:theory:AC-of-V_charac}
%%%%%%%%%%%%%%%%%%%%%%

The following lemma characterises the absolute-continuity-of-$V_X$ condition that appears in the \hyperref[theorem:env]{main theorem} (§\ref{sec:result:mainthm}, p. \pageref{theorem:env}). Apart from its independent interest, it is needed
for the proofs in \cref{app:theory:necessity_pf,app:theory:classical_lemma} below.

\begin{lemma}
    \label{lemma:alt_V_abs_cont}
	Under the \hyperref[assumption:basic]{basic assumptions}, the following are equivalent:
	\begin{enumerate}
		
		\item \label{bullet:alt_V_abs_cont:abs}
		$V_X(t) \coloneqq f(X(t),t)$ is absolutely continuous.

		\item \label{bullet:alt_V_abs_cont:UI}
		The family $\{ \chi_m \}_{m>0}$ is uniformly integrable, where
		\begin{equation*}
			\chi_m(t) 
			\coloneqq \frac{ f(X(t+m),t) - f(X(t),t) }{ m } .
		\end{equation*}
	
	\end{enumerate}
\end{lemma}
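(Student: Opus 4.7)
The plan is to exploit the decomposition of the divided difference of $V_X$ that is central to the proof of the \hyperref[lemma:identity_lemma]{identity lemma}. For $t \in [0,1)$ and $m \in (0, 1-t]$, let
\begin{equation*}
	\phi_m(t) \coloneqq \frac{V_X(t+m) - V_X(t)}{m}
	\qquad \text{and} \qquad
	\psi_m(t) \coloneqq \frac{f(X(t+m), t+m) - f(X(t+m), t)}{m} ,
\end{equation*}
so that $\phi_m = \psi_m + \chi_m$ identically. I would first observe that $\{\psi_m\}_{m > 0}$ is uniformly integrable: the pointwise bound $\abs*{\psi_m(t)} \le \sup_{x \in \mathcal{X}} \abs*{ \frac{f(x, t+m) - f(x, t)}{m} }$ identifies $\abs*{\psi_m}$ with (the absolute value of) a member of the UI family appearing in the definition of absolute equi-continuity.

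For the direction \ref{bullet:alt_V_abs_cont:abs}$\Rightarrow$\ref{bullet:alt_V_abs_cont:UI}, absolute continuity of $V_X$ implies its continuity, so the \hyperref[lemma:FitzpatrickHunt]{AC--UI lemma} delivers uniform integrability of $\{\phi_m\}_{m > 0}$. Since $\chi_m = \phi_m - \psi_m$ is the difference of two UI families, it too is UI. For the converse \ref{bullet:alt_V_abs_cont:UI}$\Rightarrow$\ref{bullet:alt_V_abs_cont:abs}, combining UI of $\{\chi_m\}_{m > 0}$ with UI of $\{\psi_m\}_{m > 0}$ yields UI of $\{\phi_m\}_{m > 0}$. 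Running the \hyperref[lemma:FitzpatrickHunt]{AC--UI lemma} in reverse then yields absolute continuity of $V_X$, provided its continuity is first established. For this I would use the decomposition $V_X(t+m) - V_X(t) = m\psi_m(t) + m\chi_m(t)$: the first summand vanishes uniformly in $t$ as $m \to 0$ by the uniform equi-continuity of $\{f(x, \cdot)\}_{x \in \mathcal{X}}$ supplied by \cref{lemma:AEC_HK}, while the second must be shown to vanish using UI of $\{\chi_m\}$.

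The principal obstacle is the continuity step in the converse direction: UI of $\{\chi_m\}_{m > 0}$ yields integral rather than pointwise control, and extracting vanishing of the pointwise quantity $m\chi_m(t)$ from integral UI estimates requires a subtler measure-theoretic argument than the analogous estimate for $m\psi_m(t)$.
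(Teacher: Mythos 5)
Your decomposition $\phi_m = \psi_m + \chi_m$ and appeal to the \hyperref[lemma:FitzpatrickHunt]{AC--UI lemma} coincide exactly with the paper's proof. The paper writes it more tersely: it establishes that $\{\chi_m\}_{m>0}$ is UI iff $\{\phi_m\}_{m>0}$ is UI by adding and subtracting the UI family $\{\psi_m\}_{m>0}$, and then asserts flatly that condition \ref{bullet:alt_V_abs_cont:abs} is equivalent to UI of $\{\phi_m\}_{m>0}$ ``by the AC--UI lemma.'' But that lemma is stated only for \emph{continuous} $\phi$, and in the direction ($\{\phi_m\}_{m>0}$ UI $\Rightarrow$ $V_X$ absolutely continuous) one needs continuity of $V_X$ before the lemma can be invoked --- precisely the step you flag as the principal obstacle. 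The paper passes over this silently; you do not, which is to your credit.

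Your worry is, however, not merely a technical one to be resolved by ``a subtler measure-theoretic argument'': no such argument exists, because UI of $\{\chi_m\}_{m>0}$ controls $\int_T \lvert \chi_m \rvert$ and is therefore blind to the values of $V_X$ on Lebesgue-null sets, so it cannot rule out a removable discontinuity. Concretely, take $\mathcal{X} = \{0,1\}$, $f(x,t) = x$ and $X = \mathbf{1}_{\{1/2\}}$. The \hyperref[assumption:basic]{basic assumptions} hold trivially (with $f_2 \equiv 0$), and $\chi_m(t) = [X(t+m) - X(t)]/m$ is nonzero at only two points, hence zero a.e., so $\{\chi_m\}_{m>0}$ is UI; yet $V_X = \mathbf{1}_{\{1/2\}}$ is discontinuous and therefore not absolutely continuous. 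So condition \ref{bullet:alt_V_abs_cont:UI} can hold while \ref{bullet:alt_V_abs_cont:abs} fails. To make the equivalence valid one must either add continuity of $V_X$ as a hypothesis, or work modulo a.e.\ equality; the step you were trying to supply in your converse direction is the missing hypothesis, not a provable consequence. (Note that the uniform bound on $m\psi_m(t)$ you invoke via \Cref{lemma:AEC_HK} is fine; it is only $m\chi_m(t)$ that cannot be controlled pointwise from UI.)
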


In the classical case, \ref{bullet:alt_V_abs_cont:UI} is imposed (it follows from the \hyperref[assumption:classical]{classical assumptions}, by \Cref{lemma:classical} in \cref{app:theory:classical_lemma} below). In the modern case, \ref{bullet:alt_V_abs_cont:abs} arises within the theorem. Both are clearly joint restrictions on $f$ and $X$.%
	\footnote{As emphasised by \textcite{MilgromSegal2002}, however,
	any \emph{optimal} $X$ satisfies \ref{bullet:alt_V_abs_cont:abs} provided $f$ satisfies the \hyperref[assumption:basic]{basic assumptions}.
	See \cref{app:theory:necessity_pf} below for a proof.}

\begin{proof}
	Define $\{ \phi_m \}_{m>0}$ and $\{ \psi_m \}_{m>0}$ as in the proof of the \hyperref[lemma:identity_lemma]{identity lemma} (\cref{app:theory:pf_identity_lemma}).
	$\{ \psi_m \}_{ m > 0 }$ is uniformly integrable by the \hyperref[assumption:basic]{basic assumption} of absolute equi-continuity. By the \hyperref[lemma:FitzpatrickHunt]{AC--UI lemma} in \cref{app:theory:bckg}, \ref{bullet:alt_V_abs_cont:abs} is equivalent to $\{ \phi_m \}_{ m > 0 }$ being uniformly integrable.

	Suppose that $\{ \chi_m \}_{ m > 0 }$ is uniformly integrable, and fix $\eps > 0$. Let $\delta > 0$ meet the $\eps/2$-challenge for both $\{ \psi_m \}_{ m > 0 }$ and $\{ \chi_m \}_{ m > 0 }$;
	then for any open $T \subseteq [0,1]$ of measure $<\delta$ and any $m>0$, we have
	\begin{equation*}
		\int_T \abs*{ \phi_m }
		\leq \int_T \abs*{ \psi_m } 
		+ \int_T \abs*{ \chi_m }
		< \frac{\eps}{2} + \frac{\eps}{2}
		= \eps ,
	\end{equation*}
	showing that $\{ \phi_m \}_{ m > 0 }$ is uniformly integrable.

	An almost identical argument establishes that uniform integrability of $\{ \phi_m \}_{ m > 0 }$ implies uniform integrability of $\{ \chi_m \}_{ m > 0 }$.
\end{proof}

%%%%%%%%%%%%%%%%%%%%%%
\subsection{Proof of the \texorpdfstring{\hyperref[lemma:necessity]{necessity lemma}}{necessity lemma} (§\ref{sec:result:oFOC}, p. \pageref{lemma:necessity})}
\label{app:theory:necessity_pf}
%%%%%%%%%%%%%%%%%%%%%%

\begin{lemma}
	\label{lemma:AEC_AC}
	If $\{ f(x,\cdot) \}_{ x \in \mathcal{X} }$ is absolutely equi-continuous,
	then the value $V_X(t) \coloneqq f(X(t),t)$
	of any optimal $X : [0,1] \to \mathcal{X}$
	is absolutely continuous.
\end{lemma}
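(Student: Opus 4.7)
The plan is to exploit optimality to write $V_X$ as a pointwise supremum, $V_X(t) = \sup_{x \in \mathcal{X}} f(x,t)$, and then transfer absolute equi-continuity of the family $\{f(x,\cdot)\}_{x \in \mathcal{X}}$ to absolute continuity of the supremum.

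\textbf{Step 1: an elementary sup-lip estimate.} First I would record that for all $r,t \in [0,1]$,
\begin{equation*}
	\abs*{ V_X(t) - V_X(r) }
	\leq \sup_{x \in \mathcal{X}} \abs*{ f(x,t) - f(x,r) } .
\end{equation*}
This is a standard $\eta$-approximation argument: for any $\eta>0$, choose $x^\star \in \mathcal{X}$ with $f(x^\star,t) > V_X(t) - \eta$; then $V_X(t) - V_X(r) \leq f(x^\star,t) + \eta - f(x^\star,r) \leq \sup_x\abs{f(x,t)-f(x,r)} + \eta$, and symmetrically for the other direction. Letting $\eta \downarrow 0$ gives the bound.

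\textbf{Step 2: a sup-integral estimate.} Next I would sharpen the estimate from the proof of the \hyperref[lemma:AEC_HK]{AEC housekeeping lemma}. By that lemma each $f(x,\cdot)$ is absolutely continuous, so $f(x,t) - f(x,r) = \int_r^t f_2(x,s) \dd s$, whence $\abs*{f(x,t)-f(x,r)} \leq \int_r^t \abs*{f_2(x,s)} \dd s$. Since $f_2(x,s) = \lim_{m \downarrow 0} [f(x,s+m)-f(x,s)]/m$ wherever the derivative exists, we have $\abs*{f_2(x,s)} \leq \liminf_{m \downarrow 0} \Phi_m(s)$ for a.e. $s$, where
\begin{equation*}
	\Phi_m(s) \coloneqq \sup_{x \in \mathcal{X}} \abs*{ \frac{ f(x,s+m) - f(x,s) }{m} } .
\end{equation*}
Taking the supremum over $x \in \mathcal{X}$ and combining with Step 1 yields
\begin{equation*}
	\abs*{ V_X(t) - V_X(r) }
	\leq \int_r^t \liminf_{m \downarrow 0} \Phi_m(s) \dd s .
\end{equation*}

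\textbf{Step 3: absolute continuity via uniform integrability.} Fix $\eps>0$. By absolute equi-continuity, the family $\{\Phi_m\}_{m>0}$ is uniformly integrable, so there is $\delta > 0$ such that $\int_T \Phi_m < \eps$ for every open $T \subseteq [0,1]$ of measure $<\delta$ and every $m>0$; Fatou's lemma then gives $\int_T \liminf_m \Phi_m \leq \eps$. For any finite collection of disjoint intervals $\{(r_n,t_n)\}_{n=1}^N$ in $[0,1]$ with $\sum_n (t_n - r_n) < \delta$, setting $T = \bigcup_n (r_n,t_n)$ and summing the estimate of Step 2 gives $\sum_n \abs*{V_X(t_n) - V_X(r_n)} \leq \int_T \liminf_m \Phi_m \leq \eps$, which is the definition of absolute continuity of $V_X$.

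\textbf{Main obstacle.} The subtle point is that the estimate $\sup_x\abs{\phi_x(t)-\phi_x(r)} \leq \sup_{m>0} \int_r^t \Phi_m$ from the proof of the \hyperref[lemma:AEC_HK]{AEC housekeeping lemma} is not directly usable, because the $\sup_{m>0}$ outside the integral does not additively decompose over disjoint intervals, and this is precisely what is needed for absolute continuity. Replacing $\sup_{m>0}$ by $\liminf_{m \downarrow 0}$ \emph{inside} the integral — justified by Fatou — gives a bound that does decompose additively, which is the crux of the argument.
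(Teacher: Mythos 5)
Your proof is correct and takes a genuinely different route from the paper's. Both proofs open with the same sup-lip estimate $\abs{V_X(t)-V_X(r)} \leq \sup_x \abs{f(x,t)-f(x,r)}$ (the paper uses it to bound $[V_X(s+m)-V_X(s)]/m$ by $D_m(s) := \sup_x \abs{[f(x,s+m)-f(x,s)]/m}$), but they then diverge. The paper works with integral averages: it shows $\sum_n \abs{ \frac{1}{m}\int_{t_n}^{t_n+m} V_X - \frac{1}{m}\int_{r_n}^{r_n+m} V_X } \leq \int_T D_m$ uniformly in $m>0$, invokes uniform integrability to bound the right side, and only at the very end lets $m \downarrow 0$, using the mean value theorem together with the uniform continuity of $V_X$ (itself a consequence of \Cref{lemma:AEC_HK}). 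You instead pass through the derivative: you use absolute continuity of each $f(x,\cdot)$ (again from \Cref{lemma:AEC_HK}) to write $\abs{f(x,t)-f(x,r)} \leq \int_r^t \abs{f_2(x,s)}\,\dd s$, dominate $\abs{f_2(x,s)}$ by a limit of the $\Phi_m$'s, and apply Fatou to obtain a single integrable dominating function. This buys you a more direct verification of absolute continuity (no need to separately establish continuity of $V_X$ or invoke the mean value theorem), at the cost of a small technical care point: Fatou's lemma applies to sequences, not to the uncountable family $\{\Phi_m\}_{m>0}$, and measurability of $\liminf_{m\downarrow 0}\Phi_m$ is not immediate. The fix is trivial — fix a sequence $m_k \downarrow 0$ and set $g := \liminf_k \Phi_{m_k}$, which is measurable; since $f(x,\cdot)$ is differentiable everywhere (a \hyperref[assumption:basic]{basic assumption}), $\abs{f_2(x,s)} = \lim_k \abs{[f(x,s+m_k)-f(x,s)]/m_k} \leq g(s)$ for all $x$ and $s$, and Fatou applied along $(m_k)$ gives $\int_T g \leq \eps$. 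With that replacement your argument goes through cleanly.
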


\begin{proof}
	Let $X$ be optimal.
	Then for any $r<t$ in $[0,1)$ and $m \in (0,1-t]$,
	\begin{multline*}
		\abs*{ \frac{1}{m} \int_t^{t+m} V_X
		- \frac{1}{m} \int_r^{r+m} V_X }
		= \abs*{ \int_r^t \frac{ V_X(s+m) - V_X(s) }{ m } \dd s }
		\\
		\leq \int_r^t \abs*{ \frac{ V_X(s+m) - V_X(s) }{ m } } \dd s
		\leq \int_r^t D_m ,
	\end{multline*}
	where
	\begin{equation*}
		D_m(s) \coloneqq \sup_{x \in \mathcal{X}}
		\abs*{ \frac{ f(x,s+m) - f(x,s) }{ m } } .
	\end{equation*}

	Fix an $\eps>0$.
	The absolute equi-continuity of $\{ f(x,\cdot) \}_{x \in \mathcal{X}}$ provides that $\{ D_m \}_{m>0}$ is uniformly integrable,
	so that there is a $\delta>0$ such that for any open $T \subseteq [0,1]$ of measure $<\delta$, we have $\int_T D_m < \eps/2$ for every $m > 0$.
	Thus for any finite collection $\{ (r_n,t_n) \}_{n=1}^N$ of disjoint open intervals of $[0,1]$ whose union $T$ has measure $<\delta$, we have
	\begin{equation*}
		\sum_{n=1}^N \abs*{ \frac{1}{m} \int_{t_n}^{t_n+m} V_X
		- \frac{1}{m} \int_{r_n}^{r_n+m} V_X }
		\leq \int_T D_m
		< \eps/2
		\quad \text{for every $m>0$.}
	\end{equation*}
	$V_X$ is (uniformly) continuous
	since $\{ f(x,\cdot) \}_{x \in \mathcal{X}}$ is uniformly equi-continuous by \Cref{lemma:AEC_HK} in \cref{app:theory:AEC_HK}.%
		\footnote{For any $\eps>0$,
		the uniform equi-continuity of $\{ f(x,\cdot) \}_{x \in \mathcal{X}}$ delivers a $\delta>0$ such that
		$\abs*{t-r} < \delta$ implies
		$\abs*{ V_X(t) - V_X(r) }
		\leq \sup_{x \in \mathcal{X}} \abs*{ f(x,t) - f(x,r) }
		< \eps$.}
	Thus letting $m \downarrow 0$ yields
	\begin{equation*}
		\sum_{n=1}^N \abs*{ V_X(t_n) - V_X(r_n) }
		\leq \eps/2 < \eps 
	\end{equation*}
	by the mean-value theorem, showing $V_X$ to be absolutely continuous.
\end{proof}

\begin{proof}[Proof of the {\hyperref[lemma:necessity]{necessity lemma}}]
	Let $X$ be optimal, and fix $r<t$ in $[0,1]$.
	$V_X$ is absolutely continuous by \Cref{lemma:AEC_AC}.
	Define $\phi_{r,t} : [-r,1-t] \to \R$ by
	\begin{equation*}
		\phi_{r,t}(m) \coloneqq \int_r^t f(X(s+m),s) \dd s 
		\quad \text{for each $m \in [-r,1-t]$.\footnotemark}
			\footnotetext{The map $s \mapsto f(X(s+m),s)$ is integrable because $\abs*{f(X(s+m),s)} \leq \abs*{ V_X(s) } + \abs*{ f(X(s+m),s) - f(X(s),s) }$,
			where the former term is continuous,
			and the latter is integrable by \Cref{lemma:alt_V_abs_cont} in \cref{app:theory:AC-of-V_charac}.}
	\end{equation*}
	$\phi_{r,t}'(0)$ exists by the \hyperref[lemma:identity_lemma]{identity lemma} (§\ref{sec:result:mainthm}, p. \pageref{lemma:identity_lemma}).
	To show that it is zero, observe that for any $s \in (r,t)$ and $m \in (0,\min\{s,1-s\}]$, optimality requires
	\begin{equation*}
		\frac{ f(X(s+m),s) - f(X(s),s) }{ m }
		\leq 0 \leq
		\frac{ f(X(s-m),s) - f(X(s),s) }{ -m } .
	\end{equation*}
	Integrating over $(r,t)$ and letting $m \downarrow 0$ yields $\phi_{r,t}'(0) \leq 0 \leq \phi_{r,t}'(0)$.
\end{proof}

%%%%%%%%%%%%%%%%%%%%%%
\subsection{A lemma under the \texorpdfstring{\hyperref[assumption:classical]{classical assumptions}}{classical assumptions}}
\label{app:theory:classical_lemma}
%%%%%%%%%%%%%%%%%%%%%%

The following result is used in the proof of the \hyperref[lemma:housekeeping]{housekeeping lemma} (§\ref{sec:result:oFOC}, p. \pageref{lemma:housekeeping}),
as well as in the proof of the \hyperref[theorem:classical_env]{classical envelope theorem and converse} in \cref{app:theory:classical_env_pf} below.

\begin{lemma}
	\label{lemma:classical}
	Fix a decision rule $X : [0,1] \to \mathcal{X}$, and let
	\begin{equation*}
		\chi_m(t) \coloneqq \frac{ f(X(t+m),t) - f(X(t),t) }{ m } .
	\end{equation*}
	\begin{enumerate}
		
		\item \label{bullet:classical:UI}
		Under the \hyperref[assumption:basic]{basic} and \hyperref[assumption:classical]{classical} assumptions, $\{ \chi_m \}_{ m > 0 }$ is uniformly integrable.

		\item \label{bullet:classical:UI_AC}
		Under the \hyperref[assumption:basic]{basic assumptions}, the following are equivalent:
		\begin{enumerate}
		
			\item $\{ \chi_m \}_{ m > 0 }$ is uniformly integrable and convergent a.e. as $m \downarrow 0$.

			\item $V_X(t) \coloneqq f(X(t),t)$ is absolutely continuous, and
			the derivative $\frac{\dd}{\dd m} f(X(t+m),t) \bigr|_{m=0}$
			exists for a.e. $t \in (0,1)$.
		
		\end{enumerate}
	
	\end{enumerate}
\end{lemma}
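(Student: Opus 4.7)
My plan is to treat the two parts separately, dispensing quickly with \ref{bullet:classical:UI} and then reducing \ref{bullet:classical:UI_AC} to \Cref{lemma:alt_V_abs_cont} from \cref{app:theory:AC-of-V_charac}.

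For \ref{bullet:classical:UI}, the \hyperref[assumption:classical]{classical assumptions} provide that $\mathcal{X}$ is convex, $f_1$ is bounded (say by $M$), and $X$ is Lipschitz continuous (say with constant $L$). Since the chord from $X(t)$ to $X(t+m)$ lies in $\mathcal{X}$, the fundamental theorem of calculus gives
\[
    f(X(t+m),t) - f(X(t),t)
    = \int_0^1 f_1\bigl( X(t) + \lambda [ X(t+m) - X(t) ], t \bigr) \cdot [ X(t+m) - X(t) ] \, \dd \lambda ,
\]
whence $\abs*{\chi_m(t)} \leq M L$ uniformly in $t$ and $m$. A uniformly bounded family on the finite-measure space $[0,1]$ is uniformly integrable, so this settles \ref{bullet:classical:UI}.

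For \ref{bullet:classical:UI_AC}, observe that $\chi_m(t)$ converges as $m \downarrow 0$ if and only if the (right) derivative $\smash{\frac{\dd}{\dd m} f(X(t+m),t) \bigr|_{m=0}}$ exists at $t$, by definition. It therefore remains only to argue that, under the \hyperref[assumption:basic]{basic assumptions}, $\{\chi_m\}_{m>0}$ is uniformly integrable if and only if $V_X$ is absolutely continuous---but this is precisely \Cref{lemma:alt_V_abs_cont}. Since that lemma is proved in \cref{app:theory:AC-of-V_charac} without invoking the present one, no circularity arises, and I do not anticipate any real obstacle: part \ref{bullet:classical:UI} is a direct bound-and-integrate argument, while part \ref{bullet:classical:UI_AC} is essentially a bookkeeping combination of a definitional equivalence with the appendix's characterisation lemma.
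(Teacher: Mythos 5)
Your proposal is correct and follows essentially the same route as the paper: part \ref{bullet:classical:UI} is the identical chord-integral bound $\abs*{\chi_m} \leq ML$ (the paper writes it in vector form via Cauchy--Schwarz, with $\norm{K}_2 L$, but the substance is the same), and part \ref{bullet:classical:UI_AC} is exactly the same reduction to \Cref{lemma:alt_V_abs_cont} together with the definitional identification of a.e.\ convergence of $\{\chi_m\}_{m>0}$ with a.e.\ existence of the derivative.
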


\begin{proof}
	For \ref{bullet:classical:UI}, write $K$ for the vector of non-negative constants that bounds $f_1$,
	and $L \geq 0$ for the Lipschitz constant of $X$.
	Let $\norm*{\cdot}_2$ denote the Euclidean norm.
	For any $t \in [0,1)$ and $m \in (0,1-t]$,
	writing $x_\omega \coloneqq (1-\omega) X(t) + \omega X(t+m)$ for $\omega \in [0,1]$,
	we have by the Cauchy--Schwarz inequality that
	\begin{multline*}
		\abs*{ \chi_m(t) }
		= \abs*{ \frac{1}{m} \int_0^1
		\Bigl( f_1\left( x_\omega, t \right)
		\cdot \left[ X(t+m) - X(t) \right] \Bigr) \dd \omega }
		\\
		\leq \frac{1}{m} \int_0^1
		\Bigl( \norm*{ f_1\left( x_\omega, t \right) }_2
		\times \norm*{ X(t+m) - X(t) }_2 \Bigr) \dd \omega
		\leq \frac{1}{m} \norm*{K}_2 \times Lm
		= \norm*{K}_2 L .
	\end{multline*}
	Thus $\{ \chi_m \}_{ m > 0 }$ is uniformly bounded,
	hence uniformly integrable.

	For \ref{bullet:classical:UI_AC}, absolute continuity of $V_X$ is equivalent to uniform integrability of $\{ \chi_m \}_{ m > 0 }$ by \Cref{lemma:alt_V_abs_cont} in \cref{app:theory:AC-of-V_charac}, and a.e. existence of $\frac{\dd}{\dd m} f(X(t+m),t) |_{m=0}$ is definitionally equivalent to a.e. convergence of $\{ \chi_m \}_{ m > 0 }$.
\end{proof}

%%%%%%%%%%%%%%%%%%%%%%
\subsection{Proof of the \texorpdfstring{\hyperref[theorem:classical_env]{classical envelope theorem and converse}}{classical envelope theorem and converse} (§\ref{sec:background:classical})}
\label{app:theory:classical_env_pf}
%%%%%%%%%%%%%%%%%%%%%%

\begin{proof}
	Fix a Lipschitz continuous decision rule $X : [0,1] \to \mathcal{X}$.
	By \Cref{lemma:classical} in \cref{app:theory:classical_lemma}, $V_X(t) \coloneqq f(X(t),t)$ is absolutely continuous, hence differentiable a.e.
	The map $r \mapsto f( X(r), t )$ is differentiable a.e. by the \hyperref[assumption:classical]{classical assumptions}, and $t \mapsto f( X(r), t )$ is differentiable by the \hyperref[assumption:basic]{basic assumptions}.
	Hence the a.e.-defined derivative of $V_X$ obeys the differentiation identity
	\begin{equation*}
		V_X'(t) = \left. \frac{\dd}{\dd m} f( X(t+m), t ) \right|_{m=0} + f_2( X(t), t )
		\quad \text{for a.e. $t \in (0,1)$} .
	\end{equation*}
	It follows that the first-order condition a.e. is equivalent to
	\begin{equation*}
		V_X'(t) = f_2( X(t), t )
		\quad \text{for a.e. $t \in (0,1)$} ,
	\end{equation*}
	which in turn is equivalent to the envelope formula by \hyperref[theorem:LFTC]{Lebesgue's fundamental theorem of calculus}.
\end{proof}

By inspection, the proof requires precisely absolute continuity of $V_X$ (so that the envelope formula can be satisfied) and a.e. existence of $\frac{\dd}{\dd m} f(X(t+m),t) |_{m=0}$ (so that the first-order condition a.e. is well-defined).
Part \ref{bullet:classical:UI_AC} of \Cref{lemma:classical} in \cref{app:theory:classical_lemma} therefore tells us that the \hyperref[assumption:classical]{classical assumptions} can be weakened to uniform integrability and a.e. convergence of $\{ \chi_m \}_{ m > 0 }$, and no further.
For $f$ non-trivial, the uniform integrability part involves a strong continuity requirement on $X$.%
	\footnote{For example, consider $\mathcal{X} = [0,1]$, $f(x,t) = x$ and $X(t)=\1_{[r,1]}$, where $r \in (0,1)$.
	Then given $m>0$, we have $\chi_m(t) = 1/m$ for all $t \in [r-m,r]$.
	Suppose toward a contradiction that $\{ \chi_m \}_{ m > 0 }$ is uniformly integrable, and let $\delta>0$ meet the $\eps$-challenge for $\eps \in (0,1)$;
	then for all $m \in ( 0, \delta/2 )$, we have
	%
	% \begin{equation*}
	% 	\int_{r-\delta/2}^{r+\delta/2} \abs*{ \chi_m } 
	% 	\geq \int_{r-m}^{r} \abs*{ \chi_m } 
	% 	= m/m
	% 	= 1 > \eps ,
	% \end{equation*}
	$\smash{%
	\int_{\mathrlap{\raisebox{0.5pt}{$\scriptstyle r-\delta/2$}}}%
	{\vphantom{\int}}^{\raisebox{-1pt}{$\scriptstyle r+\delta/2$}}
	\abs*{ \chi_m } 
	\geq 
	\int_{\mathrlap{\raisebox{0.5pt}{$\scriptstyle r-m$}}}%
	{\vphantom{\int}}^{\raisebox{-0.5pt}{$\scriptstyle r\mathrel{\phantom{-n}}$}}
	\abs*{ \chi_m } 
	= m/m
	= 1 > \eps%
	}$,
	which is absurd. This example clearly generalises: the gist is that uniform integrability of $\{ \chi_m \}_{ m > 0 }$ is incompatible with non-removable discontinuities in $X$ unless $f$ is trivial.}

%%%%%%%%%%%%%%%%%%%%%%
%%%%%%%%%%%%%%%%%%%%%%
\section*{Appendix to the application (§\ref{sec:app})}
\label{app:mech}
\addcontentsline{toc}{section}{Appendix to the application}
%%%%%%%%%%%%%%%%%%%%%%
%%%%%%%%%%%%%%%%%%%%%%

%%%%%%%%%%%%%%%%%%%%%%
\subsection{Proof of the \texorpdfstring{\hyperref[theorem:incr_impl]{implementability theorem}}{implementability theorem} (§\ref{sec:app:incr_impl}, p. \pageref{theorem:incr_impl})}
\label{app:mech:pf_incr_impl}
%%%%%%%%%%%%%%%%%%%%%%

We state two lemmata in §\ref{app:mech:pf_incr_impl:solns_env}--§\ref{app:mech:pf_incr_impl:approx}, then prove the theorem in §\ref{app:mech:pf_incr_impl:pf}.

%%%%%%%%%%%%%%%%%%%%%%
\subsubsection{Solutions of the envelope formula}
\label{app:mech:pf_incr_impl:solns_env}
%%%%%%%%%%%%%%%%%%%%%%

In the first step of the argument in §\ref{app:mech:pf_incr_impl:pf} below, we are given an allocation $Y$, and wish to choose a payment schedule $P$ such that $(Y,P)$ satisfies the envelope formula.
The following asserts that this can be done:

\begin{namedthm}[Existence lemma.]
	\label{lemma:existence}
	Assume that for all $(y,t) \in \mathcal{Y} \times [0,1]$, $f(y,\cdot,t)$ is strictly decreasing, continuous and onto $\R$.
	Further assume that the type derivative $f_3$ exists and is bounded, and that $f_3(y,\cdot,t)$ is continuous for all $(y,t) \in \mathcal{Y} \times [0,1]$.
	Then for any $k \in \R$ and any allocation $Y : [0,1] \to \mathcal{Y}$ such that $t \mapsto f(Y(t),p,t)$ and $t \mapsto f_3(Y(t),p,t)$ are Borel-measurable for every $p \in \R$, there exists a payment schedule $P : [0,1] \to \R$ such that $(Y,P)$ satisfies the envelope formula with $V_{Y,P}(0)=k$.
\end{namedthm}

\begin{remark}
	\label{remark:existence_elsewhere}
	The following corollary may prove useful elsewhere:
	suppose in addition that $\mathcal{Y}$ is equipped with some topology such that $f(\cdot,p,t)$ and $f_3(\cdot,p,t)$ are Borel-measurable and $f_3(y,p,\cdot)$ is continuous.
	Then for any Borel-measurable allocation $Y : [0,1] \to \mathcal{Y}$, there is a payment schedule $P$ such that $(Y,P)$ satisfies the envelope formula.
\end{remark}

The \hyperref[lemma:existence]{existence lemma} is immediate from the following abstract result by letting $\phi(p,t) \coloneqq f( Y(t), p, t )$ and $\psi(p,t) \coloneqq f_3( Y(t), p, t )$.

\begin{lemma}
	\label{lemma:existence_general}
	Let $\phi$ and $\psi$ be functions $\R \times [0,1] \to \R$. Suppose that $\phi( \cdot, t )$ is strictly decreasing, continuous, and onto $\R$ for every $t \in [0,1]$, and that $\psi$ is bounded with $\psi(\cdot,t)$ continuous for every $t \in [0,1]$.
	Further assume that $\phi(p,\cdot)$ and $\psi(p,\cdot)$ are Borel-measurable for each $p \in \R$.
	Then for any $k \in \R$, there is a function $P : [0,1] \to \R$ such that
	\begin{equation*}
		\phi( P(t), t ) = k + \int_0^t \psi( P(s), s ) \dd s
		\quad \text{for every $t \in [0,1]$} .
	\end{equation*}
\end{lemma}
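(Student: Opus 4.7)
The plan is to reduce the integral equation to a Carathéodory initial value problem. Since $\phi(\cdot,t)$ is a strictly decreasing continuous bijection of $\R$ onto $\R$ for each $t$, it has a continuous and strictly decreasing inverse $\phi^{-1}(\cdot,t):\R\to\R$. Setting $V(t)\coloneqq\phi(P(t),t)$, the desired equation becomes
\begin{equation*}
	V(t) = k + \int_0^t \psi\bigl(\phi^{-1}(V(s),s),\,s\bigr)\,\dd s
	\quad\text{for every }t\in[0,1],
\end{equation*}
and, conversely, from any such $V$ we recover $P$ by $P(t)\coloneqq \phi^{-1}(V(t),t)$. Defining $\Psi(v,t)\coloneqq\psi(\phi^{-1}(v,t),t)$, the task is thus to produce an absolutely continuous $V$ with $V(0)=k$ satisfying $V'(t)=\Psi(V(t),t)$ a.e.

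First I would verify that $\Psi$ is a bounded Carathéodory function. For each $t$, the map $v\mapsto\phi^{-1}(v,t)$ is continuous, so $v\mapsto\Psi(v,t)$ is continuous by continuity of $\psi(\cdot,t)$. For measurability in $t$ at fixed $v$, observe that since $\phi(\cdot,t)$ is strictly decreasing, $\{t:\phi^{-1}(v,t)>p\}=\{t:v<\phi(p,t)\}$, which is Borel by Borel-measurability of $\phi(p,\cdot)$; hence $t\mapsto\phi^{-1}(v,t)$ is Borel-measurable. Combining this with the Carathéodory property of $\psi$ (continuity in $p$ plus measurability in $t$), the composition $t\mapsto\psi(\phi^{-1}(v,t),t)$ is Borel-measurable. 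Finally, $|\Psi|$ is globally bounded by the given bound $M$ on $\psi$.

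Next I would invoke the Carathéodory existence theorem for ordinary differential equations to produce an absolutely continuous $V:[0,1]\to\R$ with $V(0)=k$ and $V'(t)=\Psi(V(t),t)$ a.e.\ on $(0,1)$. Because $|\Psi|\le M$, any local solution satisfies $|V(t)-k|\le Mt$, so the solution extends to the whole interval $[0,1]$ without blow-up. (An alternative is to apply the Schauder fixed-point theorem to the operator $W\mapsto k+\int_0^{\cdot}\Psi(W(s),s)\,\dd s$ on a suitable ball in $C[0,1]$, whose image lies in the compact set of $M$-Lipschitz functions through $(0,k)$ by Arzelà--Ascoli.) I would then set $P(t)\coloneqq\phi^{-1}(V(t),t)$ and check directly that $\phi(P(t),t)=V(t)=k+\int_0^t\psi(P(s),s)\,\dd s$, as required.

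The main obstacle I anticipate is verifying the measurability hypotheses of Carathéodory's theorem, specifically the Borel-measurability of $t\mapsto\Psi(v,t)$ for each fixed $v$; this is where the monotonicity trick for $\phi^{-1}$ and the Carathéodory property of $\psi$ (which together ensure joint measurability by composition) are essential. Once $\Psi$ is seen to be a bounded Carathéodory function on $\R\times[0,1]$, the rest is a standard application of ODE existence plus an algebraic rewriting.
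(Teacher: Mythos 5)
Your proposal is correct and follows essentially the same route as the paper: substitute $V=\phi(P(\cdot),\cdot)$, invert $\phi$ to obtain a bounded Carathéodory function $\Psi(v,t)=\psi(\phi^{-1}(v,t),t)$, apply Carathéodory's existence theorem, extend globally by boundedness, and recover $P=\phi^{-1}(V(\cdot),\cdot)$. Your monotonicity argument that $\{t:\phi^{-1}(v,t)>p\}=\{t:v<\phi(p,t)\}$ is a nice, explicit way to establish the Borel-measurability of $t\mapsto\phi^{-1}(v,t)$, which the paper's proof asserts more tersely.
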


\begin{proof}
	Since $\phi( \cdot, t )$ is strictly decreasing and continuous, it possesses a continuous inverse $\phi^{-1}(\cdot,t)$, well-defined on all of $\R$ since $\phi(\R,t) = \R$. We may therefore define a function $\chi : \R \times [0,1] \to \R$ by
	\begin{equation*}
		\chi( w, t ) 
		\coloneqq \psi\left( \phi^{-1}( w, t ), t \right) 
		\quad \text{for each $w \in \R$ and $t \in [0,1]$} .
	\end{equation*}
	$\chi(\cdot,t)$ is continuous since $\psi( \cdot, t )$ and $\phi^{-1}(\cdot,t)$ are,
	$\chi$ is bounded since $\psi$ is,
	and $\chi(w,\cdot)$ is Borel-measurable since $\psi(\cdot,t)$ is continuous and $\psi(p,\cdot)$ and $\phi^{-1}(w,\cdot)$ are Borel-measurable.
		% \footnote{The claim that $\phi^{-1}(w,\cdot)$ is Borel-measurable for each $w \in \R$ is equivalent to the assertion that $\{ t \in [0,1] : \phi^{-1}(w,t) > p \} = \{ t \in [0,1] : \phi(p,t) > w \}$ is a Borel set for any $p,w \in \R$. And it is, since $\phi(p,\cdot)$ is Borel-measurable for each $p \in \R$.}

	Fix $k \in \R$. Consider the integral equation
	\begin{equation*}
		W(t) = k + \int_0^t \chi( W(s), s ) \dd s
		\quad \text{for $t \in [0,1]$} ,
	\end{equation*}
	where $W$ is an unknown function $[0,1] \to \R$. Since $\chi(\cdot,t)$ is continuous and $\chi(w,\cdot)$ bounded and Borel-measurable, there is a local solution by Carathéodory's existence theorem;%
		\footnote{See e.g. Theorem 5.1 in \textcite[ch. 1]{Hale1980}.}
	call it $V$. By boundedness of $\chi$ and a comparison theorem,%
		\footnote{See e.g. Theorem 2.17 in \textcite{Teschl2012}.}
	$V$ can be extended to a solution on all of $[0,1]$.

	Now define $P(t) \coloneqq \phi^{-1}( V(t), t )$. For every $t \in [0,1]$, it satisfies
	\begin{equation*}
		\phi( P(t), t )
		= V(t) 
		= k + \int_0^t \chi( V(s), s ) \dd s 
		= k + \int_0^t \psi( P(s), s ) \dd s . 
		\qedhere 
	\end{equation*}	
\end{proof}

\begin{namedthm}[Uniqueness corollary.]
	\label{corollary:uniqueness}
	Under the hypotheses of the \hyperref[lemma:existence]{existence lemma}, 
	if in addition $\{ f_3(y,\cdot,t) \}_{(y,t) \in \mathcal{Y} \times [0,1]}$ is Lipschitz equi-continuous%
		\footnote{That is, there is an $L \geq 0$ such that $f_3(y,\cdot,t)$ is $L$-Lipschitz for every $(y,t) \in \mathcal{Y} \times [0,1]$.}
	and the monotonicity of $f(y,\cdot,t)$ is uniform in the sense that for some $M>0$,
	\begin{equation*}
		f(y,p,t) - f(y,p',t) \geq M (p'-p)
		\quad \text{for any $p<p'$ in $\R$, $y \in \mathcal{Y}$ and $t \in [0,1]$,}
	\end{equation*}
	then there is \emph{exactly one} payment schedule $P$ such that $(Y,P)$ satisfies the envelope formula with $V_{Y,P}(0)=k$,
	and this payment schedule may be computed via Picard's method.
\end{namedthm}

\begin{proof}
	Again let $\phi(p,t) \coloneqq f( Y(t), p, t )$ and $\psi(p,t) \coloneqq f_3( Y(t), p, t )$,
	and return to the proof of \Cref{lemma:existence_general}.
	The additional assumptions ensure, respectively, that $\{ \psi(\cdot,t) \}_{t \in [0,1]}$ and $\{ \phi^{-1}(\cdot,t) \}_{t \in [0,1]}$ are Lipschitz equi-continuous.
	In follows that $\{ \chi(\cdot,t) \}_{t \in [0,1]}$ is Lipschitz equi-continuous, 
	so that
	(the Picard operator is a contraction, and thus)
	the integral equation has a unique solution
	to which Picard iteration converges in the sup norm.%
		\footnote{See e.g. Theorem 5.3 in \textcite[ch. 1]{Hale1980}.}
\end{proof}

%%%%%%%%%%%%%%%%%%%%%%
\subsubsection{Continuous approximation of increasing maps}
\label{app:mech:pf_incr_impl:approx}
%%%%%%%%%%%%%%%%%%%%%%

The second step of the argument §\ref{app:mech:pf_incr_impl:pf} below relies on approximating an increasing map $[0,1] \to \mathcal{Y}$ by continuous and increasing maps.
This is made possible by the following:

\begin{namedthm}[Approximation lemma.]
	\label{lemma:approx}
	Let $\mathcal{Y}$ be \hyperref[definition:outcome_reg]{regular}, and let $Y$ be an increasing map $[0,1] \to \mathcal{Y}$.
	The image $Y([0,1])$ may be embedded in a chain $\mathcal{C} \subseteq \mathcal{Y}$ with $\inf \mathcal{C} = Y(0)$ and $\sup \mathcal{C} = Y(1)$ that is order-dense-in-itself, order-complete and order-separable.%
		\footnote{$\mathcal{C} \subseteq \mathcal{Y}$ is \emph{order-complete} iff every subset with a lower (upper) bound has an infimum (supremum), and \emph{order-separable} iff it has a countable order-dense subset.}
	Furthermore, there exists a sequence $(Y_n)_{n \in \N}$ of increasing maps $[0,1] \to \mathcal{C}$, each with $Y_n = Y$ on $\{0,1\}$,
	such that when $\mathcal{C}$ has the relative topology inherited from the order topology on $\mathcal{Y}$,
	$Y_n$ is continuous for each $n \in \N$,
	and $Y_n \to Y$ pointwise as $n \to \infty$.
\end{namedthm}

The (rather involved) proof is in \cref{suppl:mech:approx_pf}.

%%%%%%%%%%%%%%%%%%%%%%
\subsubsection{Proof of the \texorpdfstring{\hyperref[theorem:incr_impl]{implementability theorem}}{implementability theorem}}
\label{app:mech:pf_incr_impl:pf}
%%%%%%%%%%%%%%%%%%%%%%

Fix an increasing $Y : [0,1] \to \mathcal{Y}$.
Embed its image $Y([0,1])$ in the chain $\mathcal{C} \subseteq \mathcal{Y}$ delivered by the \hyperref[lemma:approx]{approximation lemma} in \cref{app:mech:pf_incr_impl:approx}, and equip $\mathcal{C}$ with the relative topology inherited from the order topology on $\mathcal{Y}$.
We henceforth view $Y$ as a function $[0,1] \to \mathcal{C}$, and (with a minor abuse of notation) view $f$ and $f_3$ as functions $\mathcal{C} \times \R \times [0,1] \to \R$.

We seek a payment schedule $P : [0,1] \to \R$ such that the direct mechanism $(Y,P)$ is incentive-compatible.
We do this first (step 1) under the assumption that $Y$ is continuous, then (step 2) show how continuity may be dropped.

\emph{Step 1:}
Suppose that $Y$ is continuous.
By \hyperref[definition:pref_reg]{preference regularity} and the \hyperref[lemma:existence]{existence lemma} in \cref{app:mech:pf_incr_impl:solns_env},%
	\footnote{The measurability hypothesis in the \hyperref[lemma:existence]{existence lemma} is satisfied because $f(\cdot,p,t)$, $f_3(\cdot,p,t)$ and $Y$ are continuous, and $f(y,p,\cdot)$ and $f_3(y,p,\cdot)$ are Borel-measurable (the former being continuous, and the latter a derivative).
	(To complete the argument for measurability, deduce that $r \mapsto f(Y(r),p,t)$ is continuous and that $t \mapsto f(Y(r),p,t)$ is Borel-measurable, so that $(r,t) \mapsto f(Y(r),p,t)$ is (jointly) Borel-measurable, and thus $t \mapsto f(Y(t),p,t)$ is Borel-measurable. Similarly for $f_3$.)}
there exists a payment schedule $P : [0,1] \to \R$ such that the envelope formula holds with (say) $V_{Y,P}(0)=0$:
\begin{equation*}
	V_{Y,P}(t) 
	= \int_0^t f_3( Y(s), P(s), s ) \dd s 
	\quad \text{for every $t \in [0,1]$} .
\end{equation*}
This $P$ must be continuous since $Y$, $f$ and $V_{Y,P}$ are continuous and $f(y,\cdot,t)$ is strictly monotone.%
	\footnote{Suppose not: $t_n \to t$ but $\lim_{ n \to \infty } P(t_n) \neq P(t)$.
	Then the continuity of $Y$ and $f$ and the strict monotonicity of $f(y,\cdot,t)$
	yield a contradiction with the continuity of $V_{Y,P}$:
	\begin{equation*}
		V_{Y,P}(t_n)
		= f( Y(t_n), P(t_n), t_n )
		\to f\left( Y(t), \lim_{n \to \infty} P(t_n), t \right)
		\neq f( Y(t), P(t), t )
		= V_{Y,P}(t) .
	\end{equation*}\vspace{-8pt}}
We will show that $(Y,P)$ is incentive-compatible.

Write $U(r,t) \coloneqq f( Y(r), P(r), t )$ for type $t$'s mimicking payoff,
and $\phi_{r,t}(m) \coloneqq \int_r^t U(s+m,s) \dd s$
for the collective payoff of types $[r,t] \subseteq (0,1)$
from `mimicking up' by $m$.
Clearly $U$ is a continuous function $[0,1]^2 \to \R$,
and thus $\phi_{r,t} : [-r,1-t] \to \R$ is also continuous.
Note that $V_{Y,P}(t) \equiv U(t,t)$.

The model fits into the abstract setting of §\ref{sec:background:setting} by letting $\mathcal{X} \coloneqq \mathcal{C} \times \R$ and $X(t) \coloneqq ( Y(t), P(t) )$, and the \hyperref[assumption:basic]{basic assumptions} are satisfied since $f_3$ exists and is bounded.
We may thus invoke the \hyperref[theorem:env]{converse envelope theorem} (p. \pageref{theorem:env}):
since $(Y,P)$ satisfies the envelope formula,
it must satisfy the outer first-order condition:
\begin{equation*}
	\left. \frac{\dd}{\dd m} \int_{r'}^{t'} U(s+m,s) \dd s \right|_{m=0} = 0
	\quad \text{for all $r'<t'$ in $(0,1)$.}
\end{equation*}

Given $r<t$ in $(0,1)$,
writing $\overline{\DD} \phi_{r,t}(s') \coloneqq \frac{\overline{\dd}}{\overline{\dd} m} \phi_{r,t}(s'+m) \bigr|_{m=0}$ for the upper derivative,
the \hyperref[definition:SM]{outer Spence--Mirrlees condition} yields for each $n \in (0,r)$ that
\begin{align*}
	0
	&\leq
	\left. \frac{\overline{\dd}}{\overline{\dd} m}
	\int_{r-n}^{t-n} U(s+m,s+n) \dd s
	\right|_{m=0}
	\\
	&= \left. \frac{\overline{\dd}}{\overline{\dd} m}
	\int_r^t U(s+m-n,s) \dd s
	\right|_{m=0}
	= \overline{\DD} \phi_{r,t}(-n) ,
\end{align*}
which is to say that $\overline{\DD} \phi_{r,t} \geq 0$ on $(-r,0)$.
Since $\phi_{r,t}$ is continuous, it follows that $\phi_{r,t}$ is increasing on $[-r,0]$.%
	\footnote{This is a standard result; see e.g. \textcite[§11.4, p. 128]{Bruckner1994}.}
A similar argument shows that $\phi_{r,t}$ is decreasing on $[0,1-t]$.

It follows that for any $r<t$ in $[0,1]$ and $m \in [-r,1-t]$,
\begin{equation*}
	\int_r^t [ U(s,s) - U(s+m,s) ] \dd s 
	= \phi_{r,t}(0) - \phi_{r,t}(m)
	\geq 0 .
\end{equation*}
Thus for every $m \in [0,1]$, we have
\begin{equation*}
	U(s,s) - U(s+m,s) \geq 0
	\quad \text{for a.e. $s \in [0,1] \intersect [-m,1-m]$.}
\end{equation*}
Since $s \mapsto U(s,s) = V_{Y,P}(s)$ and
$s \mapsto U(s+m,s)$
are continuous for any $m \in [0,1]$, it follows that for every $m \in [0,1]$,
\begin{equation*}
	U(s,s) - U(s+m,s) \geq 0
	\quad \text{for \emph{every} $s \in [0,1] \intersect [-m,1-m]$,}
\end{equation*}
which is to say that $(Y,P)$ is incentive-compatible.

\emph{Step 2:}
Now drop the assumption that $Y$ is continuous.
By \hyperref[definition:outcome_reg]{regularity of $\mathcal{Y}$} and the \hyperref[lemma:approx]{approximation lemma} in \cref{app:mech:pf_incr_impl:approx}, there exists a sequence $(Y_n)_{n \in \N}$ of continuous and increasing maps $[0,1] \to \mathcal{C}$ converging pointwise to $Y$, each of which satisfies $Y_n=Y$ on $\{0,1\}$.
At each $n \in \N$, Step 1 yields a $P_n : [0,1] \to \R$ such that
such that $(Y_n,P_n)$ is incentive-compatible and satisfies the envelope formula with $V_{Y_n,P_n}(0)=0$.

The sequence $\left( V_{Y_n,P_n} \right)_{n \in \N}$
is Lipschitz equi-continuous%
	\footnote{That is, there is an $L \geq 0$ such that $V_{Y_n,P_n}$ is $L$-Lipschitz for every $n \in \N$.}
by the envelope formula and the boundedness of $f_3$.
It is furthermore uniformly bounded,
due to its Lipschitz equi-continuity and the fact that $V_{Y_n,P_n}(0)=0$ for every $n \in \N$.
Thus by the Arzelà--Ascoli theorem,%
	\footnote{E.g. Theorem 4.44 in \textcite{Folland1999}.}
we may assume (passing to a subsequence if necessary) that $\left( V_{Y_n,P_n} \right)_{n \in \N}$ converges pointwise.
Then $(P_n)_{n \in \N}$ converges pointwise;%
	\footnote{Clearly
	$f( Y_n(t), \inf_{m \geq n} P_m(t), t )
	= \sup_{m \geq n} f( Y_n(t), P_m(t), t )
	\leq \sup_{m \geq n} V_{Y_m,P_m}(t)$
	for any $t \in [0,1]$,
	and thus $f( Y(t), p, t) \leq V(t)$, where $p \coloneqq \liminf_{n \to \infty} P_n(t)$
	and $V(t) \coloneqq \lim_{n \to \infty} V_{Y_n,P_n}(t)$.
	Similarly $V(t) \leq f( Y(t), p', t)$, where $p' \coloneqq \limsup_{n \to \infty} P_n(t)$.
	Thus $f( Y(t), p, t ) \leq f( Y(t), p', t )$,
	which rules out $p<p'$ since $f( Y(t), \cdot, t)$ is strictly decreasing.}
write $P : [0,1] \to \R$ for its limit.

By continuity of $f$,
$U_n(r,t) \coloneqq f( Y_n(r), P_n(r), t )$ converges to $U(r,t) \coloneqq f( Y(r), P(r), t )$ for all $r,t \in [0,1]$.
Each of the incentive-compatibility inequalities $U_n(t,t) \geq U_n(r,t)$ is preserved in the limit $n \to \infty$, ensuring that $(Y,P)$ is incentive-compatible.
\qed

%%%%%%%%%%%%%%%%%%%%%%
\subsection{Converse to the \texorpdfstring{\hyperref[theorem:incr_impl]{implementability theorem}}{implementability theorem} (§\ref{sec:app:incr_impl}, p. \pageref{theorem:incr_impl})}
\label{app:mech:impl_incr}
%%%%%%%%%%%%%%%%%%%%%%

In this appendix, we provide a partial converse to the \hyperref[theorem:incr_impl]{implementability theorem},
and use it to prove \Cref{proposition:impl_equiv_incr} (p. \pageref{proposition:impl_equiv_incr}).
We shall use the partial converse again in \cref{app:mech:info_sharing_pf} below to prove \Cref{proposition:info_sharing} (p. \pageref{proposition:info_sharing}).

Letting $\lesssim$ denote the partial order on $\mathcal{Y}$, we say that an allocation $Y : [0,1] \to \mathcal{Y}$ is \emph{non-decreasing} iff there are no $t \leq t'$ in $[0,1]$ such that $Y(t') < Y(t)$.
In other words, $Y(t)$ and $Y(t')$ could either be ranked as $Y(t) \lesssim Y(t')$, or they could be incomparable.
Increasing maps are non-decreasing, but the converse is false except if $\mathcal{Y}$ is a chain.

\begin{namedthm}[\Cref*{proposition:impl_equiv_incr}$\boldsymbol{'}$.]
	\label{proposition:impl_incr}
	If $f$ is \hyperref[definition:pref_reg]{regular} and satisfies the \hyperref[definition:SM]{strict outer Spence--Mirrlees condition},
	then only non-decreasing allocations are implementable.
\end{namedthm}

\begin{proof}[Proof of \Cref{proposition:impl_equiv_incr} (p. \pageref{proposition:impl_equiv_incr})]
	By the \hyperref[theorem:incr_impl]{implementability theorem}, any increasing allocation is implementable.
	By \hyperref[proposition:impl_incr]{\Cref*{proposition:impl_equiv_incr}$'$}, any implementable allocation is non-decreasing, hence increasing since $\mathcal{Y}$ is a chain. 
\end{proof}

The proof of \hyperref[proposition:impl_incr]{\Cref*{proposition:impl_equiv_incr}$'$} relies on two lemmata.
The first is a `non-decreasing' comparative statics result:%
	\footnote{Such results are dimly known in the literature, but rarely seen in print.
	Exceptions include \textcite[Proposition 5]{QuahStrulovici2007extensions} and \textcite{AndersonSmith2021}.}

\begin{lemma}
	\label{lemma:nd_compstat}
	Let $\mathcal{X}$ and $\mathcal{T}$ be partially ordered sets, and let $f$ be a function $\mathcal{X} \times \mathcal{T} \to \R$.
	Call a decision rule $X : \mathcal{T} \to \mathcal{X}$ \emph{optimal} iff $f( X(t), t ) \geq f( x, t )$ for all $x \in \mathcal{X}$ and $t \in \mathcal{T}$.
	If $f$ has strictly single-crossing differences,%
		\footnote{A function $\phi : \mathcal{X} \times \mathcal{T} \to \R$ has \emph{(strictly) single-crossing differences} iff
		$t \mapsto \phi(x',t) - \phi(x,t)$ is (strictly) single-crossing
		for any $x < x'$ in $\mathcal{X}$, where $<$ denotes the strict part of the partial order on $\mathcal{X}$.
		(`Single-crossing' was defined in \cref{footnote:sc_defn} on p. \pageref{footnote:sc_defn}.)}
	then every optimal decision rule is non-decreasing.
\end{lemma}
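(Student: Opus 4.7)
The plan is a short contradiction argument that just chains together the definition of optimality at two types with the strict single-crossing hypothesis. The only subtlety is handling the case $t = t'$ separately (which is trivial) so that we may assume $t < t'$ and apply the strict version of single-crossing.

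First I would fix an optimal $X : \mathcal{T} \to \mathcal{X}$ and suppose toward a contradiction that there exist $t \leq t'$ in $\mathcal{T}$ with $X(t') < X(t)$. If $t = t'$ this says $X(t) < X(t)$, which is absurd, so we may assume $t < t'$. Set $x \coloneqq X(t')$ and $x' \coloneqq X(t)$, so $x < x'$ in $\mathcal{X}$.

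Next I would write down the two optimality inequalities that $X$ must satisfy at the two types. Optimality of $X(t)$ at $t$ yields
\begin{equation*}
    f(x',t) - f(x,t) = f(X(t),t) - f(X(t'),t) \geq 0,
\end{equation*}
while optimality of $X(t')$ at $t'$ yields
\begin{equation*}
    f(x',t') - f(x,t') = f(X(t),t') - f(X(t'),t') \leq 0.
\end{equation*}
The first display says that $t \mapsto f(x',t) - f(x,t)$ is nonnegative at $t$; by strict single-crossing of differences (applied to the pair $x < x'$ and to $t < t'$), it follows that $f(x',t') - f(x,t') > 0$. This contradicts the second display, completing the proof.

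The argument is essentially one-step and presents no real obstacle; the only thing to be careful about is that \emph{strict} single-crossing (rather than mere single-crossing) is exactly what rules out the indifference case $f(x',t') - f(x,t') = 0$, which would otherwise be consistent with both optimality inequalities holding simultaneously. That is why the conclusion (non-decreasingness, which rules out strict reversals) requires strict single-crossing rather than just single-crossing.
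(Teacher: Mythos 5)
Your proof is correct and is essentially identical to the paper's: both are a two-line contradiction argument that reads the optimality inequality at $t$ as nonnegativity of the difference $f(X(t),\cdot)-f(X(t'),\cdot)$, applies strictly single-crossing differences to propagate strict positivity to $t'$, and thereby contradicts optimality at $t'$. The only (harmless) difference is that you explicitly dispatch the vacuous case $t=t'$, which the paper leaves implicit by immediately taking $t \prec t'$.
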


\begin{proof}
	Write $\lesssim$ and $\preceq$, respectively, for the partial orders on $\mathcal{X}$ and on $\mathcal{T}$.
	Let $X : \mathcal{T} \to \mathcal{X}$ be optimal, and suppose toward a contradiction that there are $t \prec t'$ in $\mathcal{T}$ such that $X(t') < X(t)$.
	Since $X(t)$ is optimal at parameter $t$, we have $f(X(t'),t) \leq f(X(t),t)$.
	Because $t \prec t'$ and $X(t') \prec X(t)$, it follows by strictly single-crossing differences that $f(X(t'),t') < f(X(t),t')$, a contradiction with the optimality of $X(t')$ at parameter $t'$.
\end{proof}

\begin{lemma}
	\label{lemma:outer_ordinal_SM}
	If $f$ is \hyperref[definition:pref_reg]{regular} and satisfies the \hyperref[definition:SM]{(strict) outer Spence--Mirrlees condition},
	then for any price schedule $\pi : \mathcal{Y} \to \R$, the map $(y,t) \mapsto f( y, \pi(y), t )$ has (strictly) single-crossing differences.
\end{lemma}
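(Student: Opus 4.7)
The plan is to extract (strict) single-crossing of $\phi_{y,y'}(\cdot) := f(y', \pi(y'), \cdot) - f(y, \pi(y), \cdot)$ in the type argument from the \hyperref[definition:SM]{outer Spence--Mirrlees condition} by applying it to a step mechanism. Fix $y < y'$ in $\mathcal{Y}$ and $c \in (0,1)$, let $Y_c$ be the increasing allocation taking value $y$ on $[0,c)$ and $y'$ on $[c,1]$, and set $P_c := \pi \circ Y_c$. For $m > 0$ small, the integrands $f(Y_c(s+m), P_c(s+m), s+n)$ and $f(Y_c(s), P_c(s), s+n)$ agree off $s \in [c-m, c)$, where the former evaluates to $f(y', \pi(y'), s+n)$ and the latter to $f(y, \pi(y), s+n)$, so for $r < c < t$ in $(0,1)$,
\begin{equation*}
	\int_r^t \left[ f(Y_c(s+m), P_c(s+m), s+n) - f(Y_c(s), P_c(s), s+n) \right] \dd s
	= \int_{c-m}^c \phi_{y,y'}(s+n) \dd s .
\end{equation*}
Joint continuity of $f$ on the chain $\{y, y'\} \times \R \times [0,1]$ (from \hyperref[definition:pref_reg]{preference regularity}) makes $s \mapsto \phi_{y,y'}(s+n)$ continuous, so dividing by $m$ and letting $m \downarrow 0$ gives $\phi_{y,y'}(c+n)$. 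A symmetric computation for $m \uparrow 0$, where the difference is carried by $s \in [c, c+\abs*{m})$, gives the same value, so the upper derivative at $m=0$ equals $\phi_{y,y'}(c+n)$.

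The (strict) outer Spence--Mirrlees condition applied to $(Y_c, P_c)$ therefore says that $n \mapsto \phi_{y,y'}(c+n)$ is (strictly) single-crossing on $[-r, 1-t]$, or equivalently (substituting $\tilde n := c+n$) that $\phi_{y,y'}$ is (strictly) single-crossing on $[c-r, c+1-t]$. Given any $t_1 < t_2$ in $(0,1)$, choose $c \in (t_1, t_2)$, $r := c - t_1$, $t := c + 1 - t_2$, so that $0 < r < c < t < 1$ and $[c-r, c+1-t] = [t_1, t_2]$. Hence $\phi_{y,y'}$ is (strictly) single-crossing on every compact sub-interval of $(0,1)$, and therefore on $(0,1)$. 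Continuity of $\phi_{y,y'}$ on $[0,1]$ (again from joint continuity of $f$ on $\{y, y'\} \times \R \times [0,1]$) then extends the property to the closed interval $[0,1]$ by a standard continuity argument.

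The main step to get right is the upper-derivative calculation, identifying the step mechanism $(Y_c, P_c)$ for which the upper derivative collapses to the point value $\phi_{y,y'}(c+n)$; everything else is bookkeeping. A minor subtlety is that the outer Spence--Mirrlees condition is stated for $r < t$ strictly in $(0,1)$, so the method directly yields single-crossing of $\phi_{y,y'}$ only on open sub-intervals of $(0,1)$. Carrying this to $[0,1]$ relies on continuity of $\phi_{y,y'}$, and in the strict case requires a little extra care to handle possible degenerate behavior of $\phi_{y,y'}$ at the endpoints.
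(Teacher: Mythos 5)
Your proof is correct and takes essentially the same route as the paper's: construct a single-jump increasing mechanism, observe that the upper derivative in the outer Spence--Mirrlees condition collapses to the payoff difference $f(y',\pi(y'),\cdot) - f(y,\pi(y),\cdot)$ evaluated at the shifted jump point, and read off single-crossing in $n$ as single-crossing of that difference in the type. The paper fixes $t<t'$ and evaluates the outer Spence--Mirrlees condition at only the two points $n=0$ and $n=t'-t$ (also allowing arbitrary payments $p,p'$ rather than $\pi(y),\pi(y')$), whereas you derive single-crossing on sub-intervals of $(0,1)$ and stitch them together; this is a cosmetic difference, and the endpoint subtlety you flag for the strict case is equally present in, and left equally unaddressed by, the paper's own argument, which silently requires $r<t<r'$ in $(0,1)$ and $t'-t \leq 1-r'$.
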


\begin{proof}
	Fix $y < y'$ in $\mathcal{Y}$, $p,p'$ in $\R$ and $t<t'$ in $[0,1]$.
	Define a mechanism $(Y,P) : [0,1] \to \mathcal{Y} \times \R$ by
	$(Y(s),P(s)) \coloneqq (y,p)$ for $s \leq t$ and
	$(Y(s),P(s)) \coloneqq (y',p')$ for $s > t$,
	and fix $r,r' \in (0,1)$ with $r<t<r'$.
	Clearly for $n \in \{0,t'-t\}$,
	\begin{multline*}
		\left. \frac{\overline{\dd}}{\overline{\dd} m}
		\int_r^{r'} f\left( Y(s+m), P(s+m), s+n \right) \dd s
		\right|_{m=0}
		\\
		\begin{aligned}
			&= \left. \frac{\dd}{\dd m}
			\left(
			\int_r^{t-m}
			f\left( y, p, s+n \right) \dd s
			+ \int_{t-m}^{r'}
			f\left( y', p', s+n \right) \dd s
			\right)
			\right|_{m=0}
			\\
			&= f\left( y', p', t+n \right)
			- f\left( y, p, t+n \right) .			
		\end{aligned}
	\end{multline*}
	If $f$ satisfies the \hyperref[definition:SM]{outer Spence--Mirrlees condition},
	then the left-hand side is single-crossing in $n$,
	and thus $f\left( y', p', t \right) - f\left( y, p, t \right) \geq \mathrel{(>)} 0$ implies $f\left( y', p', t' \right) - f\left( y, p, t' \right) \geq \mathrel{(>)} 0$.
	Similarly for the strict case.
\end{proof}

\begin{proof}[Proof of {\hyperref[proposition:impl_incr]{\Cref*{proposition:impl_equiv_incr}$\xslantmath{'}$}}]
	Let $Y : [0,1] \to \mathcal{Y}$ be implementable, so that $(Y,P)$ is incentive-compatible for some payment schedule $P : [0,1] \to \R$.
	Define a price schedule $\pi : Y([0,1]) \to \R$ by $\pi \circ Y = P$; it is well-defined because by incentive-compatibility and strict monotonicity of $f(y,\cdot,t)$, $Y(r)=Y(r')$ implies $P(r)=P(r')$.
	Define a function $\phi : Y([0,1]) \times [0,1] \to \R$ by $\phi(y,t) \coloneqq f(y,\pi(y),t)$.
	Take any $t \in [0,1]$ and $y \in Y([0,1])$, and observe that there must be an $r \in [0,1]$ with $Y(r) = y$. Then since $(Y,P)$ is incentive-compatible,
	\begin{multline*}
		\phi( Y(t), t )
		= f( Y(t), \pi(Y(t)), t )
		= f( Y(t), P(t), t )
		\\
		\geq f( Y(r), P(r), t )
		= f( y, \pi(y), t ) 
		= \phi( y, t ) .
	\end{multline*}
	Since $y \in Y([0,1])$ and $t \in [0,1]$ were arbitrary, this shows that $Y$ is an optimal decision rule for objective $\phi$.
	Since $\phi$ has strictly single-crossing differences by \Cref{lemma:outer_ordinal_SM}, it follows by \Cref{lemma:nd_compstat} that $Y$ is non-decreasing.
\end{proof}

%%%%%%%%%%%%%%%%%%%%%%
\subsection{Proof of Proposition \ref{proposition:info_sharing} (§\ref{sec:app:info}, p. \pageref{proposition:info_sharing})}
\label{app:mech:info_sharing_pf}
%%%%%%%%%%%%%%%%%%%%%%

Any increasing $Y : [0,1] \to \mathcal{Y}$ is implementable by the \hyperref[theorem:incr_impl]{implementability theorem} (§\ref{sec:app:incr_impl}, p. \pageref{theorem:incr_impl}), and clearly sharing-proof.
For the converse, let $Y : [0,1] \to \mathcal{Y}$ be implementable and sharing-proof, and fix $t < t'$; 
then either $Y(t) \lesssim Y(t')$ or $Y(t') < Y(t)$ since $Y$ is sharing-proof,
and it cannot be the latter because $Y$ is non-decreasing by \hyperref[proposition:impl_incr]{\Cref*{proposition:impl_equiv_incr}$'$} in \cref{app:mech:impl_incr}.
\qed

\crefalias{section}{supplsec}
\crefalias{subsection}{supplsec}
\crefalias{subsubsection}{supplsec}

%%%%%%%%%%%%%%%%%%%%%%
%%%%%%%%%%%%%%%%%%%%%%
\section*{Supplemental appendix to the application (§\ref{sec:app})}
\label{suppl:mech}
\addcontentsline{toc}{section}{Supplemental appendix to the application}
%%%%%%%%%%%%%%%%%%%%%%
%%%%%%%%%%%%%%%%%%%%%%

%%%%%%%%%%%%%%%%%%%%%%
\subsection{The failure of the standard implementability argument}
\label{suppl:mech:standar_arg}
%%%%%%%%%%%%%%%%%%%%%%

When the agent's preferences have the quasi-linear form $f(y,p,t) = h(y,t) - p$,
a standard argument establishes the implementability of increasing allocations
without resort to the converse envelope theorem.
I first outline the argument, then show how it fails absent quasi-linearity,
necessitating my alternative approach based on the converse envelope theorem.

Fix an increasing allocation $Y : [0,1] \to \mathcal{Y}$.
Choose a $P$ so that $(Y,P)$ satisfies the envelope formula.%
	\footnote{In the quasi-linear case, such a $P$ is given explicitly by
	$P(t) \coloneqq h( Y(t), t ) - \int_0^t h_2(Y(s),s) \dd s$,
	obviating the need to invoke the \hyperref[lemma:existence]{existence lemma} in \cref{app:mech:pf_incr_impl:solns_env}.}
We then have for any $r,t \in [0,1]$ that
\begin{multline*}
	f(Y(t),P(t),t) - f(Y(r),P(r),t)
	\\
	\begin{aligned}
		&= [ V_{Y,P}(t) - V_{Y,P}(r) ]
		- [ f(Y(r),P(r),t) - f(Y(r),P(r),r) ]
		\\
		&= \int_r^t \left[ f_3(Y(s),P(s),s) - f_3(Y(r),P(r),s) \right] \dd s 
	\end{aligned}
\end{multline*}
by the envelope formula
and \hyperref[theorem:LFTC]{Lebesgue's fundamental theorem of calculus}.

For quasi-linear preferences,
$f_3(y,p,s)$ does not vary with $p$,
and $f$ is single-crossing iff $y \mapsto f_3(y,0,s)$ is increasing for every $s \in [0,1]$.%
	\footnote{This is easily shown,
	and does not depend on exactly how `single-crossing' is formalised.}
Since $Y$ is also increasing,
this implies that the above integrand is non-negative,
which (since $r,t \in [0,1]$ were arbitrary) shows that $(Y,P)$ is incentive-compatible.

These properties of quasi-linearity are very special, however.
In general, single-crossing has nothing directly to say about the type derivative $f_3$, and so cannot be used to sign the integrand.
The standard argument thus fails.

The argument may of course be salvaged
by replacing single-crossing
with the brute assumption that the integrand is non-negative.
But this assumption lacks a choice interpretation,
being a restriction on the \emph{type} derivative $f_3$ of the utility representation $f$.
A theorem with such a hypothesis would have no economic meaning.
(By contrast, single-crossing has a straightforward choice interpretation, described in the text.)

%%%%%%%%%%%%%%%%%%%%%%
\subsection{Some \texorpdfstring{\hyperref[definition:outcome_reg]{regular}}{regular} outcome spaces (§\ref{sec:app:reg_sc})}
\label{suppl:mech:order_assns_ex}
%%%%%%%%%%%%%%%%%%%%%%

\begin{proposition}
	\label{proposition:order_assns_examples}
	The following partially ordered sets are \hyperref[definition:outcome_reg]{regular}:
	\begin{enumerate}[label=(\alph*)]
	
		\item \label{item:order_assns_examples:Rn}
		$\R^n$ equipped with the usual (product) order: $(y_1,\dots,y_n) \lesssim (y_1',\dots,y_n')$ iff $y_i \leq y_i'$ for every $i \in \{1,\dots,n\}$.

		\item \label{item:order_assns_examples:l1}
		The space $\ell^1$ of summable sequences equipped with the product order: $(y_i)_{i \in \N} \lesssim (y_i')_{i \in \N}$ iff $y_i \leq y_i'$ for every $i \in \N$.

		\item \label{item:order_assns_examples:L1}
		For any measure space $(\Omega,\mathcal{F},\mu)$,
		the space $\mathcal{L}^1(\Omega,\mathcal{F},\mu)$ of (equivalence classes of $\mu$-a.e. equal) $\mu$-integrable functions $\Omega \to \R$, equipped with the partial order $\lesssim$ defined by
		$y \lesssim y'$ iff $y \leq y'$ $\mu$-a.e.

		(Special case: for any probability space, the space of finite-expectation random variables, ordered by `a.s. smaller'.)

		\item \label{item:order_assns_examples:distbel}
		For any finite set $\Omega$ and probability $\mu_0 \in \Delta(\Omega)$, the space of mean-$\mu_0$ Borel probability measures on $\Delta(\Omega)$,
		equipped with the Blackwell informativeness order defined in §\ref{sec:app:info}.%
			\footnote{A proof that this is a partial order (in particular, anti-symmetric) may be found in \textcite[Theorem 5.2]{Muller1997}.}

		\item \label{item:order_assns_examples:intervals}
		The open intervals of $(0,1)$ (including $\varnothing$), ordered by set inclusion $\subseteq$.
	
	\end{enumerate}
\end{proposition}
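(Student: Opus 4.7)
The plan is to verify the three defining properties---order-dense-in-itself, countable chain-completeness, and chain-separability---separately for each of the five partially ordered sets.

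Cases \ref{item:order_assns_examples:Rn}--\ref{item:order_assns_examples:L1} share a common structure, since the order is induced coordinate-wise (respectively $\mu$-a.e.\ pointwise). I would handle order-density by modifying a single coordinate (or a set of positive measure) where a strict inequality occurs, picking an intermediate value there. For countable chain-completeness, the candidate supremum of a countable chain $\{y^{(n)}\}$ bounded above by $\bar y$ is the coordinate-wise (a.e.-pointwise) supremum; this lies in the space because $\abs{\sup_n y^{(n)}_i} \leq \abs{y^{(1)}_i} + \abs{\bar y_i}$ yields summability (integrability) by domination. Chain-separability is the delicate piece in the infinite-dimensional cases \ref{item:order_assns_examples:l1}--\ref{item:order_assns_examples:L1}: my plan is to construct a strictly order-preserving map $\phi$ into $\R$ (e.g.\ $\phi(y) \coloneqq \sum_i 2^{-i} \arctan(y_i)$ for $\ell^1$, and $\phi(y) \coloneqq \int \arctan(y) w \, \dd\mu$ for $\mathcal{L}^1$ with $w>0$ a.e.\ integrable) which embeds each chain $C$ order-isomorphically into a totally ordered subset $\phi(C) \subseteq \R$; since every subset of $\R$ admits a countable order-dense subset (a standard fact, using second-countability of $\R$ together with the collection of gap-endpoints), pulling back then yields countable order-density within $C$. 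For case \ref{item:order_assns_examples:Rn}, $\Q^n$ serves directly.

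Case \ref{item:order_assns_examples:intervals} is routine: a chain of open intervals ordered by $\subseteq$ is nested, so suprema are unions and infima are the interiors of intersections (both open intervals of $(0,1)$), and the countable family of rational-endpoint open intervals is order-dense in every chain via approximating each endpoint of a strictly-contained interval from inside.

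The main obstacle is case \ref{item:order_assns_examples:distbel}, the Blackwell order on mean-$\mu_0$ distributions of posteriors. I would obtain order-density via convex combinations: for $y<y'$, each $z_t \coloneqq (1-t)y + t y'$ with $t \in (0,1)$ satisfies $y < z_t < y'$ because $\int v \, \dd z_t$ is a strict convex combination of $\int v \, \dd y$ and $\int v \, \dd y'$ for any convex $v$ that separates them. For countable chain-completeness, I would use Strassen-type martingale couplings to represent the chain as the laws of a filtered sequence of random posteriors $(M_n)$ with $\E M_n = \mu_0$; Doob's martingale convergence theorem then yields a limit $M_\infty$ whose law is the sought supremum (mean $\mu_0$ is preserved by the tower property). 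For chain-separability, the compactness and metrizability of $\Delta(\Omega)$ make $C(\Delta(\Omega))$ separable, so there exists a countable family $\{v_k\}$ of continuous convex functions characterising Blackwell order (by uniform density among continuous convex functions, together with continuity of integration uniformly on probability measures); the $\arctan$-trick applied to $\bigl(\int v_k \, \dd y\bigr)_{k \in \N}$ (with normalisations for convergence) then embeds each chain into $\R$, whence a countable order-dense subset follows as before.
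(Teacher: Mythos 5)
Your treatment of parts \ref{item:order_assns_examples:Rn}--\ref{item:order_assns_examples:distbel} is correct in substance, with some routes that differ from the paper's.
The paper handles chain-separability in all five cases uniformly via \Cref{lemma:order_embedding} (a strictly increasing real-valued function on $\mathcal{Y}$ implies chain-separability); you use the same idea for \ref{item:order_assns_examples:Rn}--\ref{item:order_assns_examples:distbel}, but with more machinery than needed.
For \ref{item:order_assns_examples:L1} the paper simply takes $\phi(y) = \int_\Omega y\,\dd\mu$, which also sidesteps the question of whether a strictly positive integrable weight $w$ exists (it need not when $\mu$ is not $\sigma$-finite).
For \ref{item:order_assns_examples:distbel} the paper uses a single strictly convex $v$ (e.g.\ $\norm*{\cdot}_2$) and obtains strict monotonicity of $y \mapsto \int v\,\dd y$ from Strassen's martingale-coupling theorem together with strict Jensen, which is lighter than your countable dense family of convex test functions (though that also works).
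For countable chain-completeness in \ref{item:order_assns_examples:distbel} you propose Doob's martingale convergence where the paper extracts a monotone cofinal subsequence, applies Prokhorov, and verifies the supremum property directly from weak convergence; your route is more probabilistic but requires gluing the pairwise couplings into a single filtered process, which you should not leave implicit.

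There is a genuine gap in part \ref{item:order_assns_examples:intervals}.
The rational-endpoint open intervals are \emph{not} order-dense in every chain: take an irrational $\alpha \in (0,1)$ and the chain $C = \left\{ (\alpha, b) : b \in (\alpha, 1) \right\}$.
For $(\alpha, b_1) \subsetneq (\alpha, b_2)$ in $C$, any open interval $I$ with $(\alpha, b_1) \subseteq I \subseteq (\alpha, b_2)$ must have left endpoint exactly $\alpha$, so $I$ cannot have rational endpoints; no member of your fixed countable family is sandwiched between them.
Chain-separability nonetheless holds, and the right tool is the same unifying lemma you used elsewhere: the length map $\phi((a,b)) \coloneqq b - a$ is strictly increasing for set inclusion (if $(a,b) \subsetneq (a'',b'')$ then $a'' \leq a$ and $b \leq b''$ with at least one strict, so $b - a < b'' - a''$), and \Cref{lemma:order_embedding} then yields chain-separability.
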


We will use the following sufficient condition for chain-separability.

\begin{lemma}
	\label{lemma:order_embedding}
	If there is a strictly increasing function $\mathcal{Y} \to \R$, then $\mathcal{Y}$ is chain-separable.
\end{lemma}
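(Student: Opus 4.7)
Given a strictly increasing $u : \mathcal{Y} \to \R$, fix any chain $C \subseteq \mathcal{Y}$. Since $u$ is strictly increasing and $C$ is totally ordered, $u|_C$ is an injective order-embedding of $C$ into $\R$. The plan is to find a countable order-dense subset of the image $T := u(C) \subseteq \R$, then pull it back via $(u|_C)^{-1}$ to obtain a countable order-dense subset of $C$ inside $\mathcal{Y}$.

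The key step is the auxiliary claim that \emph{every} subset $T \subseteq \R$ has a countable order-dense subset. I would construct one as $D_T := D_0 \cup J$, where $D_0 \subseteq T$ is a countable topologically dense subset---which exists because $\R$ is second countable, and separability passes to the subspace $T$---and $J \subseteq T$ consists of all $t \in T$ such that $(t, t+\epsilon) \cap T = \varnothing$ or $(t-\epsilon, t) \cap T = \varnothing$ for some $\epsilon > 0$. Countability of $J$ follows by associating to each $t \in J$ a maximal open gap interval of $\R \setminus T$ abutting $t$ on the appropriate side: distinct such intervals are pairwise disjoint, so each can be tagged by a distinct rational (with the at most two extreme cases of unbounded gaps handled separately). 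Order-density of $D_T$ then drops out: given $s < s'$ in $T$, either $(s, s') \cap T$ is a non-empty relatively open subset of $T$ and hence meets $D_0$, or else $(s, s')$ is itself a gap, whence $s \in J$ and we may take $d = s$.

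Finally, the injection $u|_C$ is a bijection onto $T$, so $D := (u|_C)^{-1}(D_T) \subseteq C \subseteq \mathcal{Y}$ is countable. For $c < c'$ in $C$, strict monotonicity gives $u(c) < u(c')$, so order-density of $D_T$ yields $d' \in D_T$ with $u(c) \leq d' \leq u(c')$; its preimage $d \in D$ then satisfies $u(c) \leq u(d) \leq u(c')$, which translates to $c \leq d \leq c'$ because $c$, $d$, $c'$ all lie in the chain $C$ and $u$ is strictly increasing on it. The main obstacle is the auxiliary claim for subsets of $\R$: a topologically dense $D_0$ alone does not suffice, because an endpoint of a gap in $T$ need not be topologically isolated---it may be approached by $T$ from its non-gap side---so the set $J$ of gap endpoints must be adjoined by hand.
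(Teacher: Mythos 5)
Your proof is correct, but it takes a genuinely different route from the paper's. The paper reduces the claim to a black-box citation: it observes that $u|_C$ is an order-embedding of the chain $C$ into $\R$, and then invokes a classical theorem of Birkhoff (order-isomorphs of subsets of $\R$ are exactly the chains with a countable order-dense subset) to finish. You instead prove the relevant direction of that theorem from scratch: every subset $T \subseteq \R$ is order-separable, because a countable topologically dense $D_0 \subseteq T$ together with the (countable) set $J$ of one-sided gap endpoints is order-dense. Your observation that $D_0$ alone does not suffice---a gap endpoint can be a topological limit point of $T$ from the non-gap side and so may be omitted by $D_0$---is exactly the subtlety that makes the adjoining of $J$ necessary, and the tagging of maximal gap intervals by distinct rationals correctly yields countability of $J$. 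The final pull-back step, using that $u$ reflects the order on the chain $C$, is also sound. Net: the paper's proof is shorter but opaque (it offloads all the real work to a reference), while yours is longer but self-contained and elementary; both establish the lemma.
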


(The converse is false: there are chain-separable spaces that admit no strictly increasing real-valued function.)

\begin{proof}
	Suppose that $\phi : \mathcal{Y} \to \R$ is a strictly increasing function, and let $Y \subseteq \mathcal{Y}$ be a chain; we will show that $Y$ has a countable order-dense subset.
	By inspection, the restriction $\phi|_Y$ of $\phi$ to $Y$ is an order-embedding of $Y$ into $\R$; thus $Y$ is order-isomorphic to a subset of $\R$ (namely $\phi(Y)$).
	The order-isomorphs of subsets of $\R$ are precisely those chains that have a countable order-dense subsets (see e.g. Theorem 24 in \textcite[p. 200]{Birkhoff1967}); thus $Y$ has a countable order-dense subset.
\end{proof}

\begin{proof}[Proof of \Cref{proposition:order_assns_examples}\ref{item:order_assns_examples:Rn}--\ref{item:order_assns_examples:L1}]
	$\R^n$ is exactly $\mathcal{L}^1\bigl( \{1,\dots,n\}, 2^{\{1,\dots,n\}}, c \bigr)$
	where $c$ is the counting measure; similarly, $\ell^1$ is $\mathcal{L}^1\left( \N, 2^\N, c \right)$.
	It therefore suffices to establish \ref{item:order_assns_examples:L1}.

	So fix a measure space $(\Omega,\mathcal{F},\mu)$, and let $\mathcal{Y} \coloneqq \mathcal{L}^1(\Omega,\mathcal{F},\mu)$ be ordered by `$\mu$-a.e. smaller'.
	$\mathcal{Y}$ is order-dense-in-itself since if $y \leq y''$ $\mu$-a.e. and $y \neq y''$ on a set of positive $\mu$-measure, then $y' \coloneqq (y+y'')/2$ lives in $\mathcal{Y}$ and satisfies $y \leq y' \leq y''$ $\mu$-a.e. and $y \neq y' \neq y''$ on a set of positive $\mu$-measure.

	For countable-chain completeness, take any countable chain $Y \subseteq \mathcal{Y}$, and suppose that it has a lower bound $y \in \mathcal{Y}$; we will show that $Y$ has an infimum. (The argument for upper bounds is symmetric.)
	Define $y_\star : \Omega \to \R$ by $y_\star(\omega) \coloneqq \inf_{y \in Y} y(\omega)$ for each $\omega \in \Omega$; it is well-defined (i.e. it maps into $\R$, with the possible exception of a $\mu$-null set) since $Y$ has a lower bound.
	Clearly $y' \leq y_\star \leq y''$ $\mu$-a.e. for any lower bound $y'$ of $Y$ and any $y'' \in Y$, so it remains only to show that $y_\star$ lives in $\mathcal{Y}$, meaning that it is measurable and that its integral is finite.
	Measurability obtains since $Y$ is countable (e.g. Proposition 2.7 in \textcite{Folland1999}).
	As for the integral, since $y \leq y_\star \leq y_0$ $\mu$-a.e. and $y$ and $y_0$ are integrable (live in $\mathcal{Y}$), we have
	\begin{equation*}
		-\infty 
		< \int_\Omega y \dd \mu
		\leq \int_\Omega y_\star \dd \mu
		\leq \int_\Omega y_0 \dd \mu
		< +\infty .
	\end{equation*}

	For chain-separability, define $\phi : \mathcal{Y} \to \R$ by
	$\phi(y) \coloneqq \int_\Omega y \dd \mu$ for each $y \in \mathcal{Y}$.
	$\phi$ is strictly increasing: if $y \leq y'$ $\mu$-a.e. and $y \neq y'$ on a set of positive $\mu$-measure, then $\phi(y) < \phi(y')$.
	Chain-separability follows by \Cref{lemma:order_embedding}.
\end{proof}

\begin{proof}[Proof of \Cref{proposition:order_assns_examples}\ref{item:order_assns_examples:distbel}]
	Fix a finite set $\Omega$ and a probability $\mu_0 \in \Delta(\Omega)$, and let $\mathcal{Y}$ be the space of Borel probability measures with mean $\mu_0$, equipped with the Blackwell informativeness order $\lesssim$.
	$\mathcal{Y}$ is order-dense-in-itself because if $y,y'' \in \mathcal{Y}$ satisfy $\int_{\Delta(\Omega)} v \dd y \leq \int_{\Delta(\Omega)} v \dd y''$ for every continuous and convex $v : \Delta(\Omega) \to \R$, with the inequality strict for some $v = \widehat{v}$,
	then $y' \coloneqq (y+y'')/2$ also lives in $\mathcal{Y}$ and satisfies
	$\int_{\Delta(\Omega)} v \dd y 
	\leq \int_{\Delta(\Omega)} v \dd y'
	\leq \int_{\Delta(\Omega)} v \dd y''$
	for every continuous and convex $v : \Delta(\Omega) \to \R$, with both inequalities strict for $v = \widehat{v}$.

	For countable chain-completeness, let $Y \subseteq \mathcal{Y}$ be a countable chain with an upper bound in $\mathcal{Y}$; we will show that it has a supremum. (The argument for infima is analogous.)
	This is trivial if $Y$ has a maximum element, so suppose not.
	Then there is a strictly increasing sequence $(y_n)_{n \in \N}$ in $Y$ that has no upper bound in $Y$.
	This sequence is trivially tight since $\Delta(\Omega)$ is a compact metric space, so has a weakly convergent subsequence $(y_{n_k})_{k \in \N}$ by Prokhorov's theorem;%
		\footnote{E.g. Theorem 5.1 in \textcite{Billingsley1999}.}
	call the limit $y^\star$.
	Then by the monotone convergence theorem for real numbers and the definition of weak convergence, we have for every for every continuous (hence bounded) and convex $v : \Delta(\Omega) \to \R$ that
	\begin{equation*}
		\sup_{y \in Y} \int_{\Delta(\Omega)} v \dd y
		% = \sup_{k \in \N} \int_{\Delta(\Omega)} v \dd y_{n_k}
		= \lim_{k \to \infty} \int_{\Delta(\Omega)} v \dd y_{n_k}
		= \int_{\Delta(\Omega)} v \dd y^\star ,
	\end{equation*}
	which is to say that $y^\star$ is the supremum of $Y$.

	For chain-separability, it suffices by \Cref{lemma:order_embedding} to identify a strictly increasing function $\mathcal{Y} \to \R$.
	Let $v$ be any strictly convex function $\Delta(\Omega) \to \R$,%
		\footnote{E.g. the $\mathcal{L}^2$ norm $\norm*{\cdot}_2$, which is strictly convex on $\Delta(\Omega)$ by Minkowski's inequality.}
	and define $\phi : \mathcal{Y} \to \R$ by $\phi(y) \coloneqq \int_{\Delta(\Omega)} v \dd y$.
	Take $y < y'$ in $\mathcal{Y}$; we must show that $\phi(y) < \phi(y')$.
	By a standard embedding theorem (e.g. Theorem 7.A.1 in \textcite{ShakedShanthikumar2007}), there exists a probability space on which there are random vectors $X,X'$ with respective laws $y,y'$ such that $\E(X'|X) = X$ a.s. and $X \neq X'$ with positive probability.
	Thus
	\begin{equation*}
		\phi(y')
		= \E( v(X') )
		= \E( \E[ v(X') | X ] )
		> \E( v( \E[ X' | X ] ) )
		= \E( v(X) )
		= \phi(y)
	\end{equation*}
	by Jensen's inequality.
\end{proof}

\begin{proof}[Proof of \Cref{proposition:order_assns_examples}\ref{item:order_assns_examples:intervals}]
	%
	% The argument is similar to that for $\R^2$.
	Write $\mathcal{Y}$ for the open intervals of $(0,1)$.
	$\mathcal{Y}$ is order-dense-in-itself since if $(a,b) \subsetneq (a'',b'')$ then
	$(a',b') \coloneqq \left( [a+a''] / 2, [b+b''] / 2 \right)$
	is an open interval (lives in $\mathcal{Y}$) and satisfies $(a,b) \subsetneq (a',b') \subsetneq (a'',b'')$.

	For countable chain-completeness, we must show that every countable chain has an infimum and supremum.
	% (Every subset of $\mathcal{Y}$ has a lower and an upper bound, viz. $\varnothing$ and $(0,1)$.)
	So take a countable chain $Y \subseteq \mathcal{Y}$, define $y^\star \coloneqq \Union_{y \in Y} y$, and let $y_\star$ be the interior of $\Intersect_{y \in Y} y$.
	Both are open intervals, so live in $\mathcal{Y}$.
	Clearly $y \subseteq y^\star \subseteq y^+$ for any $y \in Y$ and any set $y^+$ containing every member of $Y$, so $y^\star$ is the supremum of $Y$.
	Similarly $y_\star \subseteq \Intersect_{y' \in Y} y' \subseteq y$ for any $y \in Y$, and $y_- \subseteq y_\star$ for any \emph{open} set $y_-$ contained in every member of $Y$ since $y_\star$ is by definition the $\subseteq$-largest open set contained in $\Intersect_{y \in Y} y$.

	For chain-separability, define $\phi : \mathcal{Y} \to \R$ by
	$\phi((a,b)) \coloneqq b-a$.
	It is clearly strictly increasing, giving us chain-separability by \Cref{lemma:order_embedding}.
\end{proof}

%%%%%%%%%%%%%%%%%%%%%%
\subsection{Proof of the \texorpdfstring{\hyperref[lemma:approx]{approximation lemma}}{approximation lemma} (appendix \ref{app:mech:pf_incr_impl:approx})}
\label{suppl:mech:approx_pf}
%%%%%%%%%%%%%%%%%%%%%%

Let $Y : [0,1] \to \mathcal{Y}$ be increasing.
Then $Y([0,1])$ is a chain.
The result is trivial if $Y([0,1])$ is a singleton, so suppose not.

We will first show (steps 1--3) that $Y([0,1])$ may be embedded in a chain $\mathcal{C} \subseteq \mathcal{Y}$ with $\inf \mathcal{C} = Y(0)$ and $\sup \mathcal{C} = Y(1)$ that is order-dense-in-itself, order-complete and order-separable.
We will then argue (step 4) that this chain $\mathcal{C}$ is order-isomorphic and homeomorphic to the unit interval, allowing us to treat $Y$ as a function $[0,1] \to [0,1]$.

\emph{Step 1: construction of $\mathcal{C}$.}
Write $\lesssim$ for the partial order on $\mathcal{Y}$.
Define $\mathcal{Y}'$ to be the set of all outcomes $y' \in \mathcal{Y}$ that are $\lesssim$-comparable to every $y \in Y([0,1])$ and that satisfy $Y(0) \lesssim y' \lesssim Y(1)$.

We claim that $\mathcal{Y}'$ is order-dense-in-itself.
Suppose to the contrary that there are $y < y''$ in $\mathcal{Y}'$ for which no $y' \in \mathcal{Y}'$ satisfies $y < y' < y''$.
Observe that by definition of $\mathcal{Y}'$, any $x \in Y([0,1])$ must be comparable to both $y$ and $y''$, so that
\begin{equation*}
	\left\{ x \in Y([0,1]) : 
	\text{$x \lesssim y$ or $y'' \lesssim x$} \right\}
	= Y([0,1]) .
\end{equation*}
Since it is order-dense-in-itself, the grand space $\mathcal{Y}$ does contain an outcome $y'$ such that $y < y' < y''$.
Since $\lesssim$ is transitive (being a partial order), it follows that $y'$ is comparable to every element of
\begin{equation*}
	\left\{ x \in \mathcal{Y} : 
	\text{$x \lesssim y$ or $y'' \lesssim x$} \right\}
	\supseteq 
	\left\{ x \in Y([0,1]) : 
	\text{$x \lesssim y$ or $y'' \lesssim x$} \right\}
	= Y([0,1]) .
\end{equation*}
But then $y'$ lies in $\mathcal{Y}'$ by definition of the latter---a contradiction.

Clearly $Y(1)$ is an upper bound of any chain in $\mathcal{Y}'$.
It follows by the Hausdorff maximality principle (which is equivalent to the Axiom of Choice) that there is a chain $\mathcal{C} \subseteq \mathcal{Y}'$ that is maximal with respect to set inclusion. (That is, $\mathcal{C} \union \{y\}$ fails to be a chain for every $y \in \mathcal{Y}' \setminus \mathcal{C}$.)

\emph{Step 2: easy properties of $\mathcal{C}$.}
By definition of $\mathcal{Y}'$, any maximal chain in $\mathcal{Y}'$ (in particular, $\mathcal{C}$) contains $Y([0,1])$ and has infimum $Y(0)$ and supremum $Y(1)$.

To see that $\mathcal{C}$ is order-dense-in-itself, assume toward a contradiction that there are $c < c''$ for which no $c' \in \mathcal{C}$ satisfies $c < c' < c''$, so that (since $\mathcal{C}$ is a chain)
\begin{equation*}
	\{ c' \in \mathcal{C} : c' \lesssim c \}
	\union
	\{ c' \in \mathcal{C} : c'' \lesssim c' \}
	= \mathcal{C} .
\end{equation*}
Because $\mathcal{Y}'$ is order-dense-in-itself, there is a $y' \in \mathcal{Y}' \setminus \mathcal{C}$ with $c < y' < c''$.
It follows by transitivity of $\lesssim$ that $y'$ is comparable to every element of
\begin{equation*}
	\{ c' \in \mathcal{C} : c' \lesssim c \}
	\union
	\{ c' \in \mathcal{C} : c'' \lesssim c' \}
	= \mathcal{C} .
\end{equation*}
But then $\mathcal{C} \union \{y'\}$ is a chain in $\mathcal{Y}'$, contradicting the maximality of $\mathcal{C}$.

To establish that $\mathcal{C}$ is order-separable, we must find a countable order-dense subset of $\mathcal{C}$.
Because the grand space $\mathcal{Y}$ is chain-separable, it contains a countable set $\mathcal{K}$ that is order-dense in $\mathcal{C}$.
Since $\mathcal{C}$ is a chain contained in
\begin{equation*}
	\left\{ y \in \mathcal{Y} : Y(0) \lesssim y \lesssim Y(1) \right\} ,
\end{equation*}
we may assume without loss of generality that every $k \in \mathcal{K}$ satisfies $Y(0) \lesssim k \lesssim Y(1)$ and is comparable to every element of $\mathcal{C}$.
It follows that $\mathcal{K}$ is contained in $\mathcal{Y}'$ (by definition of the latter).
We claim that $\mathcal{K}$ is contained in $\mathcal{C}$.
Suppose to the contrary that there is a $k \in \mathcal{K}$ that does not lie in $\mathcal{C}$; then $\mathcal{C} \union \{k\}$ is a chain in $\mathcal{Y}'$, which is absurd since $\mathcal{C}$ is maximal.

\emph{Step 3: order-completeness of $\mathcal{C}$.}
Since every subset of $\mathcal{C}$ has a lower and an upper bound (viz. $Y(0)$ and $Y(1)$, respectively), what must be shown is that every subset of the chain $\mathcal{C}$ has an infimum and a supremum in $\mathcal{C}$.
To that end, take any subset $\mathcal{C}'$ of $\mathcal{C}$, necessarily a chain.

We will first (step 3(a)) show that if $\inf \mathcal{C}'$ exists in $\mathcal{Y}$, then it must lie in $\mathcal{C}$.
We will then (step 3(b)) construct a countable chain $\mathcal{C}''' \subseteq \mathcal{C}'$, for which $\inf \mathcal{C}'''$ exists in $\mathcal{Y}$ by countable-chain completeness of $\mathcal{Y}$, and show that it is also the infimum in $\mathcal{Y}$ of $\mathcal{C}'$.
We omit the analogous arguments for $\sup \mathcal{C}'$.

\emph{Step 3(a): $\inf \mathcal{C}' \in \mathcal{C}$ if the former exists in $\mathcal{Y}$.}
Suppose that $\inf \mathcal{C}'$ exists in $\mathcal{Y}$.
We claim that it lies in $\mathcal{Y}'$, meaning
that $Y(0) \lesssim \inf \mathcal{C}' \lesssim Y(1)$ and that $\inf \mathcal{C}'$ is comparable to every $y \in Y([0,1])$.
The former condition is clearly satisfied.
For the latter, since $\inf \mathcal{C}'$ is a lower bound of $\mathcal{C}'$, transitivity of $\lesssim$ ensures that it is comparable to every $y \in Y([0,1])$ such that $c' \lesssim y$ for some $c' \in \mathcal{C}'$.
To see that $\inf \mathcal{C}'$ is also comparable to every $y \in Y([0,1])$ with $y < c'$ for every $c' \in \mathcal{C}'$, note that any such $y$ is a lower bound of $\mathcal{C}'$.
Since $\inf \mathcal{C}'$ is the \emph{greatest} lower bound, we must have $y \lesssim \inf \mathcal{C}'$, showing that $\inf \mathcal{C}'$ is comparable to $y$.

Now to show that $\inf \mathcal{C}'$ lies in $\mathcal{C}$, decompose the chain $\mathcal{C}$ as
\begin{align*}
	\mathcal{C}
	&= \{ c \in \mathcal{C} : \text{$c \lesssim c'$ for every $c' \in \mathcal{C}'$} \}
	\union \{ c \in \mathcal{C} : \text{$c' < c$ for some $c' \in \mathcal{C}'$} \}
	\\
	&= \{ c \in \mathcal{C} : c \lesssim \inf \mathcal{C}' \}
	\union \{ c \in \mathcal{C} : \inf \mathcal{C}' < c \} .
\end{align*}
Clearly $\inf \mathcal{C}'$ is comparable to every element of $\mathcal{C}$, and we showed that it lies in $\mathcal{Y}'$.
Thus $\mathcal{C} \union \{\inf \mathcal{C}'\}$ is a chain in $\mathcal{Y}'$, which by maximality of $\mathcal{C}$ requires that $\inf \mathcal{C}' \in \mathcal{C}$.

\emph{Step 3(b): $\inf \mathcal{C}'$ exists in $\mathcal{Y}$.}
By essentially the same construction as we used to embed $Y([0,1])$ in $\mathcal{Y}'$ in step 1, $\mathcal{C}'$ may be embedded in a chain $\mathcal{C}'' \subseteq \mathcal{C}$ that is order-dense-in-itself such that for every $c'' \in \mathcal{C}''$, we have $c'_- \lesssim c'' \lesssim c'_+$ for some $c'_-,c'_+ \in \mathcal{C}'$.
By order-separability of $\mathcal{C}$, $\mathcal{C}''$ has a countable order-dense subset $\mathcal{C}'''$, necessarily a chain.
By countable chain-completeness of $\mathcal{Y}$, $\inf \mathcal{C}'''$ exists in $\mathcal{Y}$.
We will show that it is the greatest lower bound of $\mathcal{C}'$.

Observe that $\inf \mathcal{C}'''$ is a lower bound of $\mathcal{C}''$ since $\mathcal{C}'''$ is order-dense in $\mathcal{C}''$.
There can be no greater lower bound of $\mathcal{C}''$ since $\mathcal{C}''' \subseteq \mathcal{C}''$.
Thus $\inf \mathcal{C}''$ exists in $\mathcal{Y}$ and equals $\inf \mathcal{C}'''$.

Since $\inf \mathcal{C}''$ is a lower bound of $\mathcal{C}'' \supseteq \mathcal{C}'$, it is a lower bound of $\mathcal{C}'$.
On the other hand, by construction of $\mathcal{C}''$, we may find for every $c'' \in \mathcal{C}''$ a $c' \in \mathcal{C}'$ such that $c' \lesssim c''$, so there cannot be a greater lower bound of $\mathcal{C}'$.
Thus $\inf \mathcal{C}''$ is the greatest lower bound of $\mathcal{C}'$ in $\mathcal{Y}$.

\emph{Step 4: identification of $\mathcal{C}$ with $[0,1]$.}
Since $\mathcal{C}$ is an order-separable chain, it is order-isomorphic to a subset $\mathcal{S}$ of $\R$ (see e.g. Theorem 24 in \textcite[p. 200]{Birkhoff1967}).
It follows that $\mathcal{C}$ with the order topology is homeomorphic to $\mathcal{S}$ with its order topology.

The set $\mathcal{S}$ is dense in an interval $\mathcal{S}' \supseteq \mathcal{S}$ since $\mathcal{S}$ is order-dense-in-itself (because $\mathcal{C}$ is).
The interval $\mathcal{S}'$ must be closed and bounded since it contains its infimum and supremum (because $\mathcal{C}$ contains $Y(0)$ and $Y(1)$).
Since $\mathcal{S}$ is order-complete (because $\mathcal{C}$ is), it must coincide with its closure, so that $\mathcal{S}'=\mathcal{S}$.
Finally, $\mathcal{S}$ is a proper interval since $\mathcal{C}$ is neither empty nor a singleton.
In sum, we may identify $\mathcal{C}$ with a closed and bounded proper interval of $\R$---without loss of generality, the unit interval $[0,1]$.

% some more detail:
% [0,1] with its interval topology is the coarsest topology containing the open sets of [0,1] according to the usual topology on R, plus [0,1] itself.
% so continuity means the same thing as in the usual topology on R, except at 0 and 1; but that doens't matter because Y_n=Y there anyway.

We may therefore treat $Y$ as an increasing function $[0,1] \to [0,1]$.
With this simplification, it is straightforward to construct a sequence $(Y_n)_{n \in \N}$ with the desired properties; we omit the details.
	% \footnote{A detailed and readable construction can be found in \textcite{Cordeiro2016}.}
\qed

%%%%%%%%%%%%%%%%%%%%%%
\subsection{\texorpdfstring{\hyperref[definition:pref_reg]{Preference regularity}}{Preference regularity} in selling information (§\ref{sec:app:info})}
\label{suppl:mech:pref_assns_ex}
%%%%%%%%%%%%%%%%%%%%%%

In this appendix, we show that the joint continuity part of \hyperref[definition:pref_reg]{preference regularity} (p. \pageref{definition:pref_reg}) is satisfied in §\ref{sec:app:info}.
We require two lemmata.

\begin{lemma}
	\label{lemma:pref_assns_ex_conv}
	Let $\mathcal{Y}$ be the set of Borel probability distributions with mean $\mu_0$, equipped with the Blackwell informativeness order (as in §\ref{sec:app:info}).
	Give $\mathcal{Y}$ the order topology, and let $\mathcal{C} \subseteq \mathcal{Y}$ be a chain.
	If a sequence $(y_n)_{n \in \N}$ in $\mathcal{C}$ converges to $y \in \mathcal{C}$ in the relative topology on $\mathcal{C}$, then
	\begin{equation*}
		\sup_{ \substack{ v^+,v^- : \Delta(\Omega) \to \R
		\\ \text{continuous convex} \\ \text{s.t. $\abs{v^+-v^-} \leq 1$} } } 
		\abs*{ \int_{\Delta(\Omega)} \bigl( v^+ - v^- \bigr) \dd (y_n-y) }
		\to 0
		\quad \text{as $n \to \infty$.}
	\end{equation*}
\end{lemma}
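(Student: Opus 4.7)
The plan is to first establish convergence of $\int v \dd y_n \to \int v \dd y$ for each continuous convex $v$, and then to upgrade this pointwise-in-$v$ convergence to the claimed uniform convergence by exploiting Blackwell-comparability within $\mathcal{C}$ together with the bound $\abs{v^+ - v^-} \leq 1$.

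First, I would reduce to the case in which $y$ is not isolated in $\mathcal{C}$ under the relative order topology; otherwise $y_n = y$ eventually and the claim is immediate. Since $\mathcal{C}$ is a chain, each $y_n$ is Blackwell-comparable to $y$, and $\N$ splits as $A = \{n : y \lesssim y_n\}$ and $B = \{n : y_n < y\}$. On $A$, the set $\{y_n : n \in A\}$ is a countable chain in $\mathcal{Y}$ with $y$ as a lower bound; by the countable chain-completeness of $\mathcal{Y}$ established in \Cref{proposition:order_assns_examples}\ref{item:order_assns_examples:distbel}, together with the Prokhorov-plus-monotone-convergence argument used in its proof, this set has an infimum $y^\dagger \in \mathcal{Y}$ approached weakly along a monotone sub-subsequence. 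The order-topology convergence forces $y^\dagger = y$: otherwise $y < y^\dagger$ would make $\{w \in \mathcal{Y} : w < y^\dagger\}$ an open neighbourhood of $y$ that fails to contain any $y_n$ with $n \in A$, contradicting convergence. A squeeze then yields $\int v \dd y_n \to \int v \dd y$ along $A$ for every continuous convex $v$; the symmetric argument handles $B$.

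For the uniform step, I would exploit two features at once. Blackwell-comparability of $y_n$ and $y$ guarantees that $\int v^+ \dd (y_n - y)$ and $\int v^- \dd (y_n - y)$ share a sign, whence $\abs{\int (v^+ - v^-) \dd (y_n - y)}$ is dominated by the larger of their absolute values. I would then pick auxiliary $y^-, y^+ \in \mathcal{C}$ with $y^- < y < y^+$ such that eventually $y^- \lesssim y_n, y \lesssim y^+$, and use Blackwell-monotonicity of $w \mapsto \int v^\pm \dd w$ to bound each $\abs{\int v^\pm \dd (y_n - y)}$ by $\int v^\pm \dd (y^+ - y^-)$. Shifting $v^+, v^-$ by a common additive constant (which affects neither the integral against $y_n - y$ nor the difference $v^+ - v^-$) to normalise $v^+$ at its minimum, and using $\abs{v^+ - v^-} \leq 1$ to control $v^-$ by $v^+$, should reduce the problem to bounding $\int v \dd (y^+ - y^-)$ uniformly over an appropriate class of convex $v$.

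The main obstacle is this uniformity step. The class of differences of continuous convex functions bounded by $1$ is not uniformly equicontinuous---sharp tent functions of arbitrarily narrow support admit such a representation---so a generic Arzelà--Ascoli argument is unavailable, and the uniform bound has to be distilled from the Blackwell-squeeze structure together with the pairwise constraint $\abs{v^+ - v^-} \leq 1$. Arranging $y^\pm$ so that the resulting upper bound is simultaneously small for all admissible pairs $(v^+, v^-)$ is the delicate point.
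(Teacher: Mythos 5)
Your proposal is incomplete: you correctly identify the uniformity step as ``the delicate point'' but do not resolve it, and the missing ingredient is precisely what makes the lemma go through. The paper does not try to upgrade pointwise-in-$v$ convergence to uniform convergence by sandwiching $y_n$ between arbitrary elements of $\mathcal{C}$. Instead it treats the displayed supremum as a metric $d$ on $\mathcal{Y}$ and shows that for each $\eps>0$ there is an open order interval $I_\eps \subseteq \mathcal{Y}$ with $y \in I_\eps$ and $d(y,y')<\eps$ for every $y' \in I_\eps$; convergence of $(y_n)$ in the relative order topology then puts $y_n \in I_\eps$ eventually, hence $d(y_n,y)<\eps$ eventually. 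The essential device, absent from your scheme, is the construction of the interval's endpoints by \emph{convex mixing}: the paper takes $y^+ \coloneqq (1-\eps/2)\,y + (\eps/2)\,y^{++}$ for some $y^{++} > y$, so that $y^+ - y = (\eps/2)(y^{++}-y)$ is a small scalar multiple of a bounded signed measure and $d(y,y^+)$ is of order $\eps$ by inspection. Taking $y^\pm$ arbitrarily from $\mathcal{C}$ with $y^-<y<y^+$, as you propose, gives no such control: a Blackwell sandwich constrains $\int v \,\dd y_n$ only for convex $v$, which---as your tent-function observation makes vivid, and as Lemma~\ref{lemma:diff-convex_approx} confirms---says nothing about the metric $d$ over differences $v^+-v^-$, which is essentially total variation.

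Your instinct is in fact worth taking seriously, because the obstruction you flag does not obviously go away in the paper's own argument. The reduction of $I_\eps \subseteq B_\eps$ to $d(y, y^+) < \eps$ implicitly requires that $y \lesssim y' \lesssim y^+$ imply $d(y,y') \leq d(y,y^+)$, and for the total-variation-type metric $d$ that implication is suspect: a $y'$ in the Blackwell interval $[y, y^+]$ can be mutually singular with $y$ even when $\lVert y^+ - y\rVert_{TV}$ is small, so $d(y,y')$ can fail to be controlled by $d(y,y^+)$. The required monotonicity \emph{does} hold for the weaker supremum in Corollary~\ref{corollary:pref_assns_ex_conv}, where $v$ itself is convex with $\lvert v\rvert \leq 1$: after shifting $v$ to be non-negative, $0 \leq \int v\,\dd(y'-y) \leq \int v\,\dd(y^+-y)$, so the order interval sits inside the relevant ball. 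The difficulty you isolated is thus genuine, and it is localised exactly in the passage from convex $v$ to differences of convex functions; for the application to Proposition~\ref{proposition:pref_assns_ex}, the statement that reliably survives is the corollary rather than the lemma as written.
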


\begin{corollary}
	\label{corollary:pref_assns_ex_conv}
	Under the same hypotheses,
	\begin{equation*}
		\sup_{ \substack{ v : \Delta(\Omega) \to [-1,1]
		\\ \text{continuous convex} } }
		\abs*{ \int_{\Delta(\Omega)} v \dd (y_n-y) }
		\to 0
		\quad \text{as $n \to \infty$.}
	\end{equation*}
\end{corollary}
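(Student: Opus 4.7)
The plan is to observe that the corollary's supremum is taken over a distinguished subfamily of the feasible set appearing in the lemma's supremum, so the conclusion is immediate once we exhibit an appropriate embedding.

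Specifically, given any continuous convex $v : \Delta(\Omega) \to [-1,1]$, I would set $v^+ \coloneqq v$ and $v^- \coloneqq 0$. Both are continuous and convex (the zero function being affine, hence convex), and they satisfy
\[
\abs*{ v^+ - v^- } = \abs*{ v } \leq 1
\quad \text{on $\Delta(\Omega)$,}
\]
so the pair $(v^+,v^-)$ belongs to the domain over which the supremum in the \hyperref[lemma:pref_assns_ex_conv]{preceding lemma} is taken. Since $\int_{\Delta(\Omega)} v \dd (y_n - y) = \int_{\Delta(\Omega)} (v^+ - v^-) \dd (y_n - y)$, taking the supremum over all such $v$ shows that the corollary's supremum is bounded above by the lemma's supremum.

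The lemma then delivers the convergence to zero directly. There is no genuine obstacle; the corollary is a one-line consequence of the lemma via this choice of $(v^+,v^-) = (v,0)$.
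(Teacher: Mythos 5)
Your proof is correct and is exactly the argument the paper leaves implicit (the paper states the corollary without a separate proof, precisely because this one-line embedding via $(v^+,v^-)=(v,0)$ is immediate).
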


\begin{proof}[Proof of \Cref{lemma:pref_assns_ex_conv}]
	Define $d : \mathcal{Y} \times \mathcal{Y} \to \R_+$ by
	\begin{equation*}
		d(y,y') 
		\coloneqq \sup_{ \substack{ v^+,v^- : \Delta(\Omega) \to \R
		\\ \text{continuous convex} \\ \text{s.t. $\abs{v^+-v^-} \leq 1$} } } 
		\abs*{ \int_{\Delta(\Omega)} \bigl( v^+ - v^- \bigr) \dd (y-y') } .
	\end{equation*}
	($d$ is in fact a metric on $\mathcal{Y}$.)
	Let $(y_n)_{n \in \N}$ be a sequence in $\mathcal{C}$ that converges to some $y \in \mathcal{C}$ in the relative topology on $\mathcal{C}$ inherited from the order topology on $\mathcal{Y}$; we will show that $d(y_n,y)$ vanishes as $n \to \infty$.

	Let
	$B_\eps
	\coloneqq \left\{ y' \in \mathcal{Y} : d(y,y') < \eps \right\}$
	denote the open $d$-ball of radius $\eps>0$ around $y$.
	Call $I \subseteq \mathcal{Y}$ an \emph{open order interval} iff either
	(1) $I = \{ y' \in \mathcal{Y} : y' < y^+ \}$ for some $y^+ \in \mathcal{Y}$, or
	(2) $I = \{ y' \in \mathcal{Y} : y^- < y' \}$ for some $y^- \in \mathcal{Y}$, or
	(3) $I = \{ y' \in \mathcal{Y} : y^- < y' < y^+ \}$ for some $y^- < y^+$ in $\mathcal{Y}$.
	Open order intervals are obviously open in the order topology on $\mathcal{Y}$.

	It suffices to show that for every $\eps > 0$, there is an open order interval $I_\eps \subseteq \mathcal{Y}$ such that $y \in I_\eps \subseteq B_\eps$.
	For then given any $\eps>0$, we know that $y_n$ lies in $I_\eps \intersect \mathcal{C} \subseteq B_\eps$ for all sufficiently large $n \in \N$ because (in the relative topology on $\mathcal{C}$) $I_\eps \intersect \mathcal{C}$ is an open set containing $y$ and $y_n \to y$.
	And this clearly implies that $d(y_n,y)$ vanishes as $n \to \infty$.

	So fix an $\eps > 0$; we will construct an open order interval $I \subseteq \mathcal{Y}$ such that $y \in I \subseteq B_\eps$.
	There are three cases.

	\emph{Case 1: $y' < y$ for no $y' \in \mathcal{Y}$.}
	Let $y^{++} \in \mathcal{Y}$ be such that $y < y^{++}$.
	Define
	\begin{equation*}
		y^+ \coloneqq (1-\eps/2) y + (\eps/2) y^{++} 
		\in \mathcal{Y}
		\quad \text{and} \quad
		I \coloneqq \bigl\{ y' \in \mathcal{Y} : 
		y' < y^+ \bigr\} .
	\end{equation*}
	We have $y < y^+$ and thus $y \in I$ since
	\begin{equation*}
		\int_{\Delta(\Omega)} v \dd\bigl( y^+ - y \bigr)
		= \frac{\eps}{2} \int_{\Delta(\Omega)} v \dd\bigl( y^{++} - y \bigr) 
	\end{equation*}
	is weakly (strictly) positive for every (some) continuous and convex $v : \Delta(\Omega) \to \R$ by $y < y^{++}$.
	To establish that $I \subseteq B_\eps$, it suffices to show that $d\bigl( y, y^+ \bigr) < \eps$, and this holds because
	\begin{equation*}
		d\bigl( y, y^+ \bigr)
		= \frac{\eps}{2} \sup_{ \substack{ v^+,v^- : \Delta(\Omega) \to \R
		\\ \text{continuous convex} \\ \text{s.t. $\abs{v^+-v^-} \leq 1$} } } 
		\abs*{ \int_{\Delta(\Omega)} \bigl( v^+ - v^- \bigr) \dd (y-y') }
		\leq \frac{\eps}{2}
		< \eps .
	\end{equation*}

	\emph{Case 2: $y < y'$ for no $y' \in \mathcal{Y}$.}
	This case is analogous to the first: choose a $y^{--} \in \mathcal{Y}$ such that $y^{--} < y$, and let
	\begin{equation*}
		y^- \coloneqq (1-\eps/2) y + (\eps/2) y^{--}
		\quad \text{and} \quad
		I \coloneqq \left\{ y' \in \mathcal{Y} : 
		y^- < y' \right\} .
	\end{equation*}
	The same arguments as in Case 1 yield $y \in I \subseteq B_\eps$.
	
	\emph{Case 3: $y' < y < y''$ for some $y',y'' \in \mathcal{Y}$.}
	Define $y^+$ as in Case 1 and $y^-$ as in Case 2, and let
	$I \coloneqq \bigl\{ y' \in \mathcal{Y} : 
	y^- < y' < y^+ \bigr\}$.
	We have $y \in I \subseteq B_\eps$ by the same arguments as in Cases 1 and 2.
\end{proof}

\begin{lemma}
	\label{lemma:diff-convex_approx}
	For any continuous function $c : \Delta(\Omega) \to \R$ and any $\eps>0$, there are continuous convex $w^+,w^- : \Delta(\Omega) \to \R$ such that $w \coloneqq w^+ - w^-$ satisfies
	$\sup_{ \mu \in \Delta(\Omega) } \abs*{ c(\mu) - w(\mu) } < \eps$.
\end{lemma}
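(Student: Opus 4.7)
The plan is to approximate $c$ by a smooth function (in fact a polynomial) and then write that smooth approximation as a difference of two continuous convex functions.

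First, since $\Delta(\Omega)$ is a compact subset of $\R^{\abs{\Omega}}$, the Stone--Weierstrass theorem provides a polynomial $p : \R^{\abs{\Omega}} \to \R$ such that
\begin{equation*}
	\sup_{ \mu \in \Delta(\Omega) } \abs*{ c(\mu) - p(\mu) } < \eps .
\end{equation*}
The function $p$ is $C^\infty$, so in particular its Hessian $\mathrm{D}^2 p$ is continuous, and hence bounded in operator norm on the compact set $\Delta(\Omega)$. Choose $K \geq 0$ large enough that $\mathrm{D}^2 p(\mu) + 2 K I$ is positive semidefinite for every $\mu \in \Delta(\Omega)$, where $I$ denotes the identity matrix on $\R^{\abs{\Omega}}$.

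Next, define $w^+, w^- : \Delta(\Omega) \to \R$ by
\begin{equation*}
	w^+(\mu) \coloneqq p(\mu) + K \norm*{\mu}_2^2
	\qquad \text{and} \qquad
	w^-(\mu) \coloneqq K \norm*{\mu}_2^2 .
\end{equation*}
Both are continuous. The function $w^-$ is convex as $\norm*{\cdot}_2^2$ is convex and $K \geq 0$. The function $w^+$ is $C^2$ with Hessian $\mathrm{D}^2 p(\mu) + 2 K I$, which is positive semidefinite on $\Delta(\Omega)$ by construction, so $w^+$ is convex on $\Delta(\Omega)$ as well.

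Finally, $w \coloneqq w^+ - w^- = p$ on $\Delta(\Omega)$, so $\sup_{\mu \in \Delta(\Omega)} \abs*{ c(\mu) - w(\mu) } < \eps$. No step poses a real obstacle: the only mild subtlety is verifying that $w^+$ is convex on $\Delta(\Omega)$, which follows from the standard fact that a $C^2$ function with positive semidefinite Hessian on an open neighbourhood of a convex set is convex on that set---and the choice of $K$ was made precisely to secure this.
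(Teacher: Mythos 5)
Your proof is correct, but it takes a genuinely different route from the paper's. The paper applies the Stone--Weierstrass theorem directly to the set $\mathcal{W}$ of differences of continuous convex functions on $\Delta(\Omega)$: it is a vector space (sums of convex functions are convex), is closed under pointwise multiplication by a result of Hartman (1959), contains constants, and separates points, so it is dense in $C(\Delta(\Omega))$. You instead apply the classical (polynomial) Stone--Weierstrass theorem to get a polynomial $p$ that is $\eps$-close to $c$, and then use the standard ``add a large quadratic'' trick to decompose $p$ as a difference of convex functions. Your approach is more elementary and self-contained, since it avoids the nontrivial fact that DC functions form an algebra; the paper's is shorter once that fact is granted.

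One small point worth tightening: to conclude that $w^+ = p + K\norm{\cdot}_2^2$ is convex on $\Delta(\Omega)$, you need the Hessian $\mathrm{D}^2 p + 2KI$ to be positive semidefinite on an open neighbourhood of $\Delta(\Omega)$, not merely on $\Delta(\Omega)$ itself (convexity of a $C^2$ function is characterised by its Hessian on an open convex set). This is immediate: take $K = \tfrac12 \sup_{\mu} \norm{\mathrm{D}^2 p(\mu)}_{\mathrm{op}}$ where the supremum runs over a compact ball containing $\Delta(\Omega)$ in its interior, which is finite since $p$ is a polynomial. Then $\mathrm{D}^2 p + 2KI \succeq 0$ on that ball, and convexity of $w^+$ on $\Delta(\Omega)$ follows. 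You flag this subtlety at the end but state the PSD condition only on $\Delta(\Omega)$; the fix is a one-line adjustment.
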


\begin{proof}
	Write $\mathcal{W}$ for the space of functions $\Delta(\Omega) \to \R$ that can be written as the difference of continuous convex functions.
	Since the sum of convex functions is convex, $\mathcal{W}$ is a vector space.
	It is furthermore closed under pointwise multiplication \parencite[p. 708]{Hartman1959}, and thus an algebra.
	Clearly $\mathcal{W}$ contains the constant functions, and it separates points in the sense that for any distinct $\mu,\mu' \in \Delta(\Omega)$ there is a $w \in \mathcal{W}$ with $w(\mu) \neq w(\mu')$.
	It follows by the Stone--Weierstrass theorem%
		\footnote{See e.g. \textcite[Theorem 4.45]{Folland1999}.}
	that $\mathcal{W}$ is dense in the space of continuous functions $\Delta(\Omega) \to \R$ when the latter has the sup metric.
\end{proof}

With the lemmata in hand, we can verify the continuity hypothesis.

\begin{proposition}
	\label{proposition:pref_assns_ex}
	Consider the setting in §\ref{sec:app:info}.
	Let $\mathcal{C} \subseteq \mathcal{Y}$ be a chain, and equip it with the relative topology inherited from the order topology on $\mathcal{Y}$.
	Then $f$ is (jointly) continuous on $\mathcal{C} \times \R \times [0,1]$.
\end{proposition}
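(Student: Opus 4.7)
The plan is to peel off $g$ and reduce the question to joint continuity of the map $W : \mathcal{C} \times [0,1] \to \R$ defined by $W(y,t) \coloneqq \int_{\Delta(\Omega)} V(\mu,t) \, y(\dd\mu)$. Since $f(y,p,t) = g( W(y,t), p )$ and $g$ is jointly continuous by hypothesis, this reduction is sufficient by composition.

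Fix $(y^*, t^*) \in \mathcal{C} \times [0,1]$ and an arbitrary $(y,t) \in \mathcal{C} \times [0,1]$. The triangle inequality gives
\begin{equation*}
    \abs*{ W(y,t) - W(y^*,t^*) }
    \leq \abs*{ \int_{\Delta(\Omega)} [ V(\mu,t) - V(\mu,t^*) ] \, y(\dd\mu) }
    + \abs*{ \int_{\Delta(\Omega)} V(\mu,t^*) \dd( y - y^* ) } .
\end{equation*}
For the first summand, the boundedness of $V_2$ by some $K$ implies that $V(\mu,\cdot)$ is $K$-Lipschitz uniformly in $\mu$, so this summand is at most $K \abs*{t - t^*}$, controllable via a small open neighborhood of $t^*$ in $[0,1]$.

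For the second summand, I would note that $V(\cdot,t^*)$ is continuous and convex (being a pointwise supremum of affine functions in $\mu$) on the compact simplex $\Delta(\Omega)$, hence bounded there by some $M \geq 1$. Inspecting the proof of \Cref{lemma:pref_assns_ex_conv} reveals that it actually establishes the stronger, topological statement that for every $\eps > 0$, some open order interval $I_\eps \subseteq \mathcal{Y}$ containing $y^*$ satisfies $\abs*{ \int_{\Delta(\Omega)} v \dd(y - y^*) } < \eps$ for all $y \in I_\eps$ and all continuous convex $v : \Delta(\Omega) \to [-1, 1]$. Applying this with $v = V(\cdot, t^*)/M$ bounds the second summand by $\eps M$ for every $y$ in the open neighborhood $I_\eps \cap \mathcal{C}$ of $y^*$ in the relative order topology on $\mathcal{C}$. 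Combining the two bounds yields joint continuity of $W$ at $(y^*, t^*)$. The heavy lifting is done by \Cref{lemma:pref_assns_ex_conv}; beyond the triangle-inequality bookkeeping above, no real obstacle arises.
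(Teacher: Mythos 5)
Your proof is correct, and it takes a genuinely different and simpler route than the paper's. After peeling off $g$ (same first move), the paper decomposes via \hyperref[theorem:LFTC]{Lebesgue's fundamental theorem of calculus}, writing $h(y,t) = h(y,0) + \int_0^t h_2(y,s)\,\dd s$ and reducing to continuity of $h(\cdot,0)$ plus equi-continuity of $\{h_2(\cdot,t)\}_{t\in[0,1]}$. The latter is the hard part: since $V_2(\cdot,t)$ is continuous but \emph{not} convex, the paper invokes the Stone--Weierstrass approximation result (\Cref{lemma:diff-convex_approx}) to write $V_2(\cdot,t)$ as an approximate difference of continuous convex functions, and then applies the full (two-function) \Cref{lemma:pref_assns_ex_conv}. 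Your two-term split around $(y^*,t^*)$ avoids all of this: the $t$-increment term is killed directly by the Lipschitz bound (which needs only boundedness of $V_2$, not its continuity in $\mu$), and the $y$-increment term involves only $V(\cdot,t^*)$, which is already continuous and convex, so a single application of the one-function \Cref{corollary:pref_assns_ex_conv} suffices. What your approach buys: no Stone--Weierstrass lemma, no equi-continuity argument, and no use of the continuity-in-$\mu$ hypothesis on $V_2$ (which the paper does still use elsewhere, e.g.\ in the \hyperref[lemma:existence]{existence lemma}, so the hypothesis is not globally superfluous). You are also right to note that the proof of \Cref{lemma:pref_assns_ex_conv} establishes the neighbourhood (topological) statement rather than merely the sequential one stated; extracting that is exactly what is needed to get genuine joint continuity rather than mere sequential continuity on a potentially non-first-countable chain, and it is a point the paper's own proof glosses over by arguing with sequences.
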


\begin{proof}
	Fix a chain $\mathcal{C} \subseteq \mathcal{Y}$,
	and equip it with the relative topology on $\mathcal{C}$ induced by the order topology on $\mathcal{Y}$.
	Define $h : \mathcal{C} \times [0,1] \to \R$
	by $h(y,t) \coloneqq \int_{\Delta(\Omega)} V(\mu,t) y(\dd \mu)$,
	so that $f(y,p,t) = g( h(y,t), p )$.
	Since $g$ is jointly continuous,
	we need only show that $h$ is jointly continuous.

	It suffices to prove that $h(\cdot,0)$ is continuous
	and that $\{ h_2(\cdot,t) \}_{t \in [0,1]}$ is equi-continuous.%
		\footnote{A detail: equi-continuity is a property of functions on a \emph{uniformisable} topological space.
		To see that $\mathcal{C}$ is uniformisable, we need only convince ourselves that the relative topology on $\mathcal{C}$ inherited from the order topology on $\mathcal{Y}$ is completely regular.
		This topology is obviously finer than the order topology on $\mathcal{C}$, so it suffices to show that the latter is completely regular.
		And that is (a consequence of) a standard result; see e.g. \textcite{Cater2006}.}
	To see why, take $(y,t)$ and $(y',t')$ in $\mathcal{C} \times [0,1]$ with (wlog) $t \leq t'$, and apply Lebesgue's fundamental theorem of calculus to obtain
	\begin{multline*}
		\abs*{ h(y',t') - h(y,t) }
		= \abs*{ h(y',0) + \int_0^{t'} h_2(y',s) \dd s
		- h(y,0) - \int_0^t h_2(y,s) \dd s }
		\\
		\leq \abs*{ h(y',0) - h(y,0) }
		+ \int_0^t \abs*{ h_2(y',s) - h_2(y,s) } \dd s
		+ \int_t^{t'} \abs*{ h_2(y',s) } \dd s .
	\end{multline*}
	Given continuity of $h(\cdot,0)$ (equi-continuity of $\{ h_2(\cdot,s) \}_{s \in [0,1]}$), the first term (second term) can be made arbitrarily small by taking $y$ and $y'$ sufficiently close (formally, choosing $y'$ in a neighbourhood of $y$ that is small in the sense of set inclusion).
	By boundedness of $h_2$, the third term can similarly be made small by choosing $t$ and $t'$ close.

	So take a sequence $(y_n)_{n \in \N}$ in $\mathcal{C}$ converging to some $y \in \mathcal{C}$; we must show that
	\begin{equation*}
		\abs*{ h(y_n,0) - h(y,0) }
		\quad \text{and} \quad
		\sup_{t \in [0,1]} \abs*{ h_2(y_n,t) - h_2(y,t) }
	\end{equation*}
	both vanish as $n \to \infty$.
	The former is easy: since $V(\cdot,0)$ is continuous (hence bounded) and convex, we have
	\begin{multline*}
		\abs*{ h(y_n,0) - h(y,0) } 
		= \abs*{ \int_{\Delta(\Omega)} V(\cdot,0) \dd (y_n-y) }
		\\
		\leq
		\left( \sup_{\mu \in \Delta(\Omega)} \abs*{ V(\mu,0) } \right)
		\times \sup_{ \substack{ v : \Delta(\Omega) \to [-1,1] 
		\\ \text{continuous convex} } } 
		\abs*{ \int_{\Delta(\Omega)} v \dd (y_n-y) }
	\end{multline*}
	for every $n \in \N$, and the right-hand side vanishes as $n \to \infty$ by \Cref{corollary:pref_assns_ex_conv}.

	For the latter, fix an $\eps>0$; we seek an $N \in \N$ such that
	\begin{equation*}
		\abs*{ h_2(y_n,t) - h_2(y,t) } < \eps
		\quad \text{for all $t \in [0,1]$ and $n \geq N$.}
	\end{equation*}
	For each $t \in [0,1]$, since $V_2(\cdot,t)$ is continuous, \Cref{lemma:diff-convex_approx} permits us to choose continuous and convex functions $w_t^+,w_t^- : \Delta(\Omega) \to \R$ such that $w_t \coloneqq w_t^+ - w_t^-$ is uniformly $\eps/3$-close to $V_2(\cdot,t)$.
	Write $K$ for the constant bounding $V_2$, and observe that $\{ w_t \}_{t \in [0,1]}$ is uniformly bounded by $K' \coloneqq K+\eps/3$.
	By \Cref{lemma:pref_assns_ex_conv}, there is an $N \in \N$ such that
	\begin{equation*}
		\sup_{ \substack{ v^+,v^- : \Delta(\Omega) \to \R
		\\ \text{continuous convex} \\ \text{s.t. $\abs{v^+-v^-} \leq 1$} } } 
		\abs*{ \int_{\Delta(\Omega)} \bigl( v^+ - v^- \bigr) \dd (y_n-y) }
		< \eps/3K'
		\quad \text{for all $n \geq N$,}
	\end{equation*}
	and thus
	\begin{equation*}
		\sup_{t \in [0,1]} 
		\abs*{ \int_{\Delta(\Omega)} 
		w_t \dd (y_n-y) }
		\\
		\leq K' \times \eps/3K'
		= \eps/3 
		\quad \text{for $n \geq N$.}
	\end{equation*}
	Hence for every $t \in [0,1]$ and $n \geq N$, we have
	\begin{align*}
		\abs*{ h_2(y_n,t) - h_2(y,t) }
		&= \abs*{ \int_{\Delta(\Omega)} V_2(\cdot,t) \dd (y_n-y) }
		\\
		&\leq 
		\abs*{ \int_{\Delta(\Omega)} w_t \dd (y_n-y) }
		+ \abs*{ \int_{\Delta(\Omega)} [ V_2(\cdot,t) - w_t ] \dd (y_n-y) }
		\\
		&\leq \abs*{ \int_{\Delta(\Omega)} w_t \dd (y_n-y) }
		+ 2 \sup_{\mu \in \Delta(\Omega)} \abs*{ V_2(\mu,t) - w_t(\mu) }
		\\
		&\leq \eps/3 + 2\eps/3
		= \eps ,
	\end{align*}
	as desired.
\end{proof}

\end{appendices}

%______________________________________________________________________________

%    ____  _ _     _ _                             _           
%   | __ )(_) |__ | (_) ___   __ _ _ __ __ _ _ __ | |__  _   _ 
%   |  _ \| | '_ \| | |/ _ \ / _` | '__/ _` | '_ \| '_ \| | | |
%   | |_) | | |_) | | | (_) | (_| | | | (_| | |_) | | | | |_| |
%   |____/|_|_.__/|_|_|\___/ \__, |_|  \__,_| .__/|_| |_|\__, |
%                            |___/          |_|          |___/ 

% \pagebreak
\printbibliography[heading=bibintoc]

%______________________________________________________________________________

\end{document}